\setlist{noitemsep}
\setlist[enumerate,1]{label=\arabic*., ref=(\arabic*)}
\definecolor{dark-red}{rgb}{0.4,0.15,0.15}
\definecolor{dark-blue}{rgb}{0.15,0.15,0.4}
\definecolor{medium-blue}{rgb}{0,0,0.5}
\definecolor{gray}{rgb}{0.5,0.5,0.5}
\def\cqedsymbol{\ifmmode$\lrcorner$\else{\unskip\nobreak\hfil
\penalty50\hskip1em\null\nobreak\hfil$\lrcorner$
\parfillskip=0pt\finalhyphendemerits=0\endgraf}\fi}
\let\plainqed\qedsymbol
\newcommand{\claimqed}{$\lrcorner$}
\newenvironment{claimproof}{\begin{proof}\renewcommand{\qedsymbol}{\claimqed}}{\end{proof}\renewcommand{\qedsymbol}{\plainqed}}
\newcommand{\V}[0]{\ensuremath{\mathcal{V}}\xspace}
\newcommand{\W}[0]{\ensuremath{\mathcal{W}}\xspace}
\newcommand{\CC}[0]{\ensuremath{\mathcal{C}}\xspace}
\newcommand{\X}[0]{\ensuremath{\mathcal{X}}\xspace}
\newcommand{\Y}[0]{\ensuremath{\mathcal{Y}}\xspace}
\newcommand{\Z}[0]{\ensuremath{\mathcal{Z}}\xspace}
\newcommand{\T}[0]{\ensuremath{\mathcal{T}}\xspace}
\newcommand{\torso}[0]{\mathop{\mathrm{\textsc{torso}}}}
\newcommand{\Oh}[0]{\ensuremath{\mathcal{O}}\xspace}
\newlength{\baseImageHeight}
\newcommand{\hyphen}{\nobreakdash-\hspace{0pt}}
\newcommand{\kPath}[0]{\textsc{$k$\hyphen Path}\xspace}
\newcommand{\kCycle}[0]{\textsc{$k$\hyphen Cycle}\xspace}
\newcommand{\kLinkage}[0]{\textsc{Auxiliary Linkage}}
\newtheorem{theorem}{Theorem}[section]
\newtheorem{corollary}[theorem]{Corollary}
\newtheorem{lemma}[theorem]{Lemma}
\newtheorem{numberedclaim}[theorem]{Claim}
\theoremstyle{definition}
\newtheorem{definition}[theorem]{Definition}
\date{}
\begin{document}
\title{Turing Kernelization for Finding Long Paths \\in Graph Classes Excluding a Topological Minor\thanks{
This work was supported by the Netherlands Organization for Scientific Research (NWO) Veni grant 639.021.437 ``Frontiers in Parameterized Preprocessing'' and Gravitation grant 024.002.003 ``Networks''.
Marcin Pilipczuk is supported by the ``Recent trends in kernelization: theory and experimental evaluation'' project, carried out within the Homing programme of the Foundation for Polish Science co-financed by the European Union under the European Regional Development Fund.
Marcin Wrochna is supported by the National Science Centre of Poland grant number 2013/11/D/ST6/03073 and by the Foundation for Polish Science (FNP) via the START stipend programme.
}}
\author{Bart M.\ P.\ Jansen\thanks{Eindhoven University of Technology, The Netherlands}, Marcin Pilipczuk\thanks{University of Warsaw, Poland}, and Marcin Wrochna\footnotemark[3]}

\maketitle

\begin{abstract}
The notion of Turing kernelization investigates whether a polynomial-time algorithm can solve an NP-hard problem, when it is aided by an oracle that can be queried for the answers to bounded-size subproblems. One of the main open problems in this direction is whether \kPath admits a polynomial Turing kernel: can a polynomial-time algorithm determine whether an undirected graph has a simple path of length~$k$, using an oracle that answers queries of size~$k^{\Oh(1)}$?

We show this can be done when the input graph avoids a fixed graph~$H$ as a topological minor, thereby significantly generalizing an earlier result for bounded-degree and~$K_{3,t}$-minor-free graphs. Moreover, we show that \kPath even admits a polynomial Turing kernel when the input graph is not $H$-topological-minor-free itself, but contains a known vertex modulator of size bounded polynomially in the parameter, whose deletion makes it so. To obtain our results, we build on the graph minors decomposition to show that any $H$-topological-minor-free graph that does not contain a $k$-path, has a separation that can safely be reduced after communication with the oracle.
\end{abstract}

\section{Introduction}
Suppose that Alice is a polynomial-time agent faced with an input to an NP-hard problem that she wishes to solve exactly. To facilitate her in this process, she can ask questions to an all-knowing oracle. These will be answered truthfully and instantly, but the oracle is memory-less and will not take previous questions into account when answering the next one. How large do these questions have to be, to allow Alice to find the answer to her problem? Clearly, the answer can be established by sending the entire input to the oracle, who determines the answer and sends it to Alice. Could there be a more clever strategy? Alice can attempt to isolate a small but meaningful question about the behavior of her input, such that after learning its answer, she can reduce to a smaller input without changing the outcome. Iterating this process solves her problem: when it has become sufficiently small, it can be posed to the oracle in its entirety.

Such problem-solving strategies can be rigorously analyzed using the notion of \emph{Turing kernelization} that originated in parameterized algorithmics. The parameter makes it possible to express how the size of the questions that Alice asks, depends on properties of the input that she is given. (See Section~\ref{sec:turing} for a formal definition.)

Understanding the power of Turing kernelization is one of the main open research horizons in parameterized algorithmics. There is a handful of problems for which a nontrivial Turing kernelization is known~\cite{AmbalathBHKMPR10,BarberoPP17,Binkele-RaibleFFLSV12,BodlaenderJK14,GarneroW16,HuffnerKS15,JansenM15,KolayP15,SchaferKMN12,ThomasseTV14}. On the other hand, there is a hierarchy of parameterized complexity classes which are conjectured not to admit polynomial Turing kernels~\cite{HermelinKSWW15}. Arguably, the main open problem (cf.~\cite{BodlaenderDFGHLMRRR08,Binkele-RaibleFFLSV12,HermelinKSWW15}) in this direction is to determine whether the \kPath problem (determine whether an undirected graph has a simple path of length~$k$) has a polynomial Turing kernel. In earlier work~\cite{Jansen17}, the first author showed that \kPath indeed admits polynomial Turing kernels on several graph classes. In this work, we develop Turing kernels for \kPath in a much more general setting.

\subparagraph{Our results}

Our algorithmic contributions are twofold. First of all, we extend the Turing kernelization for \kPath to much broader families of sparse graphs. Whereas the earlier work could only deal with $K_{3,t}$-minor-free graphs, claw-free graphs, and bounded-degree graphs, we show that a Turing kernelization exists on $H$-minor-free graphs for all fixed graphs~$H$. We even lift the kernelization to $H$-topological-minor-free graphs, thereby capturing a common generalization of the bounded-degree and~$K_{3,t}$-minor-free cases.

\begin{theorem}\label{thm:topo-minor-main}
For every fixed graph $H$, the \kPath problem, restricted to graphs excluding $H$ as a topological minor,
admits a polynomial Turing kernel.
Furthermore, the kernel runs in time $k^{\Oh_H(1)} n^2 m$ and invokes $k^{\Oh_H(1)} \cdot n$ calls
to the oracle.
\end{theorem}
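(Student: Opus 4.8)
The plan is to run an \textbf{iterative reduction}: as long as the current graph $G$ (kept $H$-topological-minor-free throughout) has more than $k^{\Oh_H(1)}$ vertices, we locate a separation $(A,B)$ of $G$ --- i.e.\ $A\cup B=V(G)$ with no edge between $A\setminus B$ and $B\setminus A$ --- of \emph{bounded order} $|A\cap B|\le c_H$ whose side $B\setminus A$ has size in a window $[k^{c_1}+1,\,k^{c_2}]$, where $c_1<c_2$ depend only on $H$, and we replace $G[B]$ by an equivalent gadget on at most $k^{c_1}$ vertices. Each such step strictly decreases $|V(G)|$ while preserving the answer; once $|V(G)|\le k^{\Oh_H(1)}$ a single oracle call on $G$ finishes. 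A little preprocessing (delete connected components with fewer than $k$ vertices, and observe a $k$-path lives inside one component) lets us assume $G$ is connected. Each of the $\Oh(n)$ rounds makes $k^{\Oh_H(1)}$ oracle calls and runs in $k^{\Oh_H(1)}\,n\,m$ time --- dominated by (re)computing the decomposition below and by solving the auxiliary bounded-treewidth instances needed to build the gadget --- which yields the claimed $k^{\Oh_H(1)}n^2m$ running time and $k^{\Oh_H(1)}\cdot n$ oracle calls.

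\textbf{Reducing a bounded-order separation through the oracle.} Fix $(A,B)$ with $S:=A\cap B$, $|S|\le c_H$, $|B\setminus A|\le k^{c_2}$. Spending one extra oracle query we may assume $G[B]$ has no $k$-path, so every $k$-path $P$ of $G$ touches $A\setminus S$; its trace $P\cap B$ is then a disjoint union of $\le|S|$ subpaths, each either joining two vertices of $S$ or joining a vertex of $S$ to an endpoint of $P$. The resulting \emph{interface pattern} --- a partial matching of $S$ (with a few ``loose ends'' allowed) together with a requested length $\le k$ for each part --- ranges over only $k^{\Oh_H(1)}$ possibilities. For each pattern, deciding whether it is realizable by pairwise vertex-disjoint paths of the prescribed lengths inside $G[B]$ is an instance of the \kLinkage{} problem on $G[B]$ with $\le|S|$ terminals and total requested length $\le k$, hence of size $k^{\Oh_H(1)}$; a routine gadget (stitch the requested paths into one path through fresh connectors, pad lengths) turns it into an equivalent \kPath instance of size $k^{\Oh_H(1)}$ that we pose to the oracle. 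Given the set of realizable patterns, we delete $B\setminus S$ and, for each realizable pattern, attach to $S$ a new internally disjoint path (respectively a pendant path) of the requested length between the requested vertices of $S$; a short exchange argument shows $G$ has a $k$-path iff the reduced graph does. Choosing $c_1$ so that this gadget has fewer than $k^{c_1}$ vertices guarantees progress, and realizing the gadget as a subdivision of a small multigraph keeps the reduced graph $H$-topological-minor-free.

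\textbf{The structural core: existence of the separation.} It remains to show that a connected $H$-topological-minor-free graph with no $k$-path and more than $k^{c_2}$ vertices admits a separation as above; this is the crux and the place where we ``build on the graph minors decomposition''. A depth-first search already gives $\tw(G)<k$, but we need far more: we invoke the Grohe--Marx structure theorem, providing a tree decomposition of adhesion $\le a_H$ in which every torso is either $a_H$-almost-embeddable in a surface of Euler genus $\le a_H$ (up to $\le a_H$ apex vertices and $\le a_H$ vortices of width $\le a_H$) or has $\le a_H$ vertices of degree $>a_H$. Two cases are easy and already produce a bounded-order separation: (i) a leaf bag holding more than $k^{c_2}/2$ vertices is split off by its parent adhesion (size $\le a_H$); (ii) a bag all of whose dangling subtrees are small, but which has many of them, is rounded up by adding dangling subtrees one at a time until their union with the bag lands in the window, again severed by an adhesion of size $\le a_H$. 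The residual case --- a single \emph{huge} torso, of size $>k^{c_2}$ --- must be broken from within. Here one exploits that, after a routine tidying of the decomposition, the torso is a minor of $G$, hence itself $H$-topological-minor-free and $k$-path-free, and so of treewidth $<k$; the target is then to show that such a huge, sparse, path-poor, almost-embeddable graph cannot be one indivisible lump --- a large portion of its surface part must lie inside a disk cut out by a bounded-order noose once the $\le a_H$ apices are removed and the bounded-width vortices are absorbed, carving off $\Theta(k^{c_2})$ vertices through an $\Oh_H(1)$-size separator, while in the bounded-degree-plus-apices alternative the bounded-degree remainder, being path-poor, contains a shallow branch of the right size hanging off $\Oh_H(1)$ vertices. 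I expect exactly this to be the main obstacle: extracting a separator whose size depends only on $H$ --- as opposed to the $\Theta(k)$-size separators that sparsity or bounded treewidth alone would yield, which are useless for the gadget of the previous step --- is what forces the topological-minor structure theorem together with careful surgery on nooses, vortices, and apex vertices.
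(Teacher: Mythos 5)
Your outer framework—iteratively finding a bounded-order separation, querying the oracle for realizable linkage patterns, and shrinking the graph—matches the paper's philosophy, and your estimate of the number of interface patterns ($k^{\Oh_H(1)}$, given an $\Oh_H(1)$ separator) is correct. But there are two genuine gaps.

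First, the gadget replacement. You propose to delete $B\setminus S$ and attach, for each of the up to $k^{\Oh_H(1)}$ realizable patterns, a fresh internally disjoint path through $S$, and you assert the result remains $H$-topological-minor-free because it is a ``subdivision of a small multigraph.'' That multigraph is not small: with $|S|\le c_H$ and $k^{\Oh_H(1)}$ patterns you can end up with $k^{\Oh_H(1)}$ parallel (subdivided) edges between two vertices of $S$, which contains $K_{2,t}$ as a topological minor for $t=k^{\Oh_H(1)}$. If $H$ is, say, $K_{2,t_0}$ for fixed $t_0$, the gadget destroys the graph-class invariant after the very first round and all later rounds become illegitimate. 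The paper avoids this trap by never inserting a gadget: it queries the oracle for one witness per realizable pattern, \emph{marks} those vertices, and then deletes an \emph{unmarked} vertex of the bounded side. The result is always an induced subgraph of the input, so membership in the hereditary class is preserved for free, and the counting $k\cdot\Ibnd(k,\ell,h)<|A|$ guarantees progress.

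Second, and more fundamentally, the structural core does not go through. You want to show that a huge $H$-topological-minor-free torso with no $k$-path admits a separation of order $\Oh_H(1)$ carving off $\Theta(k^{c_2})$ vertices, by arguing that ``a large portion of its surface part must lie inside a disk cut out by a bounded-order noose.'' That is false as stated: a large triconnected surface-embedded graph (think of a grid-like planar graph) has no bounded-order noose carving off a polynomial fraction of its vertices---planar separators inherently have size $\Theta(\sqrt n)$, not $\Oh(1)$. The paper's route is the opposite: it proves the torso \emph{cannot be large} in the first place. After tidying the near-embedding so that $G_0^*$ is triconnected and virtual edges can be rerouted through small vortices, a large embedded part would force a long cycle (Theorem~\ref{thm:embeddedCircumference}, Chen et al., $K_{3,t}$-minor-free/surface circumference is $n^{\varepsilon}$) and hence a $k$-path, contradiction; the bounded-degree alternative in the Grohe--Marx decomposition is handled analogously via Shan's circumference bound for triconnected bounded-degree graphs (Theorem~\ref{thm:degreeCircumference}) after passing to Tutte components. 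This gives $(\tdw,\tdadh)$-decomposability with $\tdw(k)=k^{\Oh_H(1)}$ and $\tdadh=\Oh_H(1)$; the bounded-order separations you need then fall directly out of the tree-decomposition adhesions (via the algorithmic unbreakability machinery of Theorem~\ref{thm:unbreakable} and Lemmas~\ref{lem:unbreakableLocal}--\ref{lem:decomp2sep}), never by slicing through the interior of a torso. Without this circumference-based ``torso is small'' argument, the crux of your plan has no working mechanism.
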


Our second contribution is the following theorem. By a novel algorithmic approach, we obtain a Turing kernelization even when the input graph does not belong to the desired restricted graph class itself, but contains a small known vertex modulator whose deletion places the graph in such a graph class.

\begin{theorem}\label{thm:minor-mod-main}
For every fixed graph $H$, the \kPath problem, on instances consisting of a graph $G$, integer $k$, and a modulator $M \subseteq V(G)$ such that $G-M$ is $H$-topological-minor-free,
admits a polynomial Turing kernel, when parameterized by $k$ and $|M|$.
\end{theorem}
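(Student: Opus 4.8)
The plan is to follow the same high-level loop as in Theorem~\ref{thm:topo-minor-main} --- repeatedly locate a large ``well-structured'' region of the graph, query the oracle about bounded-size sub-instances built from it, shrink the region using the answers without changing whether a $k$-path exists, and finish with a single oracle call once the whole graph is small --- but carried out relative to $G-M$ instead of $G$, and with every estimate arranged so that the modulator costs only a \emph{polynomial} factor in $|M|$. First some cleanup: we may assume $G$ is connected and $M\neq\emptyset$, and by running the kernelization of Theorem~\ref{thm:topo-minor-main} on each connected component of $G-M$ we may assume that $G-M$ has no $k$-path (otherwise one has already been found). Thus $G-M$ is an $H$-topological-minor-free graph with no long path (which in particular bounds its treewidth by $k^{\Oh(1)}$, since a large grid minor would contain a long path), and it is precisely such graphs that the proof of Theorem~\ref{thm:topo-minor-main} decomposes: a decomposition of adhesion $\Oh_H(1)$ whose torsos are almost-embeddable with $\Oh_H(1)$ apices, the absence of long paths additionally forcing the embedded parts and the vortices to be ``simple''. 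We now run that machinery on $G-M$ rather than on $G$.

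The engine is the following reduction rule. If $|V(G)|$ exceeds a polynomial in $k+|M|$ to be fixed, then the case analysis on the decomposition of $G-M$ that drives Theorem~\ref{thm:topo-minor-main}, with all size thresholds raised to polynomials in $k+|M|$, produces a large well-structured region $R$ of $G-M$ --- a large flat wall, or a large family of parallel / near-isomorphic small pieces --- attached to the rest of $G-M$ through a set $\partial R$ of only $\Oh_H(1)$ vertices. The crucial observation is that a $k$-path of $G$ enters $R$ only boundedly often: each maximal sub-path of the $k$-path that lies inside $R$ is delimited by a vertex of $\partial R$ (only $\Oh_H(1)$ of these) or by a vertex of $M$, and since a modulator vertex lies on the path at most once it delimits at most two such sub-paths --- at most $\Oh_H(1)+2|M|$ in total. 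Consequently the ``behaviour'' of $R$ that matters --- which systems of internally disjoint paths of $R$ with endpoints in $\partial R\cup M$ and lengths up to $k$ can be realized --- is captured by only polynomially many demands: the interaction with the $\Oh_H(1)$ geometrically constrained vertices of $\partial R$ contributes the usual $k^{\Oh_H(1)}$ factor, while each modulator vertex contributes only a bounded-arity, use-at-most-once demand, so that a representative-set (sunflower-type) argument --- whose simplest case is a family of many vertices with the same two neighbours in $G-M$, each additionally adjacent to an arbitrary subset of $M$ --- bounds both the number of relevant demands and the size of a gadget realizing exactly those demands that $R$ realizes by a polynomial in $k+|M|$. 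Realizability of a demand is tested by an oracle call on a \kPath instance assembled from $R\cup\partial R\cup M$ together with a small forcing gadget; since $|M|$ is itself a parameter, $M$ is carried along whole, so these instances have size polynomial in $k+|M|$. Replacing $R$ by its gadget strictly shrinks $G-M$, hence $G$; iterating brings $|V(G)|$ down to a polynomial in $k+|M|$, after which a final oracle call decides the instance.

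It is worth saying why a more direct route fails. One can check that $G$ excludes $(2|M|+1)\cdot H$ as a topological minor (a topological-minor model of this graph in $G$ uses at most $|M|$ branch vertices inside $M$ and at most $|M|$ subdivision paths meeting $M$, so after deleting $2|M|$ of the copies of $H$ one of them survives with a model disjoint from $M$, contradicting $H$-topological-minor-freeness of $G-M$); but feeding this observation to Theorem~\ref{thm:topo-minor-main} yields only a kernel of size $k^{g(|M|,H)}$ --- polynomial in $k$ for fixed $|M|$, not polynomial in $k+|M|$. So the main obstacle throughout is the polynomial --- as opposed to exponential --- control of the modulator: a region attached to $\Oh_H(1)$ ``real'' vertices can attach to $M$ in exponentially many distinct patterns, so one can neither store the full $M$-neighbourhood of each small piece nor enumerate subsets or orderings of $M$, and the core of the argument is to show that only bounded-arity, use-once demands toward $M$ can ever be relevant and that these admit a polynomial-size representative collection. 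A second, more technical obstacle is to redo the structural case analysis of Theorem~\ref{thm:topo-minor-main} in this relative setting (flat walls, vortices, and apices of the torsos of $G-M$) and to verify that each replacement is safe for $k$-paths \emph{in $G$}, not merely in $G-M$: the rerouting inside $R$ fixes the endpoints in $\partial R\cup M$ and is otherwise oblivious to the modulator, but one must still make it robust against the adjacencies to $M$ that can occur inside $R$ (for instance keeping a detour incident to the very modulator vertex it must pass through), and quantifying how much slack this requires is precisely what fixes the polynomial dependence on $|M|$.
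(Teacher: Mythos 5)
Your high-level shape is right — relativize to $G-M$, observe that a $k$-path's traverses of a region $R \subseteq V(G)\setminus M$ are delimited by $\partial R$ or by $M$, iterate a reduction rule, and finish with a single oracle call — and your diagnosis that the $(2|M|+1)\cdot H$ shortcut only gives a kernel exponential in $|M|$ is correct. But the core of the proof, the mechanism that actually keeps the number of oracle queries per reduction step polynomial in $k+|M|$, is missing, and the ``representative-set (sunflower-type) argument'' you invoke in its place does not obviously close the gap.

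Concretely: once the boundary of $R$ inside $G-M$ has constant size, a $k$-path can still traverse $R$ up to $\Theta(|M|)$ times, and the endpoints of these traverses are drawn from a set of size $|M|+\Oh_H(1)$. Enumerating demand patterns (multisets of endpoint pairs plus lengths) naively therefore costs $|M|^{\Theta(|M|)}$, which is exactly the exponential blowup the theorem must avoid; nothing in your sketch explains how a sunflower lemma reduces this to a polynomial family, since the traverses carry lengths and ordered endpoint pairs and do not form a bounded-arity set system in any evident way. The paper's resolution is different and sharper: it does \emph{not} try to bound the number of demand patterns for an arbitrary region. Instead it first precomputes, for every $u,v\in M$ and every $k'\le k$, a maximal packing of internally-disjoint $(u,v,k')$-paths in $G-(M\setminus\{u,v\})$ (keeping only $k+1$ of them) and marks the $\mathrm{poly}(k,|M|)$ internal vertices $X_1$ they use. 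It then closes the tree nodes containing $X_1$ under lowest common ancestors to get $B_2$, and takes $A$ to be the bags of a large component of $T-B_2$ minus the $\Oh(1)$ vertices $Z$ of $X_2$ it touches. The point is the \emph{guard lemma}: among $k$-paths, one maximizing the number of $X_1$-vertices has \emph{no} $A$-traverse with both endpoints in $M$, because such a traverse (which avoids $X_1$ by construction of $A$) could be swapped for one of the $k+1$ premarked disjoint paths, at least one of which is internally disjoint from the current $k$-path. Hence every $A$-traverse of the guarded path meets $Z$, the number of traverses drops to $2|Z|=\Oh_H(1)$, and the count of reasonable \kLinkage{} queries in Equation~\eqref{eq:Ibound} becomes $(k+1)\cdot\bigl(|Z|(|N(A)|+1)\bigr)^{2|Z|+1} = \mathrm{poly}(k,|M|)$. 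Your proposal bounds the traverses by $\Oh_H(1)+2|M|$ and stops there, which leaves the exponent of the enumeration proportional to $|M|$.

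Two secondary differences are worth noting. First, you propose re-running the near-embedding/flat-wall case analysis of Section~\ref{sec:topological} inside $G-M$ and replacing the large region by a gadget; the paper does neither. It only needs the \emph{existence} of a $\mathrm{poly}(k)$-width, $\Oh_H(1)$-adhesion tree decomposition of $G-M$ (Theorem~\ref{thm:topo-decomposable}), computes an approximate one algorithmically via Theorem~\ref{thm:unbreakable}, and then works purely combinatorially with that tree (marking, ancestor closure, subtree extraction); the reduction deletes unmarked vertices of $A$ rather than substituting a gadget, which keeps safety checking local and avoids the ``robustness against adjacencies to $M$ inside $R$'' issue you correctly flag as delicate. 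Second, your gadget must realize ``exactly those demands that $R$ realizes,'' which is a stronger, equivalence-type requirement than the paper's one-sided rule (preserve at least one witness per reasonable demand, then delete something unmarked); the weaker requirement is easier to certify via oracle calls. If you want to repair the proposal, the concrete thing to add is the packing-and-marking step that produces $X_1$, $B_2$, $Z$, and the guard lemma; without it the per-step query count is not polynomial.
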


\subparagraph{Techniques}

To explain our approach, we briefly recall the idea behind the Turing kernelization for \kPath on planar graphs. At the core lies a win/win: there is a polynomial-time algorithm that either (i) establishes that a planar graph~$G$ has a $k$-path (a simple path on $k$ vertices), or (ii) finds a separation~$(A,B)$ in~$G$ with the following property: the size of~$A$ is polynomially bounded in~$k$, but large enough that after marking a witness structure for each reasonable way in which a $k$-path might intersect~$A$, some vertex remains unmarked. Using bounded-size oracle queries to mark the witness structures, this allows the problem to be simplified by removing an unmarked vertex from~$A$ without changing the answer.

Theorem~\ref{thm:topo-minor-main} is established by lifting this win/win approach to $H$-(topological)-minor-free graphs. This requires an adaptation of the decomposition theorems of Robertson and Seymour~\cite{RobertsonS99} (for minors) and of Grohe and Marx~\cite{GroheM15} (for topological minors), to obtain the following. Every $H$-free graph that does not have a $k$-path, has a tree decomposition of constant adhesion and width~$\mathrm{poly}(k)$. A reducible separation can be found by inspecting this tree decomposition. To establish this result, we exploit known theorems stating that triconnected $n$-vertex graphs that exclude~$K_{3,t}$ as a minor for some~$t$~\cite{ChenYZ12}, contain paths of length~$\Omega(n^{\varepsilon})$ for some~$\varepsilon > 0$. Roughly speaking, this allows us to infer the existence of a $k$-path if there is a large embedded part in the nearly-embeddable graph corresponding to a bag of the graph minors decomposition, since graphs embeddable in a fixed surface are~$K_{3,t}$-minor-free for some~$t$. We use lower bounds on the circumference of graphs of bounded degree~\cite{ChenGYZ06,Shan15} to achieve a similar conclusion from the existence of a large bounded-degree bag in the topological-minor-free decomposition. Several technical steps are needed to translate this into the desired win/win, due to the existence of vortices, virtual edges, and the lack of a direct polynomial-time algorithm to compute the decomposition.

To prove Theorem~\ref{thm:minor-mod-main}, we introduce a new algorithmic tool for finding irrelevant vertices for the \kPath problem in the presence of a modulator~$M$ in the input graph~$G$. Since Theorem~\ref{thm:topo-minor-main} can be applied to find a $k$-path in~$G - M$ if one exists, the challenge is to detect a $k$-path in~$G$ that jumps between~$M$ and~$G-M$ several times. The absence of a $k$-path in~$G-M$ implies it has a tree decomposition of width~$\textrm{poly}(k)$ and constant adhesion. Using Theorem~\ref{thm:topo-minor-main} as a subroutine, along with a packing argument, we can compute a vertex set~$X$ of size polynomial in~$k + |M|$ with the following guarantee. If there is a $k$-path, then there is a \emph{guarded} $k$-path~$P$ in which each successive pair of vertices in~$M \cap P$ are connected by a subpath through~$G - M$ that intersects~$X$. Using the tree decomposition of~$G-M$, the standard ancestor-marking technique allows us to identify a vertex subset~$C$ of~$G-(M\cup X)$ that is adjacent to constantly many vertices from~$X$. Unless~$G$ is already small, we can find such a set~$C$ that is sufficiently large to be reducible but small enough that we may invoke the oracle for questions about it. We can then reduce the graph without losing the existence of a guarded $k$-path, by marking a witness for each sensible way in which a constant-size subset from~$M$ can connect to prescribed vertices in~$X$ through~$C$. The fact that~$C$ only has constantly many neighbors in~$X$ implies that there are only polynomially many relevant choices. We may then safely remove the unmarked vertices.

\subparagraph{Organization}

After preliminaries in Section~\ref{sec:prelims}, we give a generic Turing-style reduction rule for \kPath in Section~\ref{sec:turing}. In Section~\ref{sec:minors} we show that an $H$-minor-free graph either has a $k$-path or a separation that is suitable for reduction. In Section~\ref{sec:topological} we extend this to topological minors. Finally, in Section~\ref{sec:modulator} we present a Turing kernel applicable when the input graph has a small modulator to a suitable graph class.

\section{Preliminaries}\label{sec:prelims}
\paragraph{Notation.}
All graphs we consider are finite, simple, and undirected. A \emph{separation} of a graph $G$ is a pair $(A,B)$, $A,B \subseteq V(G)$ such that $A \cup B = V(G)$ and there are no edges between $A \setminus B$ and $B \setminus A$. The \emph{order} of the separation $(A,B)$ is $|A \cap B|$. A graph is \emph{triconnected} if it is connected and cannot be disconnected by deleting fewer than three vertices. When referring to the \emph{size} of a graph in our statements, we mean the number of vertices.

A \emph{tree decomposition} of a graph~$G$ is a pair~$(T,\X)$ where~$T$ is a rooted tree and~$\X$ is a function that assigns to every node~$t \in V(T)$ a subset~$\X(t)$ of $V(G)$ called a \emph{bag} such that:
\begin{itemize}
\item $\bigcup_{t\in V(T)} \X(t) = V(G)$;
\item for each edge $uv \in E(G)$, there is a node $t \in V(T)$ with $u, v \in \X(t)$;
\item for each $v \in V(G)$, the nodes $\{t \mid v \in \X(t)\}$ induce a (connected) subtree of $T$.
\end{itemize}
The \emph{width} of $(T,\X)$ is $\max_{t \in V(T)} |\X(t)|-1$. 
Its \emph{adhesion} is $\max_{tt'\in E(T)} |\X(t)\cap\X(t')|$.
We also call the set $\X(t) \cap \X(t')$ the \emph{adhesion of $tt'$}, for every edge $tt'$ of $T$.
%If T has no edges, we define the adhesion to be zero.
%For an edge e = \{i, j\} ∈ E(T ) we will sometimes refer to the set X (i) ∩ X (j) as the adhesion of edge e.
For a decomposition $(T,\X)$ of $G$ and a node $t \in V(T)$, the \emph{torso}, denoted $\torso(G,\X(t))$, is the graph obtained from $G[\X(t)]$ by adding an edge between each pair of vertices in $\X(t) \cap \X(t')$, for every neighbor $t'$ of $t$ in $T$ (so each adhesion induces a clique in the torso). %that are connected by a path in $G$ whose internal vertices are outside of $\X(t)$.
Added edges not present in $G$ are called \emph{virtual edges}.
For a subtree $T'\subseteq T$ we write $\X(T')$ for the union $\bigcup_{t\in V(T')} \X(t)$ of bags in $T'$.

For an edge $t_1t_2 \in E(T)$, let $T_i$ be the connected component of $T-\{t_1t_2\}$ that
contains $t_i$. Let $V_i = \X(T_i)$. Observe that the properties of a tree decomposition
imply that $(V_1,V_2)$ is a separation with $V_1 \cap V_2 = \X(t_1) \cap X(t_2)$.

A decomposition $(T,\X)$ is \emph{connected} if for every $t \in V(T)$ and its child $t'$, if $T_{t'}$ is the subtree of $T$ rooted at $t'$, we have (i)~that $G[\X(T_{t'}) \setminus \X(t)]$ is connected, and (ii)~that~$\X(T_{t'}) \setminus \X(t)$ has edges to every vertex of the adhesion~$\X(t) \cap \X(t')$.
It is straightforward to turn any decomposition into a connected one without increasing its width nor adhesion. For~(i), as long as there exists a pair $(t,t')$
violating the requirement, make a distinct copy $T_{t'}^C$ of $T_{t'}$ for each connected component $C$ of $G[\X(T_{t'}) \setminus \X(t)]$, restrict
the bags of $T_{t'}^C$ to the vertices of $C \cup \X(t)$ only, and attach $T_{t'}^C$ as a subtree with the root being a child of $t$. For~(ii), while there is a vertex~$v \in \X(t) \cap \X(t')$ that has no neighbors in~$\X(T_{t'}) \setminus \X(t)$, simply remove~$v$ from all bags in~$T_{t'}$.

We will also need the following non-standard complexity measure of a tree decomposition
$(T,\X)$. For every $t \in V(T)$, the number of distinct adhesions $\X(t) \cap \X(t')$ for
$t' \in N_T(t)$ is called the \emph{adhesion degree} of $t$. 
The maximum adhesion degree over all nodes $t$ is the \emph{adhesion degree} of the decomposition
$(T,\X)$.
Observe that if a tree decomposition $(T,\X)$ has width less than $\ell$ and adhesions
of size at most $h$, then its
adhesion degree is at most 
$$\sum_{i = 0}^h \binom{\ell}{i} \leq (1+\ell)^h.$$
However, in sparse graph classes we can prove a much better
bound on the adhesion degree due to linear bounds on the number of cliques in such graphs; cf. Lemma~\ref{lem:adh-deg-bound}.

A \emph{path decomposition} is a tree decomposition where $T$ is a path; we will denote it simply as a sequence of bags $\Z_1,\dots,\Z_{|V(T)|}$.

For an integer $\theta$, a \emph{tangle of order $\theta$} in a graph $G$
is a family $\T$ of separations of order at most $\theta$ such that, for every separation
$(X,Y)$ of order at most $\theta$ in $G$, exactly one of the separations $(X,Y)$ and $(Y,X)$
belongs to $\T$. If $(X,Y) \in \T$ we call $X$ the \emph{small side}
of the separation and $Y$ the \emph{large side}. 
Furthermore, we require that for every three separations $(X_1,Y_1)$, $(X_2, Y_2)$, $(X_3,Y_3)$ in $\T$, we have $G[X_1] \cup G[X_2] \cup G[X_3] \neq G$.

%For a tangle $\T$ of order $\theta$ and a set $A \subseteq V(G)$ of size $|A| \leq \theta$,
%we define a tangle $\T-A$ of order $\theta-|A|$ in $G-A$ as follows: for every
%separation $(X,Y)$ of order at most $\theta-|A|$ in $G-A$, we put
%$(X,Y)$ into $\T-A$ if $(X \cup A, Y \cup A) \in \T$, and otherwise,
%if $(Y \cup A, X \cup A) \in \T$, we put $(Y,X)$ into $\T-A$.

\section{Turing kernels}\label{sec:turing}
\newcommand{\sepmin}{p}
\newcommand{\sepmax}{q}
\newcommand{\sepord}{h}
\newcommand{\tdw}{w}
\newcommand{\tdadh}{h}
\newcommand{\tddeg}{a}
\newcommand{\Ibnd}{\mathfrak{p}}
\newcommand{\GG}{\mathcal{G}}

In this section we introduce a general toolbox and notation for proving our Turing kernel
bounds.

\subsection{Definitions and the auxiliary problem}

For a parameterized problem $\Pi$ and a computable function $f$, a \emph{Turing kernel of size $f$}
is an algorithm that
solves an input instance $(x,k)$ of $\Pi$ in polynomial time, given access 
to an oracle that solves instances $(x',k')$ of $\Pi$ with $|x'|,k' \leq f(k)$. 
A Turing kernel is a \emph{polynomial} one if $f$ is a polynomial.

If we are only interested in distinguishing between NP-complete
problems admitting a polynomial Turing kernel
from the ones that do not admit such a kernel, we can assume that the oracle solves
an arbitrary problem in NP, not necessarily the \kPath{} problem. 
Indeed, note that by the definition of NP-completeness, an oracle to a problem in NP
can be implemented with an oracle to \kPath{} with only polynomial blow-up in the size 
of the passed instances.

In our work, it will be convenient to reduce to the \kLinkage{} problem, defined as follows.
The input consists of an undirected graph $G'$, an integer $k'$, a set of terminals $S \subseteq V(G')$, and a number of requests $R_1,R_2,\ldots,R_r$; a request is a set of at most two terminals.
A path $P_i$ in $G$ is said to \emph{satisfy a request $R_i$} if $V(P_i) \cap S = R_i$ and every
vertex of $V(P_i) \cap S$ is an endpoint of $P_i$. 
With such an input, the \kLinkage{} problem asks for a sequence of $r$ paths $P_1,P_2,\ldots,P_r$
such that $P_i$ satisfies $R_i$ for every $1 \leq i \leq r$, $|\bigcup_{i=1}^r V(P_i)| = k'$,
 and every vertex of $V(G)\setminus S$ is contained in at most
one path $P_i$ (i.e., the paths $P_i$ are vertex-disjoint, except that they may share an endpoint, but only if the requests ask them to do so). 

We remark that \kLinkage{} is a more general problem than \kPath{}: an instance
with $G'=G$, $k'=k$, $S=\emptyset$, $r=1$, and $R_1 = \emptyset$ asks precisely for a $k$-path
in $G$.

Clearly, the decision version of the \kLinkage{} problem belongs to the class NP.
By using its self-reducibility (cf.~\cite[Lemma 2]{Jansen17}), we assume that the oracle returns a sequence of paths $(P_i)_{i=1}^r$ in case of a positive answer.
That is, in all subsequent bounds on the number of \kLinkage{} oracle calls, the bound adheres to the number of calls to an oracle that returns the actual paths $P_i$;
if one wants to use a decision oracle, one should increase the bound by the blow-up implied by the self-reducibility application (i.e., at most $|E(H)|$ for calls on a graph $H$). 

\subsection{Generic reduction rule}

\begin{figure}
	\centering
    \def\svgwidth{200pt}
	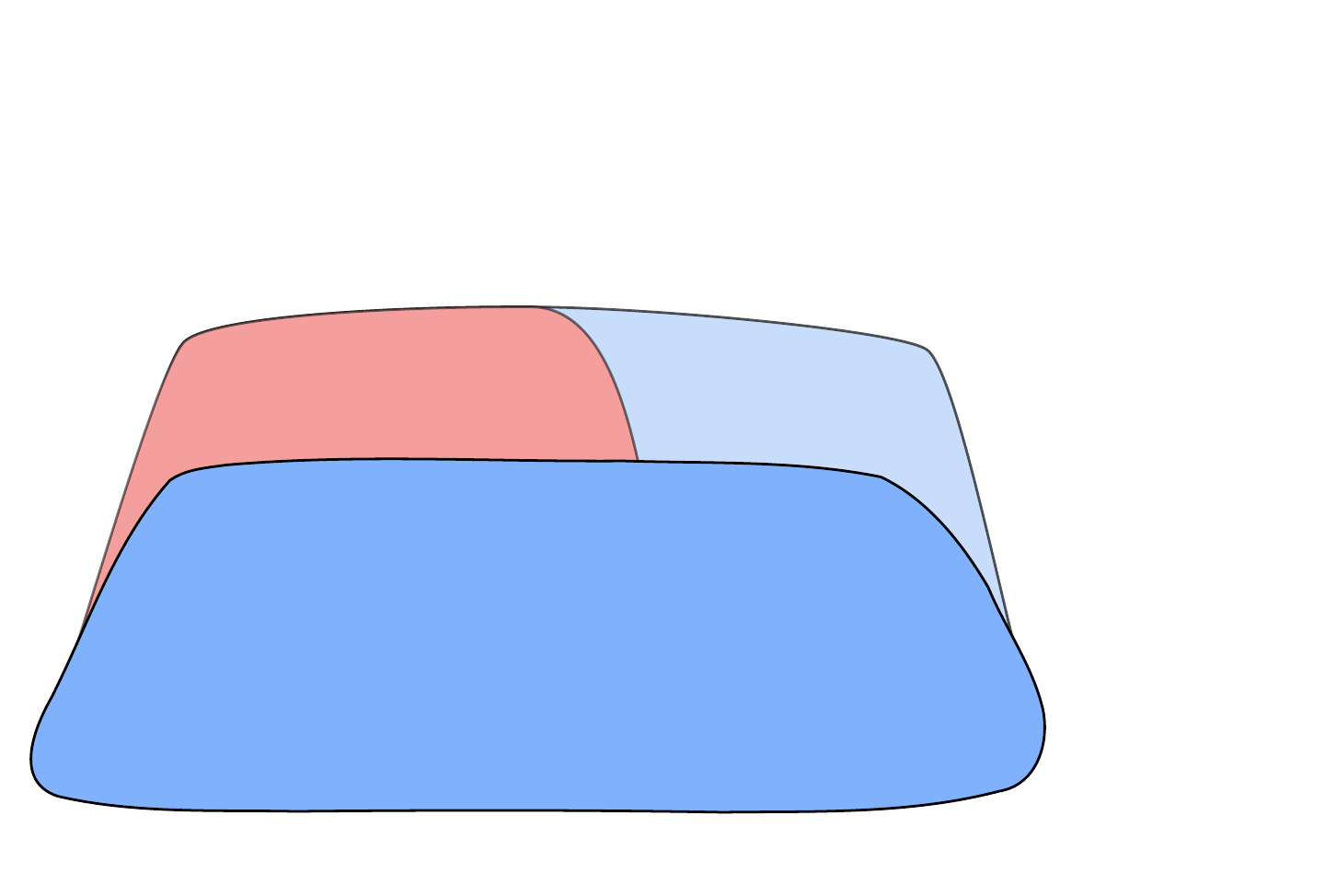
	\vspace*{-1.5em}
	\caption{A set $A$ (blue) and a path with three $A$-traverses (bold). The path is guarded w.r.t.~$Z \subseteq N(A)$ (the red set), since each $A$-traverse has an endpoint in it. (A path fully contained in $A$ (with one $A$-traverse) or disjoint from $A$ (with no $A$-traverses) would also be guarded.)}
	\label{fig:guardedPath}
\end{figure}

We now show a generic reduction rule for the \kPath{} problem.
We start with a few definitions.

\begin{definition}
For a graph $G$, a subset $A \subseteq V(G)$, and a simple path $P$ in $G$, 
an \emph{$A$-traverse} of $P$ is 
a maximal subpath of $P$ that contains at least one vertex of $A$ and 
has all its internal vertices in $A$.
\end{definition}

Note that if $Q$ is an $A$-traverse of $P$, then every endpoint of $Q$ is either
an endpoint of $P$ or lies in $N_G(A)$. See Figure~\ref{fig:guardedPath}.

\begin{definition}
Let $G$ be a graph, $A \subseteq V(G)$, and let $k$ be an integer. A set $Z \subseteq N(A)$ is called a \emph{$k$-guard of $A$}
if the following implication holds: if $G$ admits a $k$-path, then there exists a $k$-path
$P$ in $G$ that is either contained in $A$ or such that every $A$-traverse of $P$ 
has at least one endpoint in $Z$.
\end{definition}
Given a graph $G$, a set $A \subseteq V(G)$, and a $k$-guard $Z \subseteq N(A)$ of $A$, a $k$-path $P$
satisfying properties as in the above definition is called \emph{guarded} (w.r.t.~$k$, $A$, and $Z$).
If the integer $k$ and the set $A$ are clear from the context, we call such a set $Z$ simply a guard.

Observe that $Z = N(A)$ is always a guard, but sometimes we will be able to find smaller ones.
Of particular interest will be guards of constant size, as our kernel sizes will depend exponentially
on the guard size. To describe our single reduction rule, we show how solutions to \kLinkage{} can be used to preserve the existence of guarded $k$-paths.

Assume we are given a graph $G$, a set $A \subseteq V(G)$, an integer $k$, and a $k$-guard $Z \subseteq N(A)$ of $A$.
Let $h = |Z|$ and $\ell = |N(A)|$. 
Furthermore, assume that $G$ admits a $k$-path, and let $P$ be a guarded one w.r.t.~$A$ and $Z$.
Let $(Q_1,Q_2,\ldots,Q_r)$ be the $A$-traverses of $P$, let $R_i = V(Q_i) \setminus A = V(Q_i) \cap N(A)$ for $1 \leq i \leq r$, let $G' = G[N[A]]$, $S = N(A)$, and let $k' = |\bigcup_{i=1}^r V(Q_i)|$.
Observe that $(Q_1,Q_2,\ldots,Q_r)$ is a feasible solution to the \kLinkage{} instance
$\mathcal{I}_P := (G[N[A]], k', S, (R_i)_{i=1}^r)$; the instance $\mathcal{I}_P$ is henceforth called \emph{induced by $P$ and $A$}.
Furthermore, it is easy to see that if $(Q_1',Q_2',\ldots,Q_r')$ is a different
feasible solution to $\mathcal{I}_P$, then a path $P'$ obtained from $P$ by replacing every
subpath $Q_i$ with $Q_i'$ is also a guarded $k$-path in $G$.

The crucial observation is that a small guard limits the number of $A$-traverses.
\begin{lemma}\label{lem:num-traverses}
The number~$r$ of traverses of the guarded $k$-path~$P$ is bounded by~$\max(1, 2|Z|)$.
\end{lemma}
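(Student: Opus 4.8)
The plan is to bound the number of $A$-traverses by relating each traverse to an endpoint it contributes in the guard $Z$, and then arguing that a single vertex of $Z$ cannot be the endpoint of too many distinct traverses of a single simple path.

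First I would dispose of the two trivial cases. If $P$ is entirely contained in $A$, then by definition $P$ has exactly one $A$-traverse, namely $P$ itself, so $r = 1 \le \max(1, 2|Z|)$. If $P$ contains no vertex of $A$, then $r = 0$. So we may assume $P$ is a guarded $k$-path that is \emph{not} contained in $A$ and meets $A$; in particular, since $P$ is guarded w.r.t.\ $Z$, every $A$-traverse of $P$ has at least one endpoint in $Z$.

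Next, the key counting step. Each $A$-traverse $Q_i$ is a subpath of the simple path $P$, and the $Q_i$ are the maximal subpaths whose interiors lie in $A$; distinct traverses are internally disjoint and in fact share at most endpoints. Fix the natural linear order of $P$; each traverse $Q_i$ has a ``first'' endpoint and a ``last'' endpoint along $P$. I would charge each traverse $Q_i$ to one of its endpoints lying in $Z$ (which exists by guardedness), and observe that a fixed vertex $z \in Z$ lies on the simple path $P$ exactly once, hence $z$ is an endpoint of at most two of the traverses $Q_i$ (the traverse ending at $z$ from the left and the one starting at $z$ to the right, when $z$ is not an endpoint of $P$; at most one such traverse when $z$ is an endpoint of $P$). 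Therefore the map sending each traverse to a $Z$-endpoint is at most $2$-to-$1$, giving $r \le 2|Z|$. Combined with the trivial cases, $r \le \max(1, 2|Z|)$.

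The main obstacle — really the only subtlety — is making precise the claim that a vertex of $Z$ bounds the number of incident traverses: one must check that consecutive traverses can only overlap at a shared endpoint (never share an internal vertex or an edge), which follows from maximality in the definition of $A$-traverse together with the fact that a vertex in $N(A)$ is not in $A$, so it can only be an endpoint of a traverse, not an internal vertex. Once that is observed, the $2$-to-$1$ charging argument is immediate. I would also note in passing that when $|Z| = 0$ the guardedness condition forces $P \subseteq A$ (every traverse would need an endpoint in the empty set, which is impossible if any traverse exists), so $r = 1$, consistent with $\max(1, 2|Z|) = 1$.
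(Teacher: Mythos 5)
Your proof is correct and follows essentially the same approach as the paper: handle the degenerate cases ($P \subseteq A$ gives $r=1$; $P$ disjoint from $A$ gives $r=0$), then observe that guardedness forces every traverse to have a $Z$-endpoint when $P \not\subseteq A$, and charge each traverse to such an endpoint, which is at most $2$-to-$1$ because a vertex occurs once on a simple path. The paper's proof is a two-sentence compression of exactly this argument; your version just makes the case split and the charging map explicit.
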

\begin{proof}
Every vertex of $Z$ can be an endpoint of at most two traverses. 
If $r > 1$, then none of the traverses $Q_i$ are contained in $G[A]$, and thus every traverse has at least
one endpoint in the guard~$Z$. 
\end{proof}
Lemma~\ref{lem:num-traverses} in turn limits the number of possible instances $\mathcal{I}$ that can be induced by a guarded $k$-path, 
for a fixed set $A$ and guard $Z$.
Note that we have $0 \leq k' \leq k$ and $0 \leq r \leq \max(1, 2|Z|)$.
Furthermore, 
unless $r=1$ and $R_1 = \emptyset$,
we have $R_i \subseteq N(A)$, $|R_i| \in \{1,2\}$, and every set $R_i$ needs to have at least one element of $Z$;
there are at most $|Z| \cdot (|N(A)|+1) = h(\ell+1)$ choices for such a set $R_i$.
Consequently, the number of possibilities for the instance $\mathcal{I}$ is at most
\begin{equation}\label{eq:Ibound}
(k+1) \cdot \left(1 + \sum_{r=0}^{2h} h^r (\ell+1)^r \right) \leq (k+1) \cdot \left(h(\ell+1)\right)^{2h+1} =: \Ibnd(k, \ell, h).
\end{equation}
\paragraph{Reduction rule.}
If $|A| > k \cdot \Ibnd(k, \ell, h)$, then we can apply the following reduction rule.
For each \kLinkage{} instance $\mathcal{I}$ out of at most $\Ibnd(k, \ell, h)$ reasonable instances
for $A$-traverses of a guarded $k$-path in $G$, we invoke an oracle on the instance $\mathcal{I}$,
and mark the vertices of the solution if the oracle finds one.
The whole process will mark at most $k \cdot \Ibnd(k, \ell, h) < |A|$ vertices, thus at least
one vertex of $|A|$ will remain unmarked.
We delete any such vertices.

The observation that on a guarded $k$-path $P$ one can replace a solution to the instance
$\mathcal{I}_P$ induced by $P$ and $A$ by a different solution provides safeness
of this reduction.
Finally, note that the reduction invokes at most $\Ibnd(k, \ell, h)$ calls to the oracle;
each call operates on a subgraph of the graph $G[N[A]]$ with $k' \leq k$ and $r \leq 2|Z|$.

\medskip

We shall apply the Reduction Rule for a medium-sized set $A$ and a guard set $Z$ of constant size formed from adhesions of a tree decomposition. For most of the paper we will use $Z = N(A)$ with $\ell=h=|Z|$ a constant (depending on the excluded (topological) minor, in the results of Sections~\ref{sec:minors} and~\ref{sec:topological}). Only in Section~\ref{sec:modulator}, when dealing with a modulator $M$ such that $G-M$ has an appropriate structure, it will be important to consider $N(A)$ potentially containing all of $M$, with a guard set $Z$ of constant size disjoint from $M$.

\subsection{Separation oracles}

The natural way of using our reduction rule is to find in a graph a large (but not too large) part of the graph with a small (preferably, constant) boundary.
Let us first make an abstract definition of an algorithm finding such a separation.

\begin{definition}\label{def:separationOracle}
For a graph class $\GG$, a constant $\sepord$, and a computable coordinate-wise nondecreasing
function $\sepmax : \mathbb{Z}_{\geq 0} \times \mathbb{Z}_{\geq 0} \to \mathbb{Z}_{\geq 0}$,
an algorithm $\mathcal{S}$ is called a \emph{$(\sepord,\sepmax,T_{\mathcal{S}})$-separation oracle} if, given a graph $G\in\GG$ and integers $k$ and $\sepmin$, in time $T_{\mathcal{S}}(|G|, k, \sepmin)$ it finds a separation $(A,B)$ in $G$
of order at most $\sepord$ with $\sepmin < |A| \leq \sepmax(k, \sepmin)$,
or correctly concludes that $G$ contains a $k$-path.
\end{definition}

For all considered graph classes, we will be able to provide a separation oracle with $\sepmax$ being a polynomial. This, in turn, allows the following generic Turing kernel.

\begin{lemma}\label{lem:generic}
Let $\mathcal{S}$ be a $(T_\mathcal{S}, \sepord, \sepmax)$-separation oracle for a hereditary graph class $\GG$.
Take $\hat{\sepord} := (2h)^{4\sepord+3}$.
Then, the \kPath{} problem restricted to graphs from $\GG$ can be solved:
\begin{itemize}[-,itemsep=4pt]
\item in time $\displaystyle\Oh\big( T_{\mathcal{S}}(|G|, k, k^2 \hat{\sepord}) \cdot |V(G)|\ +\ k \hat{\sepord} \cdot |V(G)|\cdot |E(G)|\big)$,
\item using at most $k \hat{\sepord} \cdot |V(G)|$ calls to \kLinkage{}
\item each call on an induced subgraph of the input graph of size at most
$\sepmax(k, k^2 \hat{\sepord})$.
\end{itemize}
\end{lemma}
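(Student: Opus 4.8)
The goal is a generic Turing kernel that, given a separation oracle $\mathcal{S}$ and an input $(G,k)$ with $G \in \GG$, repeatedly invokes $\mathcal{S}$ to find a bounded-size separation and then applies the Reduction Rule from the previous subsection. The main loop is: while $G$ is not already small enough to query the oracle directly, run $\mathcal{S}$ on $(G, k, \sepmin)$ with the threshold $\sepmin := k^2\hat{\sepord}$; if it reports a $k$-path, answer "yes"; otherwise it returns a separation $(A,B)$ of order at most $\sepord$ with $\sepmin < |A| \le \sepmax(k, \sepmin)$. We then take $Z := N(A) \subseteq A\cap B$, so $\ell = h = |Z| \le \sepord$, and invoke the Reduction Rule on $A$, deleting at least one unmarked vertex of $A$. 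Since each application strictly decreases $|V(G)|$ and preserves membership in $\GG$ (hereditary), the process terminates after at most $|V(G)|$ reduction rounds; when the loop ends, $|V(G)| \le \sepmin + \sepord$ or similar, and a single direct oracle call finishes the job.

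**Why the parameters work.** I first need to check that the Reduction Rule is applicable, i.e. that $|A| > k\cdot\Ibnd(k,\ell,h)$. With $\ell = h \le \sepord$ we have from~\eqref{eq:Ibound} that $\Ibnd(k,\ell,h) = (k+1)(h(\ell+1))^{2h+1} \le (k+1)\bigl((\sepord+1)^2\bigr)^{2\sepord+1}$, which is bounded by $k\cdot(2\sepord+1)^{\,4\sepord+3} \le k\cdot\hat{\sepord}$ for $k$ large enough (small $k$ being handled directly); hence $k\cdot\Ibnd(k,\ell,h) \le k^2\hat{\sepord} = \sepmin < |A|$, exactly as needed. This is the point of the somewhat magical-looking constant $\hat{\sepord} = (2h)^{4\sepord+3}$: it is a clean upper bound for $\Ibnd/k$ that also controls both the query count and the query size. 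For safeness, the argument is already given: deleting an unmarked vertex of $A$ preserves the existence of a guarded $k$-path, because any guarded $k$-path $P$ induces a \kLinkage{} instance $\mathcal{I}_P$ that is one of the $\le \Ibnd(k,\ell,h)$ enumerated instances, the oracle found \emph{some} solution for it during marking, and swapping $P$'s $A$-traverses for that marked solution yields another guarded $k$-path avoiding the deleted vertex. Since $Z = N(A)$ is trivially a $k$-guard, the hypothesis that a guarded $k$-path exists whenever a $k$-path exists is automatic.

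**Accounting.** Each reduction round makes one call to $\mathcal{S}$, costing $T_{\mathcal{S}}(|G|,k,k^2\hat{\sepord})$ (using monotonicity of $\sepmax$ and that $|G|$ only decreases), plus at most $\Ibnd(k,\ell,h) \le k\hat{\sepord}$ oracle calls to \kLinkage{}, each on an induced subgraph of $G[N[A]]$ — hence an induced subgraph of the original input — of size at most $|A| + |N(A)| \le \sepmax(k,k^2\hat{\sepord}) + \sepord$, which we fold into $\sepmax(k, k^2\hat{\sepord})$ up to adjusting the polynomial (or simply observe $|N(A)| \subseteq A$ so the bound is already $\sepmax(k,k^2\hat{\sepord})$). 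Constructing the marked instances and extracting the vertices to delete takes $\Oh(k\hat{\sepord})$ instances each of size $\le |N[A]|$, and identifying an unmarked vertex and performing the deletion is $\Oh(|V(G)|\cdot|E(G)|)$ across the round in the crudest bound. Multiplying by at most $|V(G)|$ rounds and adding the single final direct call gives the claimed $\Oh\bigl(T_{\mathcal{S}}(|G|,k,k^2\hat{\sepord})\cdot|V(G)| + k\hat{\sepord}\cdot|V(G)|\cdot|E(G)|\bigr)$ running time and $k\hat{\sepord}\cdot|V(G)|$ total \kLinkage{} calls.

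**Main obstacle.** The substantive content is entirely in the bookkeeping: verifying that $\Ibnd(k,\ell,h)$ with $\ell = h \le \sepord$ is genuinely at most $k\hat{\sepord}$ (and dealing with the finitely many small-$k$ cases where this slack fails, e.g. by querying directly), and confirming that the query sizes stay within $\sepmax(k,k^2\hat{\sepord})$ rather than growing across rounds — which holds precisely because every \kLinkage{} call is on an \emph{induced subgraph of the original} $G$, not of a previously reduced graph, and because $k' \le k$ and $r \le 2h$ are preserved by the Reduction Rule's own guarantees. There is no deep difficulty; the care lies in making sure the polynomial $\sepmax$ is evaluated at the fixed argument $k^2\hat{\sepord}$ throughout and that hereditariness of $\GG$ is invoked so that $\mathcal{S}$ remains applicable after each deletion.
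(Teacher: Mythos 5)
Your proposal is correct and follows essentially the same approach as the paper: run $\mathcal{S}$ to obtain a separation, apply the Reduction Rule with $Z$ drawn from the separator of constant size, and iterate, finishing with a direct oracle call once the graph is small. The only nits are notational: you should explicitly take the Reduction-Rule set to be $A\setminus B$ (as the paper does, writing $A := A'\setminus B'$), so that $N(A)\subseteq A\cap B$ and $N[A]$ lies inside the separation side $A'$ of size $\leq \sepmax(k,\sepmin)$ --- your phrase ``$|N(A)|\subseteq A$'' is garbled; and in the arithmetic chain you wrote $(2\sepord+1)^{4\sepord+3}\leq\hat{\sepord}$, which is false since $\hat{\sepord}=(2\sepord)^{4\sepord+3}$, though the intended bound $(k+1)(\sepord(\sepord+1))^{2\sepord+1}\leq k\hat{\sepord}$ does hold unconditionally for $k,\sepord\geq 1$ (use $\sepord+1\leq 2\sepord$ and $k+1\leq 2k$), so the hedge about ``$k$ large enough'' is unnecessary.
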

\begin{proof}
Let
$$\sepmin = k \cdot \Ibnd(k, \sepord, \sepord) \leq  k(k+1) (\sepord(\sepord+1))^{2\sepord+1} \leq 2k^2 (2\sepord)^{4\sepord+2} \leq  k^2 \hat{\sepord}.$$

As long as $|V(G)| > \sepmin$, we proceed as follows. Invoke algorithm $\mathcal{S}$ on $G$.
If $\mathcal{S}$ claims that $G$ admits a \kPath{}, we simply output the answer yes.
Otherwise, let $(A',B')$ be the separation output by $\mathcal{S}$. 
Apply the Reduction Rule for $k$, $A := A' \setminus B'$, and $Z = N(A) \subseteq A' \cap B'$.
Note that as $|Z| \leq \sepord$, the Reduction Rule deletes at least one vertex of~$A$.
Furthermore, the Reduction Rule invokes
at most
$$\Ibnd(k, \sepord, \sepord) \leq k \hat{\sepord}$$
calls to the oracle,
each call on an induced subgraph of $G$ of size at most 
$$ |A'| \leq \sepmax(k, \sepmin) = \sepmax(k, k^2 \hat{\sepord}).$$
Once we obtain $|V(G)| \leq \sepmin$, we solve the instance using a single call to
\kLinkage{} with $k'=k$, $r=1$, and $R_1 = \emptyset$.
The bounds follow, as there are at most $|V(G)|$ applications of the Reduction Rule,
and each call to the oracle takes $\Oh(|E(G)|)$ time to prepare the instance
and parse the output.
\end{proof}

Note that for any graph class where separations as in Definition~\ref{def:separationOracle} exist, there exists a trivial separation oracle which finds them, running in time
$n^{\sepord+\Oh(1)}$: one iterates over every candidate for $A \cap B$ and, for fixed
set $A \cap B$, a straightforward knapsack-type dynamic programming algorithm checks if one
can assemble $A \setminus B$ of the desired size from the connected components of $G-(A \cap B)$.

However, this running time bound is unsatisfactory, as it greatly exceeds the number of used oracle calls.
For all considered graph classes we prove a much stronger property than just merely the prerequisites of Lemma~\ref{lem:generic}, in particular
providing a more efficient separation oracle. We provide necessary definitions in the next section.

\subsection{Decomposable graph classes}

The following definition captures the key concept of this section.

\begin{definition}
For a constant $\tdadh$ and a computable nondecreasing function $\tdw : \mathbb{Z}_{\geq 0} \to \mathbb{Z}_{\geq 0}$,
a graph class $\GG$ is called \emph{$(\tdw,\tdadh)$-decomposable} if
for every positive integer $k$ and every $G \in \GG$ that does not admit a $k$-path,
the graph $G$ admits a tree decomposition of width less than $\tdw(k)$ and adhesions of size at
most $\tdadh$.
\end{definition}

A standard argument shows that in a decomposable graph class, given the decomposition with
appropriate parameters, it is easy to provide a separation oracle.
\begin{lemma}\label{lem:decomp2sep}
Assume we are given a graph $G$ and a tree decomposition $(T,\X)$ of $G$
of width less than $\tdw$, adhesion at most $\tdadh$, and adhesion degree at most $\tddeg \geq 2$.
Then, given an integer $\sepmin$ such that $|V(G)| > \sepmin$, one can in time $\tdadh^{\Oh(1)} \cdot (|V(G)| + |E(G)| + |V(T)| + \sum_{t \in V(T)} |\X(t)|)$
find a separation $(A,B)$ of order at most $\tdadh$ such that
$$\sepmin < |A| \leq \tdw + \sepmin \cdot \tddeg.$$
\end{lemma}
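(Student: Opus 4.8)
The plan is to locate a single node $t^\star$ of $T$ such that the union of the bag $\X(t^\star)$ together with the "small sides" of a suitable subset of the components of $T-t^\star$ has size in the desired window. First I would preprocess the decomposition into a connected one (as described after the definition of connected decompositions, this does not increase width or adhesion, and the adhesion degree can only drop) and root $T$ arbitrarily; for each node $t$ I would compute, bottom-up, the quantity $s(t) := |\X(T_t)|$, the number of vertices appearing in the bags of the subtree $T_t$ rooted at $t$. This can be done in near-linear time by a single traversal if, for each edge $tt'$ with $t'$ a child of $t$, we discount the adhesion $\X(t)\cap\X(t')$ to avoid double-counting vertices shared with the parent: $s(t) = |\X(t)| + \sum_{t' \text{ child of } t} (s(t') - |\X(t)\cap\X(t')|)$. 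Since adhesions have size at most $\tdadh$, all of this is doable within the claimed running time using standard data structures (radix-sorting vertex labels, etc.), the $\tdadh^{\Oh(1)}$ factor absorbing the cost of manipulating adhesion sets.

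Next I would walk down the tree from the root, maintaining the invariant that the current node $t$ satisfies $s(t) > \sepmin$ (which holds at the root since $|V(G)| > \sepmin$), and at each step move to a child $t'$ with $s(t') > \sepmin$ if one exists. This process stops at a node $t^\star$ with $s(t^\star) > \sepmin$ but $s(t') \le \sepmin$ for every child $t'$ of $t^\star$ (and also we should peel off the part above, i.e.\ consider the component of $T - t^\star$ containing the parent; by connectedness of the decomposition and the stopping choice we may assume we descended into the heaviest relevant direction, or symmetrically the "upward" side is also small enough to be treated as one more "component" of size at most $\sepmin$ in the bookkeeping below). Now $V(G)$ is covered by $\X(t^\star)$ together with the sets $\X(T_i)\setminus\X(t^\star)$ ranging over the components $T_i$ of $T - t^\star$, and each such set has size at most $s(t_i) \le \sepmin$ (where $t_i$ is the endpoint of the edge from $t^\star$ into $T_i$); crucially, these "private" parts are pairwise disjoint. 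Here is where the adhesion degree enters: two components $T_i, T_j$ whose edges to $t^\star$ have the same adhesion set interact with the rest of the graph only through that common at most $\tdadh$-sized set, so for the purpose of building $A$ we may group components by their adhesion and treat each of the at most $\tddeg$ groups as a single "super-component" — but actually for a cleaner bound I would simply add components one at a time.

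Finally I would assemble $A$ greedily: start with $A := \X(t^\star)$, which has size at most $\tdw$, then repeatedly add the private part of one not-yet-added component of $T - t^\star$ until $|A| > \sepmin$. Before adding the last component we had $|A| \le \sepmin$, and one component contributes at most $\sepmin$ new vertices, so the total after the final addition is at most $\tdw + \sepmin$... — wait, we need to be slightly more careful, since adding one component's private part can push us from just below $\sepmin$ to almost $\sepmin + \sepmin$, and the bag itself adds $\tdw$ more, but we must check this terminates, i.e.\ that the components together with $\X(t^\star)$ actually exceed $\sepmin$, which is exactly $s(t^\star) > \sepmin$. So we terminate, and $|A| \le \tdw + \sepmin$; the factor $\tddeg$ in the stated bound $\tdw + \sepmin\cdot\tddeg$ is slack we do not even need with the one-at-a-time argument (it is there presumably to absorb the case where one groups by adhesion type, or where a single over-large component forces taking it wholesale). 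The set $B$ is then $(V(G)\setminus A)\cup (A\cap B)$ where $A\cap B$ consists of the adhesions between $t^\star$ and the components we placed into $A$ together with, for components not placed into $A$, their adhesions that nonetheless lie inside $\X(t^\star)\subseteq A$; since every such adhesion has size at most $\tdadh$ and — this is the one point needing a short argument — the set $A\cap B$ is contained in $\X(t^\star)$ unioned with at most finitely many adhesions, one checks its size is at most $\tdadh$: indeed the only vertices of $A$ with neighbours in $B\setminus A$ are those in adhesions $\X(t^\star)\cap\X(t_i)$ for components $T_i$ not added to $A$, and all of these... no — the cleanest route is to pick $A$ to be the union of $\X(T_i)$ (not the private part, the whole thing) over a set of components together with, if needed, the "upward" part, so that $(A,B)$ is literally the separation induced by removing a set of edges incident to $t^\star$; then $A \cap B \subseteq \X(t^\star)$ restricted to the relevant adhesions, but to get order at most $\tdadh$ rather than $\tdw$ we must instead take $A$ to correspond to a single edge $t^\star t_i$ of $T$ — giving order exactly $|\X(t^\star)\cap\X(t_i)| \le \tdadh$ — which is why the correct stopping rule is: descend to the unique child subtree that is "too heavy", and stop at the first edge $tt'$ of $T$ such that $|\X(T_{t'})| \le \sepmin$ but $|\X(T_t')| > \sepmin$ where $T_t'$ is the component on $t$'s side; then $A := \X(T_t') $ has $A \cap B = \X(t)\cap\X(t')$ of order $\le \tdadh$, and $|A| = |V(G)| - (s(t') - |A\cap B|)$...

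I will therefore structure the final write-up around the edge formulation: find an edge $tt'$ of $T$ with the small side $V' := \X(T_{t'})$ satisfying $\sepmin < |V'|$ minimal, argue that then the child side contributes a bounded excess so $|V'| \le \tdw + \sepmin\cdot\tddeg$ by splitting $T_{t'}$ at $t'$ into $\X(t')$ (size $<\tdw$) plus the at most $\tddeg$ grandchild-subtree directions each of size $\le \sepmin$ — precisely where adhesion degree, not adhesion, bounds the branching — and output $(V', (V(G)\setminus V')\cup(\X(t)\cap\X(t')))$ whose order is $|\X(t)\cap\X(t')| \le \tdadh$.

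\medskip

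\noindent\textbf{Main obstacle.} The real subtlety — and the place I expect to spend the most care — is the interplay between the two size bounds: the separation order must be controlled by the \emph{adhesion} $\tdadh$, whereas the "excess" of $|A|$ over $\sepmin$ must be controlled by the \emph{adhesion degree} $\tddeg$ times $\sepmin$ plus the width. Getting both simultaneously forces the edge-based stopping rule together with a one-level local expansion at $t'$: we cannot just take $A=\X(t^\star)$ plus components (order blows up to $\tdw$), nor take an arbitrary union of subtrees (again order $\tdw$); we must cut at a single tree edge but then bound the small side by re-expanding it one level and using that it has at most $\tddeg$ branching directions below, each of weight $\le \sepmin$ by minimality of the chosen edge. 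Everything else — the bottom-up computation of subtree weights, reduction to a connected decomposition, the $\tdadh^{\Oh(1)}$ accounting — is routine.
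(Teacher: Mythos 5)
Your edge-based stopping rule identifies the same node the paper does (the lowest $t'$ with $|\X(T_{t'})| > \sepmin$), but your final size argument contains a genuine gap. You claim that $T_{t'}$ splits into $\X(t')$ plus at most $\tddeg$ ``branching directions'', \emph{each} of weight at most $\sepmin$ ``by minimality of the chosen edge''. Minimality only gives you that each \emph{individual} child subtree $\X(T_{t''})$ has at most $\sepmin$ vertices; it says nothing about the size of an entire adhesion group $V_S = \bigcup_{t'' : \X(t')\cap\X(t'')=S}\X(T_{t''})$, which can be a union of arbitrarily many such small subtrees. If one group overflows (say $\sepmin/2$ children each contributing $\sepmin$ fresh vertices), then $|\X(T_{t'})|$ is $\Theta(\sepmin^2)$, far beyond $\tdw + \sepmin\cdot\tddeg$, so the single tree edge $tt'$ does \emph{not} give a usable separation.

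The missing case is exactly what the paper handles separately: when some group $V_S$ exceeds $\sepmin$, do not cut at the tree edge $tt'$; instead greedily take a minimal subfamily of children in that one group whose subtree-union $V'_S$ exceeds $\sepmin$. Minimality of $t'$ (each child subtree $\le \sepmin$) then gives $|V'_S| \le 2\sepmin$, and the order of the resulting separation is at most $|S| \le \tdadh$ precisely because all these children share the same adhesion $S$ --- so $V'_S$ meets the rest of the graph only in $S$. Ironically, you stated this very insight earlier (``two components whose edges to $t^\star$ have the same adhesion set interact with the rest of the graph only through that common at most $\tdadh$-sized set'') and proposed treating groups as super-components, but then abandoned it in favour of the one-at-a-time greedy (which, as you noticed, ruins the order bound) and finally the single-edge cut (which, as above, ruins the size bound). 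The correct proof keeps both regimes: cut at a tree edge when every group is small, and cut inside an overflowing group when one exists.
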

\begin{proof}
Root the tree $T$ in an arbitrary node, and for $t \in V(T)$ let $T_t$ be the subtree
of $T$ rooted in $t$. 
Let $t_0$ be the lowest node of $T$ such that $|\X(V(T_t))| > \sepmin$; such a node can
be computed in linear time in the size of $G$ and $(T,\X)$.

Group the children $t'$ of $t$ according to their adhesions $\X(t') \cap \X(t)$.
Due to the bound on the adhesion degree, there are at most $\tddeg$ groups. 
For every adhesion $S$, let $X_{t,S}$ be the set of the children of $t$ with
$S = \X(t') \cap \X(t)$. 
Define 
$$V_S = \bigcup_{t' \in X_{t,S}} \X(T_{t'}).$$
We consider now two cases. First, assume that $|V_S| \leq \sepmin$ for every adhesion $S$.
Then, by the adhesion degree bound, we have
$$|\X(T_t)| \leq |\X(t)| + \tddeg \sepmin \leq \tdw + \tddeg \sepmin.$$
Consequently, we can return the separation $(A,B)$ with $A = \X(T_t)$ and $B=\X(T - V(T_t))$.

In the other case, there exists an adhesion $S$ with $V_S > \sepmin$. 
We greedily take a minimal subset $Y_{t,S} \subseteq X_{t,S}$ such that
$$V_S' := \bigcup_{t' \in Y_{t,S}} \X(T_{t'})$$
is of size greater than $\sepmin$.
By the minimality of $t$, for every $t' \in X_{t,S}$ we have $|\X(T_{t'})| \leq \sepmin$
and, consequently $|V_S'| \leq 2\sepmin$.
Thus, we can return the separation $(A,B)$ for $A = V_S'$ and $B = N_G[V(G) \setminus V_S']$,
 as then $A \cap B \subseteq S$. 
\end{proof}

A critical insight 
is that the decomposition used by Cygan et al.~\cite{CyganLPPS14} to solve the \textsc{Minimum Bisection} problem in fact provides an approximate decomposition in a decomposable graph class.
Let us first recall the main technical result of~\cite{CyganLPPS14}.

\begin{definition}\label{def:unbreakable}
A vertex set $X \subseteq V(G)$ of a graph $G$ is called \emph{$(q,\sepord)$-unbreakable} if every separation $(A,B)$ of order at most $\sepord$ satisfies $|(A \setminus B) \cap X| \leq q$ or $|(B \setminus A) \cap X| \leq q$. 
\end{definition}

\begin{theorem}[\cite{CyganLPPS14}]\label{thm:unbreakable}
There is an algorithm that given a graph $G$ and integer $\sepord$ runs in time $2^{\Oh(\sepord^2)} |V(G)|^2 |E(G)|$ and outputs a connected tree decomposition $(T,\Y)$ of $G$ such that: (i)~for each $t\in V(T)$, the bag $\Y(t)$ is $(2^{\Oh(\sepord)}, \sepord)$-unbreakable in $G$, and 
(ii)~for each $tt' \in E(T)$ the adhesion $\Y(t)\cap\Y(t')$ has at most $2^{\Oh(\sepord)}$ vertices and  is $(2\sepord,\sepord)$-unbreakable in $G$.
\end{theorem}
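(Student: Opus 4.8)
The plan is to build the tree decomposition by a top-down recursive procedure that keeps splitting the graph along small-order separations until every remaining piece is unbreakable, while simultaneously controlling the separators used so that they stay small and unbreakable. I would work with \emph{pieces} of the form $(G',S)$, where $G'$ is an induced subgraph of $G$ and $S\subseteq V(G')$ with $|S|\le 2\sepord$ is a distinguished boundary destined to become a clique in the torso; the root is $(G,\emptyset)$. For a piece $(G',S)$, consider $G'$ with $S$ turned into a clique: if $V(G')$ is $(q,\sepord)$-unbreakable in $G$ for the intended threshold $q = 2^{\Oh(\sepord)}$ (Definition~\ref{def:unbreakable}), output $V(G')$ as a leaf bag. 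Otherwise, there is a separation of order at most $\sepord$ with both sides large; I would \emph{not} take an arbitrary such separation, but rather fix two sufficiently large ``representative'' subsets $P$ and $Q$ forced to opposite sides and take $(A,B)$ to be a \emph{minimum-order} separation between $P$ and $Q$. We then split $(G',S)$ into children $(G'[A],\,(A\cap B)\cup(S\cap A))$ and $(G'[B],\,(A\cap B)\cup(S\cap B))$, so that the new adhesion is exactly $A\cap B$, of size at most $\sepord$. Whenever the boundary of a piece would grow past $2\sepord$, I would insert an extra re-separation step that captures the boundary on the small side of an order-$\le\sepord$ separation --- this is the standard ``important separator'' closure, and it is what is ultimately responsible for the $2^{\Oh(\sepord)}$ bound on adhesion size. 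A final pass converts the result into a connected tree decomposition using the procedure from Section~\ref{sec:prelims}, which changes neither the width nor the adhesions.

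Two lemmas carry the weight. First, an \emph{uncrossing lemma}: if $(A,B)$ is a separation of minimum order among those separating two large-enough sets $P$ and $Q$, then $A\cap B$ is $(2\sepord,\sepord)$-unbreakable --- otherwise a hypothetical order-$\le\sepord$ separation of $G$ splitting $A\cap B$ into two parts each of size more than $2\sepord$ could be uncrossed with $(A,B)$, using submodularity of the order function, to produce a $P$--$Q$ separator of strictly smaller order, a contradiction; this yields property~(ii). Second, one must show the recursion terminates and produces only polynomially many nodes: I would track a measure that strictly decreases along every branch (each splitting separation has both sides nonempty, so each child has strictly fewer vertices while its boundary stays at $\Oh(\sepord)$ up to the closure step), and, as in standard recursive-understanding analyses, conclude that the decomposition has $\Oh(|V(G)|)$ nodes and total bag size $\Oh(\sepord\cdot|V(G)|)$. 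For the running time, each recursive step must either certify unbreakability of the current piece or produce a balanced small-order separation together with an unbreakable minimum-order separator between representative sets; this is done with a bounded number of max-flow computations plus the randomized-contractions / important-separators machinery to guess how the representative sets and the boundary distribute among the sides, at a cost of roughly $2^{\Oh(\sepord^2)}\cdot|V(G)|\cdot|E(G)|$ per step, the $\sepord^2$ in the exponent coming from derandomizing a colour-coding step; multiplying by the number of steps gives the claimed $2^{\Oh(\sepord^2)}|V(G)|^2|E(G)|$ bound.

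The main obstacle, and what makes the theorem nontrivial, is reconciling the three guarantees at once. Forcing the bags to be unbreakable pushes toward splitting aggressively; but every split creates a fresh adhesion and enlarges the boundary carried down the recursion, keeping the adhesions unbreakable restricts us to minimum-order separators between representative sets, and keeping the adhesions small and the recursion finite forces the boundary-closure steps --- after which one still has to verify that these closure steps do not reintroduce a balanced separation inside the bags, i.e. do not destroy unbreakability. Making the constants line up --- bag-unbreakability threshold $q = 2^{\Oh(\sepord)}$, adhesion-unbreakability threshold $2\sepord$, adhesion size $2^{\Oh(\sepord)}$ --- requires choosing $q$ and the sizes of the representative sets $P$ and $Q$ in a mutually consistent way, and this interplay is where essentially all of the difficulty lies; the separation-finding subroutine, though technical, is by now fairly standard machinery.
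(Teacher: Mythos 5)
This theorem is a black box in the paper: it is attributed to~\cite{CyganLPPS14} and invoked without proof, so there is no in-paper argument to compare your sketch against. Your high-level framing --- a top-down recursive split using small-order separations, with the constants from a randomized-contractions / recursive-understanding style derandomized via colour coding --- is in the right family of ideas, and that is indeed where the cited construction lives. But as written the sketch has a gap that is not a detail.

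Your key lemma, that a minimum-order $P$--$Q$ separator $A\cap B$ is $(2\sepord,\sepord)$-unbreakable, is vacuously true: $|A\cap B|\le \sepord$, so no separation of $G$ can put more than $2\sepord$ vertices of it on each side. The theorem, however, allows adhesions as large as $2^{\Oh(\sepord)}$; in your construction the only mechanism that could produce adhesions larger than $\sepord$ is the ``boundary-closure'' step, and you leave that step essentially unspecified. You acknowledge that one must check the closure does not destroy bag-unbreakability, but you do not address the more immediate issue that the closure may produce adhesions of size between $\sepord$ and $2^{\Oh(\sepord)}$, for which your uncrossing argument says nothing. So precisely the part of the statement that is nontrivial --- keeping large adhesions simultaneously bounded in size and $(2\sepord,\sepord)$-unbreakable while the recursion carries boundary downward --- is the part the sketch does not establish. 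A secondary loose end: you certify unbreakability of a piece $V(G')$ ``in $G$'' but run the recursion on the induced piece $G'$; passing between separations of $G'$ (with $S$ made a clique) and separations of $G$ needs an explicit argument, since these are different objects. If you want to reconstruct the result rather than cite it, the place to start is the structure theorem and its proof in~\cite{CyganLPPS14}, where these two issues are exactly what the technical work is about.
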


\begin{lemma}\label{lem:unbreakableLocal}
Let $G$ be a graph and suppose there exists a decomposition $(T,\X)$ of $G$ of 
width less than $\tdw$, adhesion $\tdadh$, and adhesion degree $\tddeg$.
Let $(T',\Y)$ be a tree decomposition of $G$ such that for each $t\in V(T')$, the bag $\Y(t)$ is $(2^{\Oh(\tdadh)}, \tdadh)$-unbreakable in $G$.
Then $|\Y(t)| \leq \tdw + \tddeg \cdot 2^{\Oh(\tdadh)}$.
\end{lemma}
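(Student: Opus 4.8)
The plan is to first \emph{normalize} $(T,\X)$ so that it has bounded maximum degree, and then locate a single bag of the normalized decomposition that, by unbreakability, must already contain almost all of any prescribed bag $\Y(t)$. Fix $t \in V(T')$ and write $W := \Y(t)$; recall $W$ is $(q,\tdadh)$-unbreakable in $G$ for $q = 2^{\Oh(\tdadh)}$.

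\emph{Step 1 (degree reduction).} First I would turn $(T,\X)$ into a tree decomposition $(\hat T,\hat\X)$ of $G$ of adhesion at most $\tdadh$, width still less than $\tdw$, but with maximum degree at most $\tddeg + 3$. Root $T$ arbitrarily; at each node $s$ its children fall into at most $\tddeg$ classes according to the adhesion $\X(s)\cap\X(\cdot)$ (this is the only place the adhesion-degree bound enters). For a class consisting of children $c_1,\dots,c_d$ with common adhesion $S$, I replace the $d$ edges from $s$ by a path on fresh nodes $u_1,\dots,u_d$, all carrying the bag $S$, make $u_1$ a child of $s$, and hang the former subtree of $c_\ell$ below $u_\ell$. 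Since $S \subseteq \X(c_\ell)$, every new edge has adhesion exactly $S$, so the adhesion does not grow; the new bags have size $|S|\le\tdadh\le\tdw$, so the width does not grow; and validity of the decomposition is preserved because a vertex lying in two consecutive original bags $\X(s),\X(c_\ell)$ must lie in $S$, hence in every new bag on the connecting path. After this, original nodes have degree at most $\tddeg + 1$ and fresh nodes degree at most $3$. Note $W$ remains $(q,\tdadh)$-unbreakable in $G$, as that is a property of $G$ and $W$ alone.

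\emph{Step 2 (a balanced node).} For an edge $e = ss'$ of $\hat T$, let $D_{ss'}$ be the set of vertices appearing in some bag on the $s'$-side of $\hat T - e$ but not in $\hat\X(s)$. A routine use of the connectivity axiom shows $D_{ss'}$ is exactly the ``interior'' of the $s'$-side of the separation induced by $e$, that this separation has order at most $\tdadh$, and that $D_{ss'}$ and $D_{s's}$ are disjoint. I claim there is a node $s$ with $|W \cap D_{ss'}| \le q$ for every neighbour $s'$ of $s$. Otherwise every node $s$ has a neighbour $n(s)$ with $|W \cap D_{s\,n(s)}| > q$; the map $n$ on the finite vertex set of the tree $\hat T$ has a cycle, which (as $\hat T$ is acyclic) must be a $2$-cycle $u\leftrightarrow v$. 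Then both interiors of the separation induced by $uv$ meet $W$ in more than $q$ vertices, contradicting $(q,\tdadh)$-unbreakability of $W$.

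\emph{Step 3 (conclusion) and the main obstacle.} For such a node $s$, the sets $\hat\X(s)$ and $\{D_{ss'}: s'\sim s\}$ cover $V(G)$ and the $D_{ss'}$ are pairwise disjoint, so $W \setminus \hat\X(s) = \bigcup_{s'\sim s}(W \cap D_{ss'})$ has size at most $\deg_{\hat T}(s)\cdot q \le (\tddeg+3)q$. Hence $|W| \le |\hat\X(s)| + (\tddeg+3)q \le \tdw + (\tddeg+3)q = \tdw + \tddeg\cdot 2^{\Oh(\tdadh)}$, absorbing the additive $3q$ into the $\Oh(\cdot)$ (when $\tddeg=0$ the tree $\hat T$ is a single node and $|W|\le\tdw$ outright). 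I expect Step 1 to be the delicate part: one must check carefully that collapsing a family of equal-adhesion children into a path yields a genuine tree decomposition and raises neither the adhesion nor the width. Skipping this step and running the centroid argument directly on $(T,\X)$ does not work, because the union of the subtrees of one adhesion class at the centroid can a priori still contain many vertices of $W$ even though each individual child subtree contributes at most $q$ of them; the degree reduction is precisely what rules this out. Steps 2 and 3 are then the standard tree-centroid argument packaged with the unbreakability hypothesis.
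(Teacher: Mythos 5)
Your proof is correct, and it takes a genuinely different route from the paper's. The paper runs the centroid (edge-orientation) argument directly on the given decomposition $(T,\X)$, finds a sink node $t_0$ whose every neighbour-subtree contributes at most $2^{\Oh(\tdadh)}$ vertices of $\Y(t)$, groups the neighbours by their adhesion into at most $\tddeg$ classes, and then handles the difficulty you identify (that a whole class could a priori contribute many vertices) by a direct pigeonhole/partition argument: if a class contributed more than $3\cdot 2^{\Oh(\tdadh)}$ vertices of $\Y(t)$, one could split the class into two halves each contributing more than $2^{\Oh(\tdadh)}$, and the common adhesion $S$ of that class would then be a separator of order at most $\tdadh$ witnessing that $\Y(t)$ is breakable. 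You instead pre-process $(T,\X)$ by stringing each same-adhesion class into a path of fresh nodes with bag $S$, which bounds the maximum degree by $\tddeg+\Oh(1)$, and then the standard centroid argument finishes on its own. The two approaches buy slightly different things: yours isolates the adhesion-degree bound in a clean, reusable decomposition normalization (and you correctly verify it preserves width and adhesion); the paper's avoids modifying the decomposition at the cost of an ad hoc splitting argument at the single node $t_0$. Both give the stated bound $\tdw + \tddeg\cdot 2^{\Oh(\tdadh)}$, with your handling of $\tddeg=0$ as a separate trivial case and the additive $3q$ absorbed when $\tddeg\geq 1$ being fine.
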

\begin{proof}
Consider a bag $Y=\Y(t)$ of the second decomposition, or any $(2^{\Oh(\tdadh)}, \tdadh)$-unbreakable set $Y\subseteq V(G)$.
For every edge $tt' \in E(T)$ of the $\tdadh$-decomposition, removing it partitions $T$ into subtrees $T_{t't}$ and $T_{tt'}$ containing $t'$ and $t$, respectively. The induced separation $(\X(T_{tt'}),\X(T_{t't}))$ on $G$ has order at most $\tdadh$, so one of the sides, say $\X(T_{t't}) \setminus \X(t)$, contains at most $2^{\Oh(\tdadh)}$ vertices of $Y$, by definition of unbreakability. Let us orient the edge $tt'$ away from $t'$ (the `smaller' side). By orienting each edge of $T$ this way, we find a single node $t_0$ of $T$ such that all incident edges point to it. That is, for every neighbor $t'$ of $t_0$, we have $|Y \cap \X(T_{t't_0}) \setminus \X(t_0)| \leq 2^{\Oh(\tdadh)}$.

While $t_0$ can have many neighbors, we can group those neighbors $t'$ according the adhesion $\X(t')\cap \X(t_0)$ to which they correspond.
By the adhesion degree bound, there are at most $\tddeg$ such groups.
If for any such group $U \subseteq N(t_0) \subseteq V(T)$, the union $\bigcup_{t'\in U} \X(T_{t't_0}) \setminus \X(t_0)$ contained more than $3 \cdot 2^{\Oh(\tdadh)}$ vertices of $Y$, then the group can be partitioned into two parts with more than $2^{\Oh(\tdadh)}$ vertices of $Y$ each. This would give a separation of order $\tdadh$ of $G$ with too many vertices of $Y$ on both sides, contradicting its unbreakability.
Therefore, there are at most $\tddeg$ groups, each containing at most $3 \cdot 2^{\Oh(\tdadh)}$ vertices of $Y$, thus the size of $Y$ is bounded by $|\X(t_0)| + \tddeg \cdot 3 \cdot 2^{\Oh(\tdadh)}$
\end{proof}
Thus, the decomposition computed by the algorithm of Theorem~\ref{thm:unbreakable}
approximates the desired decomposition of a decomposable graph class.
\begin{corollary}\label{cor:apx-decomp}
Let $\GG$ be a $(\tdw, \tdadh)$-decomposable graph class.
Then, for every $G \in \GG$ and every integer $k$, one can
in $2^{\Oh(\tdadh^2)} |V(G)|^2 |E(G)|$ time either correctly conclude that $G$ admits a $k$-path,
or find a tree decomposition of $G$
of width at most $(\tdw(k)+1)^{\Oh(\tdadh)}$
and adhesion at most $2^{\Oh(\tdadh)}$.
\end{corollary}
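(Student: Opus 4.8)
The plan is to run the algorithm of Theorem~\ref{thm:unbreakable} on $G$ with parameter $\sepord := \tdadh$, and then turn the object it produces into the desired win/win by thresholding on the bag sizes. Concretely, Theorem~\ref{thm:unbreakable} produces, in time $2^{\Oh(\tdadh^2)}|V(G)|^2|E(G)|$, a connected tree decomposition $(T',\Y)$ in which every bag is $(2^{\Oh(\tdadh)},\tdadh)$-unbreakable in $G$ and every adhesion has at most $2^{\Oh(\tdadh)}$ vertices. The adhesion bound $2^{\Oh(\tdadh)}$ is therefore immediate and unconditional; the only quantity left to control is the width, i.e., the maximum bag size.

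For the analysis, suppose first that $G$ has no $k$-path. Since $\GG$ is $(\tdw,\tdadh)$-decomposable, $G$ then admits \emph{some} tree decomposition of width less than $\tdw(k)$ and adhesion at most $\tdadh$; by the elementary counting bound recorded in the preliminaries, $\sum_{i=0}^{\tdadh}\binom{\tdw(k)}{i}\le (1+\tdw(k))^{\tdadh}$, such a decomposition has adhesion degree at most $\tddeg := (1+\tdw(k))^{\tdadh}$. Feeding this decomposition, whose mere existence is all that is needed, together with the unbreakable bags of $(T',\Y)$, into Lemma~\ref{lem:unbreakableLocal} yields that every bag of $(T',\Y)$ has size at most $\tdw(k) + \tddeg\cdot 2^{\Oh(\tdadh)}$, which collapses to $(\tdw(k)+1)^{\Oh(\tdadh)}$; call this explicit bound $B(k)$.

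The algorithm is then: compute $(T',\Y)$ as above and inspect its maximum bag size. If that size exceeds $B(k)$, then by the contrapositive of the previous paragraph $G$ must contain a $k$-path, so we report ``$G$ admits a $k$-path''. Otherwise $(T',\Y)$ is itself a tree decomposition of width at most $B(k)\le(\tdw(k)+1)^{\Oh(\tdadh)}$ and adhesion at most $2^{\Oh(\tdadh)}$, which we return. The running time is dominated by the single call to Theorem~\ref{thm:unbreakable}, giving the claimed $2^{\Oh(\tdadh^2)}|V(G)|^2|E(G)|$ bound.

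There is no deep obstacle here; the point to get right is that the algorithm never needs to actually find the decomposition witnessing $(\tdw,\tdadh)$-decomposability — that decomposition enters only in the analysis, through Lemma~\ref{lem:unbreakableLocal}, to certify the width bound, and the algorithm converts the \emph{absence} of that certificate (a bag larger than $B(k)$) into a guaranteed $k$-path. The routine bookkeeping is to check that the unbreakability parameters delivered by Theorem~\ref{thm:unbreakable}(i), namely $(2^{\Oh(\tdadh)},\tdadh)$, are exactly the hypotheses of Lemma~\ref{lem:unbreakableLocal}, and to track the hidden constants so that $\tdw(k)+(1+\tdw(k))^{\tdadh}\cdot 2^{\Oh(\tdadh)}$ indeed simplifies to $(\tdw(k)+1)^{\Oh(\tdadh)}$.
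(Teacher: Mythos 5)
Your proposal is correct and matches the paper's intended argument exactly: the corollary is presented there as a direct combination of Theorem~\ref{thm:unbreakable} (run with $\sepord = \tdadh$) and Lemma~\ref{lem:unbreakableLocal} applied to the \emph{unknown} decomposition witnessing decomposability, with the adhesion-degree bound $(1+\tdw(k))^{\tdadh}$ from the preliminaries. Your observation that the witnessing decomposition is used only in the analysis, and that an oversized bag is converted into a certificate of a $k$-path, is precisely the intended win/win.
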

Let us now combine all the above. That is, given an integer $k$ and a graph $G$ from a hereditary $(\tdw, \tdadh)$-decomposable graph class
$\GG$, we start by computing the tree decomposition of Corollary~\ref{cor:apx-decomp} (or conclude there is a $k$-path).
In general this approximated decomposition has adhesion degree $(\tdw(k)+1)^{2^{\Oh(\tdadh)}}$.
We use this decomposition to find separations of any induced subgraphs of $G$ using the algorithm of Lemma~\ref{lem:decomp2sep}  in time $2^{\Oh(\tdadh)}$ times linear in the size of $G$ and the computed decomposition. This gives a $(\tdadh,\sepmax,T)$-separation oracle with $\sepmax(k,\sepmin)=\sepmin \cdot (\tdw(k)+1)^{2^{\Oh(\tdadh)}}$ and $T(n,k,\sepmin)=2^{\Oh(\tdadh)} \cdot n\cdot (\tdw(k)+1)^{\Oh(\tdadh)}$, for any hereditary $(\tdw, \tdadh)$-decomposable graph class.
By plugging it into Lemma~\ref{lem:generic}, we obtain the following.
\begin{corollary}\label{cor:kernel-decomp}
Let $\GG$ be a hereditary $(\tdw, \tdadh)$-decomposable graph class.
Then, the \kPath{} problem, restricted to graphs from $\GG$, can be solved
in time $2^{2^{\Oh(\tdadh)}} |V(G)|^2 |E(G)|$
using
$2^{2^{\Oh(\tdadh)}} kn$
    calls to \kLinkage{} on induced subgraphs of the input graph
of size
$k^2(1 + \tdw(k))^{2^{\Oh(\tdadh)}}$.
\end{corollary}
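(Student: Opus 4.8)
The plan is to chain together the results already assembled in this section. Given a hereditary $(\tdw,\tdadh)$-decomposable class $\GG$, a graph $G \in \GG$, and an integer $k$, I would first invoke Corollary~\ref{cor:apx-decomp}: in time $2^{\Oh(\tdadh^2)} |V(G)|^2 |E(G)|$ this either reports a $k$-path (in which case we are done) or returns a tree decomposition $(T,\Y)$ of $G$ of width at most $(\tdw(k)+1)^{\Oh(\tdadh)}$ and adhesion at most $2^{\Oh(\tdadh)}$. The next step is to bound the adhesion degree of this decomposition: since its width is less than $\tdw' := (\tdw(k)+1)^{\Oh(\tdadh)}$ and its adhesions have size at most $h' := 2^{\Oh(\tdadh)}$, the generic bound $(1+\tdw')^{h'} \le (\tdw(k)+1)^{2^{\Oh(\tdadh)}}$ from the preliminaries applies, so the decomposition has adhesion degree $\tddeg \le (\tdw(k)+1)^{2^{\Oh(\tdadh)}}$.

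With these parameters in hand, Lemma~\ref{lem:decomp2sep} turns the decomposition into a separation oracle: given any $\sepmin$ with $|V(G)|>\sepmin$, it produces in time $\tdadh^{\Oh(1)}$ times linear in the size of $G$ plus the decomposition a separation $(A,B)$ of order at most $\tdadh' := 2^{\Oh(\tdadh)}$ with $\sepmin < |A| \le \tdw' + \sepmin\cdot\tddeg \le \sepmin\cdot(\tdw(k)+1)^{2^{\Oh(\tdadh)}}$ (absorbing the additive $\tdw'$ into the product, which is valid once $\sepmin\ge 1$). Since $\GG$ is hereditary, running Lemma~\ref{lem:decomp2sep} on (the restriction of the decomposition to) any induced subgraph of $G$ still works, so this is a $(\tdadh', \sepmax, T)$-separation oracle in the sense of Definition~\ref{def:separationOracle} with $\sepmax(k,\sepmin) = \sepmin\cdot(\tdw(k)+1)^{2^{\Oh(\tdadh)}}$ and $T(n,k,\sepmin) = 2^{\Oh(\tdadh)}\cdot n\cdot(\tdw(k)+1)^{\Oh(\tdadh)}$ (the size of $(T,\Y)$ being polynomial in $n$ times its width). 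Note the decomposition is computed once, outside the loop of Lemma~\ref{lem:generic}.

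Finally I plug this separation oracle into Lemma~\ref{lem:generic} with $\sepord = \tdadh' = 2^{\Oh(\tdadh)}$, so $\hat{\sepord} = (2\sepord)^{4\sepord+3} = 2^{2^{\Oh(\tdadh)}}$. Lemma~\ref{lem:generic} then solves \kPath{} on $\GG$ in time $\Oh(T_\mathcal{S}(|G|,k,k^2\hat{\sepord})\cdot|V(G)| + k\hat{\sepord}\cdot|V(G)|\cdot|E(G)|)$, using at most $k\hat{\sepord}\cdot|V(G)|$ calls to \kLinkage{}, each on an induced subgraph of size at most $\sepmax(k, k^2\hat{\sepord}) = k^2\hat{\sepord}\cdot(\tdw(k)+1)^{2^{\Oh(\tdadh)}}$. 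Substituting $\hat{\sepord} = 2^{2^{\Oh(\tdadh)}}$ and noting $T_\mathcal{S}(|G|,k,k^2\hat{\sepord})\cdot|V(G)|$ is dominated by the one-time cost $2^{\Oh(\tdadh^2)}|V(G)|^2|E(G)|$ of Corollary~\ref{cor:apx-decomp}, the time becomes $2^{2^{\Oh(\tdadh)}}|V(G)|^2|E(G)|$, the number of oracle calls becomes $2^{2^{\Oh(\tdadh)}}kn$, and the call size becomes $k^2(1+\tdw(k))^{2^{\Oh(\tdadh)}}$ (the factor $\hat{\sepord}=2^{2^{\Oh(\tdadh)}}$ being absorbed into the exponent $2^{\Oh(\tdadh)}$ of the $(1+\tdw(k))$ term), exactly matching the claimed bounds.

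The proof is essentially bookkeeping; the only real care needed is tracking how the double-exponential $\hat{\sepord}$ from Lemma~\ref{lem:generic} interacts with the already-double-exponential width blow-up of Corollary~\ref{cor:apx-decomp}, and checking that all these $2^{2^{\Oh(\tdadh)}}$ and $(\tdw(k)+1)^{2^{\Oh(\tdadh)}}$ factors collapse into the stated forms — there is no combinatorial obstacle, just the need to be consistent about which quantities are constants (functions of $\tdadh$ only) versus functions of $k$.
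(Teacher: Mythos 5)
Your proof follows the paper's argument essentially verbatim: compute the approximate decomposition of Corollary~\ref{cor:apx-decomp}, bound its adhesion degree by the generic $(1+\ell)^h$ estimate, feed that into Lemma~\ref{lem:decomp2sep} to obtain a separation oracle (restricting the precomputed decomposition to the current induced subgraph, using heredity), and plug the result into Lemma~\ref{lem:generic}. If anything you are slightly more careful than the paper's exposition, which loosely writes ``$(\tdadh,\sepmax,T)$-separation oracle'' where the order is really $2^{\Oh(\tdadh)}$ (the adhesion of the approximate decomposition); you correctly introduce $\tdadh'=2^{\Oh(\tdadh)}$ and track the resulting $\hat{\sepord}=2^{2^{\Oh(\tdadh)}}$ through Lemma~\ref{lem:generic}, which is exactly what makes the final bounds come out.
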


In the next section, we prove that $H$-minor-free graphs are $(\Oh_H(k^{22}), \Oh_H(1))$-decomposable by analyzing the Global Structure Theorem of $H$-minor-free graphs
due to Robertson and Seymour. A subsequent section provides an analogous result for graphs excluding a fixed topological minor.
In both cases we also get better bounds on the adhesion degree of the approximate decomposition outputted by Theorem~\ref{thm:unbreakable}, improving the bounds in the final kernel.

We would like to remark that we do not want to claim in this paper the idea that, in the context of $H$-(topological)-minor-free graphs, the decomposition of Theorem~\ref{thm:unbreakable}
should be related to the decomposition of the Global Structure Theorem via an argument as in the proof of Lemma~\ref{lem:unbreakableLocal}.
In particular, this observation appeared previously in a work of the second author with Daniel Lokshtanov, Micha\l{} Pilipczuk, and Saket Saurabh~\cite{lpps-manuscript}.

\section{Excluding a minor}\label{sec:minors}
In this section we tackle proper minor-closed graph classes, that is, we prove Theorem~\ref{thm:topo-minor-main} for graph classes excluding a fixed minor, by proving the following.
\begin{theorem}\label{thm:minor-main-aux}
For every fixed graph $H$, the \kPath{} problem restricted to $H$-minor-free graphs
can be solved in time
$\Oh_H(n^2m)$
using $\Oh_H(kn)$
calls to \kLinkage{} on instances being induced subgraphs of the input graph
of size $\Oh_H(k^{24})$.
\end{theorem}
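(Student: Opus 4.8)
The plan is to invoke the machinery from Section~\ref{sec:turing}, so the whole theorem reduces to establishing that $H$-minor-free graphs form a $(\tdw,\tdadh)$-decomposable class with $\tdw(k) = \Oh_H(k^{22})$ and $\tdadh = \Oh_H(1)$, plus a bound on the adhesion degree good enough to make the constants in Corollary~\ref{cor:kernel-decomp} polynomial rather than merely computable. Concretely: given an $H$-minor-free graph $G$ with no $k$-path, I want a tree decomposition of width $\Oh_H(k^{22})$ and adhesion $\Oh_H(1)$. Feeding $\tdadh = \Oh_H(1)$ into Corollary~\ref{cor:kernel-decomp} gives the time bound $\Oh_H(n^2 m)$, the $\Oh_H(kn)$ oracle calls, and query size $k^2(1+\tdw(k))^{\Oh_H(1)} = \Oh_H(k^{24})$ — which is exactly what the statement claims, once the exponent arithmetic $2 + 22\cdot(\text{constant})$ is tuned so that the constant is $1$ (i.e. one needs the adhesion-degree bound from Lemma~\ref{lem:adh-deg-bound} to be linear, so that the $2^{\Oh(\tdadh)}$ exponent in $\sepmax$ collapses to a bounded constant). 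So the real content is the decomposability statement.

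To prove decomposability I would start from the Robertson--Seymour Global Structure Theorem: $G$ has a tree decomposition of adhesion $\Oh_H(1)$ in which every torso is $h$-nearly embeddable in a surface of bounded genus, for $h = \Oh_H(1)$ — that is, after deleting $\Oh_H(1)$ apex vertices, the torso consists of a part embedded in a bounded-genus surface together with $\Oh_H(1)$ vortices of bounded depth attached along the embedding. The adhesion is already $\Oh_H(1)$, so what must be bounded is the \emph{width}, i.e. the size of each bag. Assume some bag is huge. After removing the $\Oh_H(1)$ apices and accounting for the virtual edges coming from adhesions (each adhesion is a clique of size $\Oh_H(1)$ in the torso), a huge bag must contain a huge embedded part, or a huge vortex. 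The key classical inputs are: (a) a graph embeddable in a fixed surface is $K_{3,t}$-minor-free for some $t = \Oh_H(1)$, and a triconnected $K_{3,t}$-minor-free $n$-vertex graph contains a path of length $\Omega_t(n^{\varepsilon})$ \cite{ChenYZ12}; (b) a vortex of bounded depth has bounded pathwidth along its disc boundary, so a long vortex likewise yields a long path. In either case a bag of size $\Omega_H(k^{22})$ forces a path of length $k$ inside $G$, contradiction — so every bag has size $\Oh_H(k^{22})$, as desired. The exponent $22$ comes from chaining the circumference exponent $\varepsilon$ from \cite{ChenYZ12} through the apex/vortex bookkeeping.

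Several technical nuisances have to be handled and I expect them to be the main obstacle. First, the torsos contain virtual edges, and the ``long path'' one finds lives in the torso, not in $G$ — one must reroute virtual edges through the corresponding subtrees of the decomposition (each virtual edge, being an adhesion, is realizable by a connected piece of $G$ touching all adhesion vertices, using the connectedness normalization of Section~\ref{sec:prelims}), which is where disjointness of the reroutings and the constant adhesion degree get used. Second, the circumference lower bound \cite{ChenYZ12} requires triconnectivity, so before applying it one must pass to a triconnected ``piece'' of the embedded part via a further (3-connected-components / SPQR-style) decomposition and argue that a huge embedded part still contains a huge triconnected chunk after that refinement, all while keeping adhesion $\Oh_H(1)$. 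Third, the apex vertices and the vortex-boundary attachments inflate the effective adhesion of the refined decomposition by an additive $\Oh_H(1)$, which is fine, but must be tracked carefully. Finally, for the algorithmic claim one does \emph{not} compute the Global Structure decomposition directly (there is no fast exact algorithm); instead one uses it purely existentially, and then obtains an actual usable decomposition algorithmically via Corollary~\ref{cor:apx-decomp} (the unbreakable decomposition of Theorem~\ref{thm:unbreakable} combined with Lemma~\ref{lem:unbreakableLocal}), which is why the hereditary hypothesis and the abstract decomposability framework were set up in the first place. The adhesion-degree improvement needed to hit exponent $24$ rather than something larger comes from Lemma~\ref{lem:adh-deg-bound}'s linear bound on the number of bounded-size cliques in $H$-minor-free graphs.
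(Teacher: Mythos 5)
Your high-level plan is right and matches the paper exactly: reduce to showing that $H$-minor-free graphs are $(\tdw,\tdadh)$-decomposable with $\tdw(k)=\Oh_H(k^{22})$ and $\tdadh=\Oh_H(1)$ (Theorem~\ref{thm:minor-decomposable}), bring in Lemma~\ref{lem:adh-deg-bound} to get a linear adhesion-degree bound so the $k^{24}$ drops out of $\sepmin\cdot\tddeg$, and route everything through Theorem~\ref{thm:unbreakable}, Lemma~\ref{lem:unbreakableLocal}, Lemma~\ref{lem:decomp2sep}, and Lemma~\ref{lem:generic}. Your exponent accounting is also correct.

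The gap is in the decomposability argument. You take the Global Structure Theorem as a black box and then say you will (a) recover triconnectivity via an SPQR-style refinement, and (b) realize virtual edges as paths by rerouting through subtrees of the tree decomposition. The paper explicitly flags that these two patches \emph{conflict}: the edges that make the embedded part triconnected are not the adhesion-clique edges of the torso (which can indeed be rerouted through subtrees) but the auxiliary edges around vortices --- edges between consecutive society vertices and wheel hubs --- and these live inside a single bag, so there is no subtree to reroute through. Conversely, if you drop those edges to make the paths realizable in $G$, you may destroy triconnectivity and hence the applicability of Chen et al.\ The paper's resolution is to not start from the global theorem at all: it uses the \emph{local} structure theorem of Graph Minors~XVII in a form (Theorem~\ref{thm:GMlocal} plus Lemmas~\ref{lem:properlyAttached}, \ref{lem:localConnected}, \ref{lem:localSmall}) that simultaneously gives (i) $G_0^*$ triconnected via (11.1), (ii) small-vortex society pairs joined by internally-society-avoiding paths inside the vortex via (9.1), so consecutive-society virtual edges \emph{are} realizable in $G$, and (iii) a comb through each large-vortex society, so that a large society directly yields a long path. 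It then rebuilds the global decomposition from this enriched local version following Diestel et al.

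Your item (b) --- ``a vortex of bounded depth has bounded pathwidth along its disc boundary, so a long vortex likewise yields a long path'' --- is also not a valid step on its own. Bounded pathwidth of a vortex says nothing about the existence of long paths (stars have pathwidth one), and in the Global Structure Theorem's definition of near embedding there is no comb. It is precisely the comb from the local theorem (property~\ref{p:comb}) that turns a large society into a long path in $G$, which is why the paper stresses that the comb contains a path of length at least $|\Omega_i|$. Without this, the ``huge vortex $\Rightarrow$ long path'' branch of your dichotomy has no proof.
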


Our main technical result is the following:

\begin{theorem}\label{thm:minor-decomposable}
For every graph $H$, the class of $H$-minor-free graphs is $(\tdw, \tdadh)$-decomposable
for $\tdw(k) = \Oh_H(k^{22})$ and $\tdadh = \Oh_H(1)$.
\end{theorem}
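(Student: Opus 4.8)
Note first that the \emph{width} bound in isolation is easy: the tree decomposition built from a DFS tree of a graph with no $k$-path has width $<k$. The difficulty --- and the reason a polynomial loss appears --- is to achieve this \emph{together with constant adhesion}. The plan is to obtain both from the Global Structure Theorem of Robertson and Seymour, which provides, for every $H$-minor-free $G$, a tree decomposition $(T,\X)$ of adhesion at most $h=\Oh_H(1)$ in which every torso $\torso(G,\X(t))$ is $h$-nearly-embeddable: deleting at most $h$ apices leaves a graph embeddable in a surface of genus at most $h$, up to at most $h$ vortices of width at most $h$. This already yields $\tdadh=\Oh_H(1)$, so the entire task reduces to the following: assuming $G$ has no $k$-path, refine each node $t$ by a tree decomposition of $\torso(G,\X(t))$ of width $\Oh_H(k^{22})$ and adhesion $\Oh_H(1)$, and substitute it for $t$. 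The substitution is sound because each adhesion of $(T,\X)$ is a clique of the corresponding torso and hence lies in a single bag of any tree decomposition of that torso; gluing along such bags keeps the width equal to the maximum refined-bag size and the adhesion equal to $\Oh_H(1)$ inside each refinement and $<h$ across refinements.

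A preliminary step is to make $(T,\X)$ connected in the sense of Section~\ref{sec:prelims} --- this only simplifies the torsos, which remain $h$-nearly-embeddable --- so that structure found inside a torso can be pushed back into $G$. A Robertson--Seymour virtual edge $uv$ of $\torso(G,\X(t))$ comes from a child $t'$, and connectedness makes $\X(T_{t'})\setminus\X(t)$ a connected set adjacent to every vertex of the adhesion $\X(t)\cap\X(t')$; thus $uv$ can be replaced by a path with interior in that piece, and distinct children supply vertex-disjoint pieces. Consequently a long cycle or path in a torso that uses each such hanging piece at most once lifts to a long path in $G$.

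The core is bounding a nearly-embeddable torso $G'=\torso(G,\X(t))$ under the assumption that $G$ has no $k$-path. Place the $\le h$ apices into every bag of the refinement, an additive $\Oh_H(1)$ in width, so it remains to decompose the surface part $G_0'$ together with at most $h$ vortices of width at most $h$. Decompose $G_0'$ into its triconnected components (an SPQR-type tree, of adhesion at most $2$). Since $G_0'$ embeds in a fixed surface it is $K_{3,t_0}$-minor-free for some $t_0=\Oh_H(1)$, so by the circumference lower bound for triconnected $K_{3,t_0}$-minor-free graphs of Chen, Yu and Zhou~\cite{ChenYZ12}, a triconnected component on $N$ vertices contains a cycle, hence a path, of length $\Omega_H(N^{\varepsilon})$ for some $\varepsilon=\varepsilon(H)>0$. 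Rerouting that cycle's edges --- the SPQR virtual edges through the corresponding $2$-separated parts of $G_0'$, then the Robertson--Seymour virtual edges through the hanging pieces of the previous paragraph --- yields a path of length $\Omega_H(N^{\varepsilon})$ in $G$ itself. So under our assumption every triconnected component of $G_0'$ has $\Oh_H(k^{1/\varepsilon})$ vertices, and the SPQR tree is a tree decomposition of $G_0'$ of adhesion at most $2$ and width $\Oh_H(k^{1/\varepsilon})$; tracking the polynomial blow-ups through the chain of reductions yields the claimed width $\Oh_H(k^{22})$. Each vortex is then inserted along its disk-boundary cycle using its width-$\le h$ path decomposition, again adding only $\Oh_H(1)$ to width and adhesion, the apices are reinserted, and the refinement is substituted into $(T,\X)$.

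I expect the main obstacle to be precisely this last assembly: keeping the adhesion constant while (i)~rerouting a long substructure of a triconnected component back into $G$ so that no hanging piece and no $2$-separated part is reused --- which forces one to control how often such a substructure can re-enter the same adhesion of $(T,\X)$ --- and (ii)~merging the bounded-width path decompositions of the vortices into the triconnected-component decomposition along boundary cycles that may traverse many triconnected components. A further difficulty, irrelevant to the present \emph{existence} statement but central to the algorithmic Theorem~\ref{thm:minor-main-aux}, is that the Global Structure Theorem is not known to be polynomial-time computable; this is why the algorithm instead uses the unbreakable decomposition of Theorem~\ref{thm:unbreakable}, via Corollary~\ref{cor:apx-decomp}, as a polynomial-time proxy.
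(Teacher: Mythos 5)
Your proposal takes a genuinely different route from the paper, and it stalls exactly where you suspect it might.

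You start from the ordinary Global Structure Theorem (nearly-embeddable torsos, constant adhesion), make the decomposition connected, take an SPQR decomposition of the embedded part of a torso, apply the Chen--Yu--Zhou circumference bound (Theorem~\ref{thm:embeddedCircumference}) to a large triconnected piece, and then ``reroute'' the resulting long cycle through hanging pieces to get a long path in $G$. The last step is the gap you yourself flag, and it is fatal for this plan as stated: a single long cycle inside a triconnected SPQR component can contain several non-adjacent virtual edges that all come from the \emph{same} adhesion $\X(t)\cap\X(t')$, since the torso turns that adhesion into a clique. Connectedness only guarantees that the one hanging piece $\X(T_{t'})\setminus\X(t)$ is a connected set adjacent to all adhesion vertices --- it may well be a single vertex --- so the rerouting paths for two such virtual edges can be forced to collide. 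You would need a guarantee that the cycle re-enters a given adhesion only a bounded number of consecutive times, and the clique torso gives you no such control. There is a second, compounding issue: the SPQR virtual edges of $G_0'$ must themselves be rerouted through $2$-separated parts of $G_0'$, which contain the large vortices and further Robertson--Seymour virtual edges, so the two layers of rerouting interact; and the vortex insertion ``along its disk-boundary cycle'' is not local to a single SPQR node. The paper states explicitly (in the paragraph between Theorem~\ref{thm:embeddedCircumference} and the local definitions) that these are the two intertwined problems that make the obvious global-decomposition approach break down.

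The paper's resolution is to prove a strengthened \emph{local} structure theorem (relative to a tangle) before globalising, so that the triconnectedness and the reroutability are established \emph{simultaneously} on the right object. Three points are essential and are precisely what you cannot get for free from the global torso. First, instead of the torso, the paper works with $G_0^*$ (Definition~\ref{def:G0star}), which adds to $G_0$ only a cycle through each society together with a hub vertex per large vortex --- so societies are wheels, not cliques --- and Lemma~\ref{lem:localConnected} proves $G_0^*$ triconnected directly from (11.1) of Robertson--Seymour. Second, Lemma~\ref{lem:properlyAttached} uses (9.1) of Robertson--Seymour to show small vortices are well attached: every pair of society vertices is joined by a path internally inside that small vortex. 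Since small-vortex societies have size at most $3$, at most two consecutive edges of a cycle in $G_0^*$ can come from the same small vortex, so the rerouting is local and never double-books a vortex. Third, the comb around each large vortex (property~\ref{p:comb}) immediately yields a long path if some society already has $\geq k$ vertices, disposing of the large-vortex case without any rerouting. With those three ingredients, Lemma~\ref{lem:localSmall} runs the circumference argument cleanly: a long cycle in the triconnected $G_0^*$, minus the $\Oh_H(k)$ vortex-related vertices, yields a subpath of length $\geq k$ that reroutes only through small vortices. This is also where the exponent $22$ comes from: the cycle must have length $\Omega(\alpha k^2)$ so that after cutting it at $\Oh(\alpha k)$ vortex-related vertices there remains a piece of length $\geq k$, which forces $|V(G_0)| = \Omega(k^{2/\varepsilon})$ with $\varepsilon = \log_{1729}2$; your sketch uses only one factor of $k$ and would optimistically give $k^{11}$. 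Finally, the global statement (Theorem~\ref{thm:global}) is deduced from the strengthened local one by the standard Diestel et al.\ induction, not by post-processing a fixed global decomposition.

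The high-level win/win and the key circumference input are the same as the paper's, but the mechanism that makes the rerouting sound --- the local structure theorem with combs, well-attached small vortices, and the triconnected $G_0^*$ --- is the part your outline is missing, and it is not a technical convenience but the crux of the proof.
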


By plugging the above into Corollary~\ref{cor:kernel-decomp}, we obtain the desired
polynomial Turing kernel, but with worse
bounds than promised by Theorem~\ref{thm:minor-main-aux}.
To obtain better bounds, we need to recall the folklore bound on the adhesion degree 
in sparse graph classes; for completeness, we provide a full proof in Appendix~\ref{app:adh-deg-bound}.
\begin{lemma}\label{lem:adh-deg-bound}
Let $G$ be a graph not containing $H$ as a topological minor,
and let $(T,\X)$ be a connected tree decomposition 
of $G$ of width less than $\ell$ and adhesion $h$. 
Then the adhesion degree of $(T, \X)$ is bounded by $f(h, H) \cdot \ell$ for some integer $f(h,H)$ depending only on $h$ and $H$.
\end{lemma}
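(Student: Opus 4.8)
The plan is to bound, for a fixed node $t\in V(T)$, the number of distinct adhesions $\X(t)\cap\X(t')$ over neighbors $t'$ of $t$, by counting the distinct $h$-vertex (or smaller) subsets of the bag $\X(t)$ that can occur. Since $|\X(t)|<\ell$, there are at most $\binom{\ell}{\le h}$ candidate subsets in total, so the crude bound $(1+\ell)^h$ from the preliminaries already holds; the point of the lemma is that in a topological-minor-free graph this can be improved to something \emph{linear} in $\ell$. The mechanism is that every adhesion, being a separator that detaches the whole connected piece $\X(T_{t'})\setminus\X(t)$ from the rest of $G$ (and, by connectedness of the decomposition, sees every vertex of the adhesion), behaves like a small clique-ish ``attachment set''; and a graph with no topological minor $H$ cannot have too many such small attachment sets sharing a common bag.

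First I would set up the ``branch sets'' and invoke a known sparsity fact. Fix $t$ and let $S_1,\dots,S_d$ be the distinct adhesions at $t$, with $d$ the adhesion degree we want to bound. For each $i$, pick one neighbor $t'_i$ realizing $S_i$ and let $C_i = \X(T_{t'_i})\setminus\X(t)$ be the corresponding ``branch'' hanging off $t$; by the connectedness assumption (item (i) of the definition of a connected decomposition) we may assume each $C_i$ is connected, and (item (ii)) that every vertex of $S_i$ has a neighbor in $C_i$. The $C_i$ are pairwise disjoint (distinct components of $T-t$ give disjoint vertex sets outside $\X(t)$), and each $C_i$ is entirely ``attached'' to $G$ only through $S_i\subseteq\X(t)$, with $|S_i|\le h$. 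Now the key combinatorial claim is: \emph{a graph that excludes $H$ as a topological minor cannot contain a bag $B$ of size $<\ell$ together with more than $f(h,H)\cdot\ell$ pairwise-disjoint connected sets $C_i$, each disjoint from $B$, each with all of its neighborhood inside $B$, each with $|N(C_i)|\le h$, and the $N(C_i)$ pairwise distinct.} The intuition is that if there were many such $C_i$, then by pigeonhole many of them would attach to the \emph{same} small set $S$ of at most $h$ vertices of $B$; contracting each such $C_i$ to a single vertex adjacent to all of $S$ gives a large ``book'' $K_{h,\text{many}}$-like structure (more precisely, a subdivision of $K_{h,N}$ for large $N$, since each $C_i$ contributes internally vertex-disjoint paths to the $h$ vertices of $S$), which for $N$ large enough contains a subdivision of $H$ — contradiction.

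Concretely, I would carry out the steps as follows. (1) Observe that the number of distinct $h$-subsets $S\subseteq\X(t)$ is at most $\ell^h$, but that is not linear; instead, bound the number of $C_i$'s whose attachment set $S_i$ is contained in a \emph{fixed} $h$-subset $S$: call this bound $g(h,H)$. Then $d\le \ell^h\cdot g(h,H)$ — still not linear. So (2) do the counting more carefully: charge each $S_i$ to one of its vertices $v\in S_i\subseteq\X(t)$; then it suffices to bound, for each fixed $v\in\X(t)$, the number of distinct adhesions $S_i$ containing $v$ — call this $g'(h,H)$ — giving $d\le \ell\cdot g'(h,H)$, which is the claimed linear bound. (3) Prove $g'(h,H)$ is finite: if $v$ lies in many distinct adhesions $S_{i_1},\dots,S_{i_N}$, then $v$ together with the connected sets $C_{i_1},\dots,C_{i_N}$ (each attached to $\le h-1$ other vertices of $\X(t)$, which we may further funnel by pigeonhole so that all $C_{i_j}$ attach to the same $(h-1)$-subset $S'$) yields, after contracting each $C_{i_j}$, a $K_{h,N}$ minor with a model in which the $N$-side branch sets are the $C_{i_j}$ and the $h$-side is $\{v\}\cup S'$ — in fact a \emph{topological} $K_{h,N}$ (a subdivision), because each $C_{i_j}$ is connected and, by connectedness item (ii), reaches all of $S_i$, so inside each $C_{i_j}$ we can route $h$ internally disjoint paths from its attachment vertices. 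Since $H$ is a topological minor of $K_{h,N}$ once $N\ge N_0(h,|V(H)|)$, excluding $H$ forces $N<N_0(h,|V(H)|)=:g'(h,H)$. Setting $f(h,H)=g'(h,H)$ finishes the proof.

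The main obstacle I anticipate is step (3): making the ``many disjoint attached connected sets force a large topological $K_{h,N}$'' argument fully rigorous, in particular the double application of pigeonhole (first to get a common $h$-subset $S$, then to route internally disjoint paths within each $C_i$ to the vertices of $S$). One has to be careful that the paths from different $C_i$'s are genuinely vertex-disjoint (they are, since the $C_i$ are disjoint and lie outside $\X(t)$, and $S\subseteq\X(t)$ is used only as endpoints), and that within a single connected $C_i$ one really can find $|S\cap S_i|$ internally vertex-disjoint paths to the prescribed vertices of $S$ — this uses that $C_i$ is connected and each vertex of $S_i$ has a neighbor in $C_i$, via a standard Menger/fan argument applied to $C_i$ plus the relevant attachment vertices. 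A secondary bookkeeping obstacle is handling the degenerate cases $|S_i|=1$ or $h$ small, but those only make the bound easier. Everything else — the reduction to counting adhesions through a fixed vertex, the final multiplication by $\ell$ — is routine.
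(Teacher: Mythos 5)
Your overall idea—exploit topological‐minor‐freeness to bound how many distinct adhesions can meet a bag, via a charging/extremal argument—is in the right spirit, but the key step (3) is broken in a way that cannot be repaired as written.

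First, the pigeonhole in step (3) is not valid. You start with $N$ \emph{distinct} adhesions $S_{i_1},\dots,S_{i_N}$ all containing $v$, each of size at most $h$, and you want to ``funnel by pigeonhole so that all $C_{i_j}$ attach to the same $(h-1)$-subset $S'$''. But if all the $S_{i_j}$ were equal to $\{v\}\cup S'$ they would be equal to each other, contradicting distinctness; and there are only $2^{|S'|}\le 2^{h-1}$ subsets of $\{v\}\cup S'$, so at most $2^{h-1}$ of the distinct $S_{i_j}$ can be contained in it. Pigeonhole does not let you force many distinct sets of bounded size to coincide, so you never get the $K_{h,N}$ model you wanted.

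Second, and more fundamentally, the quantity you wish to bound, $g'(h,H)=$ ``number of distinct adhesions through a fixed $v\in\X(t)$'', is simply not $\Oh_{h,H}(1)$. Take $\X(t)=\{v,a_1,b_1,\dots,a_m,b_m\}$ and for each $i$ a connected piece $C_i$ attached exactly to $\{v,a_i,b_i\}$ (this is easily realized with a planar graph, hence $H$-topological-minor-free for any $H$ with $|V(H)|\ge 5$, and with adhesions of size $h=3$). Then $v$ lies in $m=\Theta(\ell)$ distinct adhesions. So charging an adhesion to an \emph{arbitrary} one of its vertices yields a per-vertex count that can be linear in $\ell$, and the bound $d\le\ell\cdot g'(h,H)$ degenerates to $\ell^2$. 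The example also shows that the ``many disjoint $C_i$ attached near $v$ force a large topological $K_{h,N}$'' heuristic is false when the $S_i$ do not share enough structure.

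The paper's proof is a more careful version of the same charging idea. It forms the torso $H_t$ of the bag (after replacing each connected piece $C$ by a small Steiner tree $S_C$ inside it), observes that adhesions correspond exactly to cliques of $H_t$, and then proves $H_t$ lies in a class of bounded expansion by tracing a chain of $\Oh_{h,H}(1)$-shallow minor operations starting from the $H$-topological-minor-free graph $G_t$. The number of cliques in a bounded-degeneracy graph is linear in the vertex count; the standard proof of that fact \emph{is} a charging argument, but each clique is charged to its \emph{last} vertex in a degeneracy ordering, not to an arbitrary vertex—this is precisely the refinement your argument is missing. If you want to salvage your route, you would need to (a) establish that the ``attachment graph'' $H_t$ on $\X(t)$ has bounded degeneracy (this does genuinely require the topological-minor-freeness and some form of the shallow-minor/bounded-expansion machinery, since $H_t$ is not a subgraph of $G$), and (b) charge each adhesion to the vertex of $S_i$ that comes last in a fixed degeneracy ordering of $H_t$, so that the other vertices of $S_i$ lie among the $\Oh_{h,H}(1)$ back-neighbors of that vertex, bounding the per-vertex count by $2^{\Oh_{h,H}(1)}$.
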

This way, we conclude that $H$-minor-free graphs without $k$-paths have tree decompositions of width $\Oh_H(k^{22})$, adhesion $\Oh_H(1)$ and adhesion degree $\Oh_H(k^{22})$.
We can use the algorithm of Theorem~\ref{thm:unbreakable} for $\tdadh$ to find (by Lemma~\ref{lem:unbreakableLocal}) an approximate decomposition of width $\Oh_H(k^{22})$, adhesion $\Oh_H(1)$, and thus, again using Lemma~\ref{lem:adh-deg-bound}, of adhesion degree $\Oh_H(k^{22})$.
Theorem~\ref{thm:minor-main-aux} follows from Lemma~\ref{lem:generic}
if we find separations using the algorithm of Lemma~\ref{lem:decomp2sep} applied to this decomposition.

Thus, it remains to prove Theorem~\ref{thm:minor-decomposable}.
For the proof, we use the graph minors structure theorem, decomposing an $H$-minor-free graph~$G$ into parts `nearly embeddable' in surfaces (precise definitions are given in the next subsection). By carefully analyzing details of the structure, we either find a large triconnected embedded part, which must contain a long path by the following theorem of Chen et al.~\cite{ChenYZ12}, or we tighten the graph structure to give a tree decomposition where all parts are small (polynomial in $k$) and adhesions (`boundaries') between them are of constant size.

\begin{theorem}[\cite{ChenYZ12}] \label{thm:embeddedCircumference}
There is a constant $\varepsilon>0$ such that for every integer $t$, every triconnected graph on $n\geq 3$ vertices embeddable in a surface of (Euler) genus $g$ contains a cycle of length at least $n^{\varepsilon} / 2^{(2g+3)^2}$.
\end{theorem}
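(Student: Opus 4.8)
The plan is to first establish the bound for planar graphs ($g=0$) via Tutte's theorem on Tutte paths, and then to lift it to arbitrary Euler genus by an induction on $g$, paying a multiplicative factor at each step.

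For the planar base case we claim there are absolute constants $c,\varepsilon>0$ with $\mathrm{circ}(G)\ge c\,n^{\varepsilon}$ for every $3$-connected planar graph $G$ on $n\ge 3$ vertices. The engine is Tutte's theorem: a $2$-connected plane graph has, between any two vertices and through any prescribed edge of the outer face, a \emph{Tutte path} $P$ --- one whose every bridge (a component of $G-V(P)$ together with its edges to $P$) attaches to $P$ in at most three vertices, and in at most two if it contains an outer-face edge. For $3$-connected $G$ this yields a Tutte cycle $C$ all of whose bridges have attachment set $A_B$ of size at most three. Adding to a bridge $B$ a new vertex joined to $A_B$ gives a strictly smaller $3$-connected plane graph, which by induction has a long cycle; removing the new vertex gives a long $A_B$--$A_B$ path inside $B$. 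Rerouting $C$ through such a path along one (or several non-overlapping) of its arcs lengthens the cycle. Distributing the $n-|C|$ vertices off $C$ among the $O(|C|)$ arcs and rerouting through an arc carrying an $\Omega((n-|C|)/|C|)$ share of them produces a self-improving recursion of the shape $\mathrm{circ}(n)\ge \mathrm{circ}\big(\Omega(n/|C|)\big)+\Omega(|C|)$, optimised over the a priori unknown $|C|$, which solves to $\mathrm{circ}(n)=\Omega(n^{\varepsilon})$ for an explicit $\varepsilon>0$. This Gao--Yu-style amortisation is the technical heart of the proof.

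For the inductive step, let $G$ be $3$-connected and cellularly embedded in a surface $\Sigma$ of Euler genus $g\ge 1$; then $G$ contains a non-separating cycle $C$, since the cycle space of $G$ surjects onto the first homology of $\Sigma$, which is nontrivial for $g\ge 1$. Cutting $\Sigma$ along $C$ and capping the new boundary component(s) with disks gives a surface of Euler genus at most $g-1$, and $G$ turns into a graph $G^{*}$ on at most $2n$ vertices embeddable there: each vertex of $C$ splits into two copies, and $C$ splits into two disjoint cycles $C',C''$. After a routine reduction to the $3$-connected case we recurse, obtain a long cycle in $G^{*}$, and project it back to a cycle of $G$ by means of a local uncrossing at the split vertices; the delicate point is to carry out this projection while losing only a $2^{O(g)}$ multiplicative factor per level. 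Since the telescoping $(2g+3)^2-(2g+1)^2=8g+8$ comfortably absorbs both this per-level loss and the $2^{\varepsilon}$ gained from the vertex doubling, the bound $n^{\varepsilon}/2^{(2g+3)^2}$ survives the $g$ levels of recursion (instances with $n$ below the corresponding threshold are handled directly, as a $3$-connected graph always contains a cycle).

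\emph{Main obstacle.} The conceptual crux is the planar amortisation: Tutte-cycle bridges may share attachment vertices, so one cannot reroute through all of them at once, and turning the charging argument into a genuine self-improving recursion --- and thus pinning down an explicit exponent $\varepsilon$ --- is where the real work lies. A secondary difficulty is the surface step: one must choose the cutting cycle, restore $3$-connectivity of the cut graph, and uncross the recursively found cycle, all while keeping the per-level loss at $2^{O(g)}$; this is precisely what forces the factor $2^{(2g+3)^2}$ in the statement.
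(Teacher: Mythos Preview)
The paper does not prove this theorem; it is quoted verbatim from Chen, Yu and Zang~\cite{ChenYZ12} and used as a black box. The only argument the paper supplies is the one-line remark that the cited result is actually stated for $K_{3,t}$-minor-free graphs, and that a genus-$g$ graph is $K_{3,2g+3}$-minor-free by a standard edge-count, which is where the factor $2^{(2g+3)^2}$ comes from. So there is no ``paper's own proof'' to compare against beyond this reduction.

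Your plan is therefore an attempt to re-prove the Chen--Yu--Zang theorem from scratch, and by a genuinely different route: you induct on the Euler genus directly, whereas the cited work proves a single statement for the hereditary class of $K_{3,t}$-minor-free graphs (so no surface cutting is ever needed; the $t$-dependence enters through the structure of $3$-cuts in that class). Your planar base case via Tutte paths and a Gao--Yu style amortisation is indeed the standard engine and is fine as a sketch.

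The inductive step, however, has a real gap that you flag but do not close. After cutting $\Sigma$ along a non-separating cycle $C$, restoring $3$-connectivity of the resulting graph $G^{*}$ is not a ``routine reduction'': the two copies $C',C''$ of $C$ can create many small cuts, and passing to $3$-connected pieces may shatter $G^{*}$ into parts none of which is large. More seriously, a long cycle $D$ found in $G^{*}$ may use both copies of many vertices of $C$; identifying them back in $G$ turns $D$ into a closed walk, and ``local uncrossing'' can in the worst case halve its length at \emph{every} such coincidence, not just once per genus level. Controlling this loss by $2^{O(g)}$ per level would require a bound on how often $D$ meets $C'\cup C''$, which you have not argued. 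Without that, the recursion does not yield a polynomial lower bound.
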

We note that Chen et al. phrase the theorem (more generally) for $K_{3,t}$-minor-free graphs, but a folklore edge-counting argument shows that graphs embeddable on a surface of genus $g$ are $K_{3,t}$-minor-free for $t=2g+3$ (see e.g.~\cite{Archdeacon96}).

Two intertwined problems that arise with this approach is that torsos of decompositions are not necessarily triconnected, and long paths in them do not necessarily imply long paths in the original graph, because of virtual edges added in torsos.
Torsos can be made triconnected if their near-embeddings include cycles or paths around each vortex, but these may use virtual edges in essential ways. On the other hand, the decomposition can be modified so that virtual edges can be replaced with paths in the original graph, but this requires changes that remove virtual edges, hence potentially removing paths around vortices and destroying triconnectedness.

Because of that, we need to go a little deeper and use a local, strong  version of the structure theorem from Graph Minors XVII~\cite{RobertsonS99}. For the same reason we cannot use existing algorithms for finding the graph minors decompositions. Instead, we only prove the \emph{existence} of a tree decomposition of bounded adhesion, small width, and with nearly embeddable bags.

\subsection*{Global and local graph minor structure theorems}
We now define near-embeddability and the graph minors $\alpha$-decomposition.

\begin{definition}
For an integer $\alpha$, an \emph{$\alpha$-near embedding of $G$} consists of:
\begin{enumerate}[(i)]
	\item a set $A$ of at most $\alpha$ vertices (called the \emph{apex} set);
	\item a family $G_0 \cup G_1 \cup \dots \cup G_{\alpha'} = G\setminus A$ of edge-disjoint subgraphs of $G\setminus A$, where:
	\begin{itemize}
		\item $G_0$ is called the \emph{embedded part},
		\item $\V=\{G_1,\dots,G_{\alpha'}\}$ for some $\alpha'\leq \alpha$ are called (\emph{large}) \emph{vortices},
		\item the intersection $\Omega_i := V(G_0)\cap V(G_i)$ is called the \emph{society} of vortex $G_i$,
		\item vortices are pairwise vertex-disjoint,
	\end{itemize}
	\item an embedding of $G_0$ in a surface of genus at most $\alpha$ such that for $i\in[\alpha']$, the society $\Omega_i$ is embedded on the boundary of a disk whose interior is empty (i.e., does not intersect the embedding or other disks), called the disk \emph{accomodating} $G_i$;
	\item a linear ordering $w_1,\dots,w_{|\Omega_i|}$ of each vortex society $\Omega_i$, corresponding to its natural ordering around its disk (for some choice of direction and starting point);
	\item for each large vortex $G_i \in \V$, a path decomposition $\Z_1,\dots,\Z_{|\Omega_i|}$ of $G_i$ of width at most $\alpha$ such that $w_j \in \Z_j$, for $j \in [|\Omega_i|]$.
\end{enumerate}
We denote such an $\alpha$-near embedding as $(A,G_0,\V)$, with the embedding and path decompositions only implicit in the notation.
\end{definition}

\begin{definition}
A \emph{(graph minors)} \emph{$\alpha$-decomposition} of a graph $G$ consists of:
\begin{itemize}
\item a rooted tree decomposition $(T,\Y)$ of $G$ of adhesion at most $\alpha$;
\item for each $t\in V(T)$, an $\alpha$-near embedding ($A_t,G_{t0},\V_t)$ of $\torso(G,\X(t))$.
\end{itemize}
%such that furthermore, for every edge $pt$ of $T$ where $p$ is the parent of $t$, the adhesion $\X(p)\cap \X(t)$ is contained in $A_t$, while $(\X(p)\cap \X(t)) \setminus A_p$ is either contained in a single bag of a path decomposition of some vortex in $\V_p$, or has at most 3 vertices, contained in a single face of $G_{p0}$.
\end{definition}

In this subsection we prove the following variant of the Global Structure Theorem which
implies Theorem~\ref{thm:minor-decomposable}.

\begin{theorem}\label{thm:global}
	For every graph $H$, there is a constant $\alpha = \alpha(H)$ such that the following holds, for any integer $k$:
	any graph $G$ excluding $H$ as a minor and without a $k$-path has an $\alpha$-decomposition of width at most $\alpha \cdot k^{22}$. 
\end{theorem}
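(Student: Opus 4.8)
The plan is to start from the classical Global Structure Theorem (Robertson--Seymour, Graph Minors XVI), which gives, for every $H$, a constant $\alpha_0 = \alpha_0(H)$ such that any $H$-minor-free graph $G$ admits an $\alpha_0$-decomposition $(T,\Y)$ in which every torso is $\alpha_0$-near-embeddable. The issue is that the width (the bag sizes) is a priori unbounded; we must use the hypothesis that $G$ has no $k$-path to bound it by $\Oh_H(k^{22})$. First I would pass to a connected decomposition (as explained in the preliminaries, this costs nothing in adhesion) and fix attention on a single bag $t$ with near-embedding $(A_t, G_{t0}, \V_t)$ of $\torso(G,\Y(t))$. The apex set $A_t$ contributes only $\alpha_0$ vertices, and each of the at most $\alpha_0$ vortices contributes a graph of pathwidth $\leq \alpha_0$; since its society sits on a disk boundary and the vortex itself is a genuine subgraph of $G$ (not merely of the torso), a long path inside a vortex would already give a long path in $G$, so each vortex has $\Oh_H(k)$ vertices. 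Hence the bulk of any large bag must live in the embedded part $G_{t0}$, and the whole problem reduces to bounding $|V(G_{t0})|$.

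The core of the argument is then: an embedded part with many vertices contains a long path in $G$. Here I would use Theorem~\ref{thm:embeddedCircumference}: $G_{t0}$ is embedded in a surface of genus $\leq \alpha_0$, hence is $K_{3,t_0}$-minor-free for $t_0 = 2\alpha_0+3$, and a triconnected such graph on $N$ vertices has a cycle (hence a path) of length $\Omega_{H}(N^{\varepsilon})$. So if $|V(G_{t0})|$ exceeds some $c_H \cdot k^{1/\varepsilon}$ — and with $\varepsilon$ the constant of Theorem~\ref{thm:embeddedCircumference} this is where the exponent $22$ will come from, after accounting for the passage to a triconnected piece — we would obtain a $k$-path. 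The two genuine obstacles, flagged already in the excerpt, are (a) $\torso(G,\Y(t))$, and a fortiori $G_{t0}$, need not be triconnected, and (b) long paths/cycles in $G_{t0}$ may traverse \emph{virtual} edges, which are not edges of $G$ and cannot be realized by paths in $G$ in a vertex-disjoint way once several of them are used. I expect (a)--(b), handled simultaneously, to be the main difficulty.

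To overcome this I would \emph{not} try to use an off-the-shelf algorithmic version of the decomposition, but instead argue about existence and perform surgery on the decomposition locally, invoking the strengthened local structure theorem of Graph Minors XVII~\cite{RobertsonS99}. The idea is: fix a bag $t$ whose embedded part is large. Each virtual edge of $\torso(G,\Y(t))$ comes either from an adhesion to a neighbour bag or lies inside a vortex society; there are only $\Oh_H(1)$ adhesions by the adhesion-degree bound, each of size $\leq \alpha_0$, so only $\Oh_H(1)$ virtual edges are "dangerous". One then either (i) reroutes/absorbs the small attachment structures so that a large \emph{triconnected} subgraph of $G_{t0}$ using only real edges is exposed — at which point Theorem~\ref{thm:embeddedCircumference} applies and yields a $k$-path in $G$ — or (ii) if no such large triconnected real piece exists, the embedded part decomposes along few small separators into pieces each of bounded size, and one refines the tree decomposition at $t$ accordingly, splitting the large bag into many bags of width $\Oh_H(k^{22})$ joined along adhesions that are still of size $\Oh_H(1)$ (unions of a bounded number of the original adhesions, vortex society cuffs, and apex vertices). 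Carrying this out carefully — controlling how virtual edges interact with triconnected components via the 3-block tree / SPQR-type decomposition of the embedded part, and making sure the new adhesions stay bounded — is where the "several technical steps" live, and is the step I expect to be the real work. Summing the contributions (apices: $\Oh_H(1)$; vortices: $\Oh_H(k)$ each, $\Oh_H(1)$ of them; embedded part after refinement: $\Oh_H(k^{22})$) gives the claimed bound $\alpha\cdot k^{22}$ on the width, with $\alpha = \alpha(H)$ absorbing all the $H$-dependent constants.
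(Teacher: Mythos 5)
Your proposal identifies the right ingredients (Graph Minors local structure, the circumference theorem, the triconnectivity and virtual-edge obstacles), but the direction of the argument and two specific claims are wrong, and the surgery step you sketch does not match what actually works.

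First, the direction: you start from the \emph{global} structure theorem and try to repair individual torsos $\torso(G,\Y(t))$. This is a trap. In that torso, virtual edges arise both from vortex societies and from adhesions to neighbouring bags, and the latter are unbounded in number: Lemma~\ref{lem:adh-deg-bound} bounds the adhesion degree by $f(h,H)\cdot\ell$ where $\ell$ is the width, i.e.\ \emph{linearly} in exactly the quantity you are trying to bound --- not by $\Oh_H(1)$ as you claim. Moreover, replacing an adhesion virtual edge by a real path requires routing disjointly through the subtree hanging off that adhesion, and connectedness of the decomposition guarantees only one such routing, so two virtual edges from the same adhesion cannot both be replaced. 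The paper sidesteps all of this by proving the size bound at the \emph{local} level: the local structure theorem (Theorem~\ref{thm:GMlocal}) gives a near-embedding of $G$ itself (with respect to a tangle), so the only virtual edges in the auxiliary graph $G_0^*$ come from vortices. Second, your bound on vortex size is a non sequitur: a graph of pathwidth $\le\alpha$ with no long path can still be arbitrarily large (think of $K_{1,m}$), so ``no $k$-path inside a vortex'' does not give $\Oh_H(k)$ vertices. What the paper actually uses is the comb around each large vortex (property~\ref{p:comb}), whose $|\Omega_i|$ teeth give a path of length $\ge|\Omega_i|$, hence $|\Omega_i|<k$; the vortex interiors are never bounded, they are recursed into.

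Third, the surgery: the paper does \emph{not} expose a large triconnected subgraph of $G_0$ that avoids virtual edges. Instead it proves that the auxiliary graph $G_0^*$ (with the society cliques and wheel centres added) is triconnected, using (11.1) of~\cite{RobertsonS99} (Lemma~\ref{lem:localConnected}), applies Theorem~\ref{thm:embeddedCircumference} to $G_0^*$, and then \emph{post-processes the resulting long cycle}: it discards a short arc to avoid the $\Oh_H(k)$ large-vortex society vertices and wheel centres, and replaces each remaining small-vortex virtual edge by a genuine path through that vortex, which is possible by the well-attached property (9.1) of~\cite{RobertsonS99} (Lemma~\ref{lem:properlyAttached}). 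Since small-vortex societies have $\le 3$ vertices, at most two consecutive replaced edges share a vortex, so the path stays simple. Finally the global statement is inherited from the local one via the standard induction of Diestel et al.~\cite{DiestelKMW12}, with bags of the global decomposition built from embedded parts $G_0$ of local near-embeddings plus apices and consecutive vortex-bag intersections --- all bounded once $|V(G_0)|=\Oh_H(k^{22})$ is established locally.
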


We deduce Theorem~\ref{thm:global} from a similarly modified variant of the Local Structure Theorem.
An \emph{$\alpha$-near local embedding} of a graph $G$ is defined similarly to an $\alpha$-near embedding, but we allow an arbitrary number of `small vortices' $\W$ and we allow the path decompositions of all vortices to have only bounded adhesion instead of bounded width; thus arbitrarily complicated graphs can hide behind vortices. 
We additionally require each large vortex to be surrounded by a certain path.
Formally:

\begin{definition}
A \emph{comb} is a union of a path $P$ with some mutually vertex-disjoint (possibly trivial) paths that have their first vertex and no other vertex on $P$. The last vertices of those paths are called the \emph{teeth} of the comb, they are naturally ordered by $P$.
\end{definition}

\newcounter{localenumi}

\begin{definition}
For an integer $\alpha$, an \emph{$\alpha$-near local embedding of $G$} consists of:
\begin{enumerate}[(i),series=localStructure]
	\item a set $A$ of at most $\alpha$ vertices (called the \emph{apex} set);
	\item a family $G_0 \cup G_1 \cup \dots \cup G_n = G\setminus A$ of edge-disjoint subgraphs of $G\setminus A$, where:
	\begin{itemize}
		\item $G_0$ is called the \emph{embedded part},
		\item $\V=\{G_1,\dots,G_{\alpha'}\}$ for some $\alpha'\leq \alpha$ are called \emph{large vortices},
		\item $\W=\{G_{\alpha'+1},\dots,G_n\}$ are called \emph{small vortices},
		\item vortices intersect only in $G_0$: $V(G_i) \cap V(G_j) \subseteq V(G_0)$ for $i\neq j \in [n]$,
		\item the intersection $\Omega_i := V(G_0)\cap V(G_i)$ is called the \emph{society} of vortex $G_i$,
		\item large vortices are pairwise vertex-disjoint,
		\item small vortices have societies of size $\leq 3$;
	\end{itemize}
	\item an embedding of $G_0$ in a surface of genus at most $\alpha$ such that for $i\in[n]$, the society $\Omega_i$ is embedded on the boundary of a disk whose interior is empty (i.e., does not intersect the embedding or other disks), called the disk \emph{accomodating} $G_i$;
	\item a linear ordering $w_1,\dots,w_{|\Omega_i|}$ of each society $\Omega_i$, corresponding to its natural ordering around its disk (for some choice of direction and starting point);
	\item\label{p:pathwidth} for each large vortex $G_i \in \V$, a path decomposition $\Z_1,\dots,\Z_{|\Omega_i|}$ of $G_i$ of adhesion at most $\alpha$ such that $\Z_j \cap \Omega_i = \{w_{j-1},w_j\}$ for $1 < j \leq |\Omega_i|$ and $\Z_1 \cap \Omega_i = \{w_1\}$.
	%\item linkage
	\item\label{p:comb} for each large vortex $G_i \in \V$, a comb $C_i$ in $G_i \cup G_0 \cup \bigcup \W$ whose teeth are vertices  $w_1,\dots,w_{|\Omega_i|}$ of $\Omega_i$, in the same order.
\end{enumerate}
We denote such an $\alpha$-near local embedding as $(A,G_0,\V,\W)$, with the embedding and path decompositions only implicit in the notation.
The near embedding is said to \emph{respect} a tangle $\T$ if the large side of a separation in $\T\setminus A$ is never contained in a vortex $W \in \W$ or in a bag of the decomposition of a vortex $V \in \V$.
\end{definition}

\begin{definition}
In a graph $G$, a tangle $\T$ \emph{controls an $H$-minor} if there is a minor model in $G$ (defined by branch sets $(B_h)_{h \in V(H)}$ inducing vertex-disjoint connected subgraphs in $G$ and with an edge between $B_h$ and $B_{h'}$ whenever $hh' \in E(H)$) such that no branch set is fully contained in a small side of a separation in $\T$.
\end{definition}

The following Local Structure Theorem follows from~\cite{RobertsonS99}, as explained in~\cite{FrohlichM11} (we note that while~\cite{FrohlichM11} assumes that $G$ is $H$-minor-free, the original statement in~\cite{RobertsonS99} only requires that $\T$ controls no $H$-minor -- we will need this stronger version when dealing with topological minors).
Since this is crucial for our approach, we stress that the comb $C_i$ contains a path of length at least $|\Omega_i|$ (a scrupulous reading of \cite{RobertsonS99} implies even stronger statements, but this will suffice).

\begin{theorem}[\cite{FrohlichM11}]\label{thm:GMlocal}
	For every graph $H$ there exist integers $\alpha,\theta$ such that: for every graph $G$ and every tangle $\T$ in $G$ of order $\geq \theta$ that controls no $H$-minor, there is an $\alpha$-near local embedding of $G$ which respects $\T$.
\end{theorem}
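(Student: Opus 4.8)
The plan is to obtain Theorem~\ref{thm:GMlocal} as a packaging of the local (tangle) form of the Graph Minors Structure Theorem~\cite{RobertsonS99}, in the streamlined presentation of Fr\"ohlich and Marx~\cite{FrohlichM11}, followed by the extraction of the one feature of an $\alpha$-near \emph{local} embedding that the off-the-shelf theorem does not literally state, namely the comb~$C_i$ around each large vortex. Every other ingredient in the definition of an $\alpha$-near local embedding is either directly supplied by the structure theorem or is a weakening of what it supplies.

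Concretely, the tangle form of the theorem asserts that for every~$H$ there are constants~$\theta,\alpha_0$ such that whenever a graph~$G$ carries a tangle~$\T$ of order~$\geq\theta$ that controls no $H$-minor, there is a near-embedding of~$G$ respecting~$\T$: an apex set of size~$\leq\alpha_0$, an embedding of the remaining embedded part~$G_0$ in a surface of genus~$\leq\alpha_0$, at most~$\alpha_0$ vortices of depth~$\leq\alpha_0$ each accommodated by an empty disk that carries its society on the boundary in cyclic order, and no large side of a separation of~$\T$ lying fully inside a vortex or a single bag of a vortex decomposition. The first remark I would make is that, although~\cite{FrohlichM11} runs its simplified proof under the assumption that~$G$ itself is $H$-minor-free, every step of that argument only refers to separations of~$\T$, so it applies verbatim under the weaker hypothesis --- the one actually used in~\cite{RobertsonS99} --- that~$\T$ merely controls no $H$-minor; this is exactly the strengthening needed later for topological minors. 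Taking these vortices as the large vortices~$\V$ (and the family~$\W$ of small vortices empty), the bounded-depth condition makes each vortex path decomposition bounded-width, hence of bounded adhesion, and a routine normalization of a path decomposition of a society achieves $\Z_j\cap\Omega_i=\{w_{j-1},w_j\}$ as required by~\ref{p:pathwidth}. Should the precise way~\cite{RobertsonS99} attaches a vortex introduce minor irregularities, the slack built into the definition of a local embedding --- arbitrarily many small vortices with societies of size at most~$3$, and only bounded adhesion required of vortex decompositions --- absorbs them at no cost.

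The substantive step, which I expect to be the main obstacle, is producing the comb~$C_i$ with teeth $w_1,\dots,w_{|\Omega_i|}$ in order, so that in particular $C_i$ contains a path of length at least~$|\Omega_i|$. For this I would not use the structure theorem as a black box but exploit the cellular character of the near-embeddings constructed in~\cite{RobertsonS99}: each large vortex can be arranged to sit inside a single face of the (cellularly) embedded~$G_0$, and the boundary walk of that face is a closed walk in~$G_0$ meeting the society vertices in exactly their cyclic order around the disk. Cutting this closed walk open at one point yields a walk visiting $w_1,\dots,w_{|\Omega_i|}$ in order; pruning it to a simple spine and attaching short, internally disjoint legs to the successive society vertices --- possibly routed through~$G_i$ or through small vortices, which is why~$C_i$ is permitted to live in $G_i\cup G_0\cup\bigcup\W$ --- gives the comb, and since its spine passes all $|\Omega_i|$ teeth consecutively it contains a path of length at least~$|\Omega_i|$. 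The care required is to keep the legs internally disjoint, to stay inside the allowed subgraph, and not to disturb the embedding of~$G_0$ or the property of respecting~$\T$; this is the ``scrupulous reading of~\cite{RobertsonS99}'' mentioned before the statement, and essentially all of the real work sits there. Setting~$\alpha$ to the maximum of~$\alpha_0$ and the finitely many further constants produced above gives the single parameter in the theorem statement.
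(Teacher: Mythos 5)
The paper does not prove this statement: it is imported as a consequence of the Local Structure Theorem of Robertson and Seymour's Graph Minors XVII, in the presentation of Fr\"ohlich and Marx~\cite{FrohlichM11}, with precisely the two side remarks that you also make --- that the hypothesis may be weakened from ``$G$ is $H$-minor-free'' to ``$\T$ controls no $H$-minor'', and that the comb property is visible in a close reading of~\cite{RobertsonS99}. At that level your route matches the paper's.

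There is, however, an inconsistency in the proposal that points to a genuine gap. You first set $\W=\emptyset$, but in the comb construction you route the legs ``through~$G_i$ or through small vortices, which is why~$C_i$ is permitted to live in $G_i\cup G_0\cup\bigcup\W$''; both cannot hold. More importantly, small vortices are not ``slack'' that absorbs irregularities: they are the image, under the translation spelled out in the paper's own proof of Lemma~\ref{lem:properlyAttached}, of the \emph{internal cells} of the cellular decomposition in~\cite{RobertsonS99} --- only the border cells are absorbed into bags of large-vortex path decompositions. The starting object you describe (the ``global'' near-embedding with only large vortices and a genuinely embedded $G_0$) does not come with a comb; the comb is a feature of the cellular structure, and producing it forces the small vortices into the picture. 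Your face-boundary-walk argument also glosses over a real difficulty: a face boundary walk is a closed walk, not a cycle, and pruning a walk to a simple spine can drop society vertices that are repeated along the boundary, so the pruned spine need not pass $w_1,\dots,w_{|\Omega_i|}$ in order, and the internal disjointness of the legs requires access to the interiors of cells (the small vortices) and to $G_i$ itself. Handling these is exactly where the cellular machinery of Graph Minors XVII does its work, which is why the paper relies on citation rather than reconstruction.
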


We now improve the statement of the  Local Structure Theorem~\ref{thm:GMlocal} in steps: first requiring small vortices to be `well attached', then making the torsos triconnected, and finally deducing bounds on width in terms of longest path length.

\begin{lemma}\label{lem:properlyAttached}
	In Theorem~\ref{thm:GMlocal}, we can additionally assume that:
	\begin{enumerate}[(i),resume=localStructure]
		\item\label{p:wellAttached} for each small vortex $G_i \in \W$ and every $v_1,v_2 \in \Omega_i$, there is a path in $G_i$ between $v_1$ and $v_2$, with no internal vertices in $\Omega_i$.
	\end{enumerate}
\end{lemma}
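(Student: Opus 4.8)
The plan is to prove Lemma~\ref{lem:properlyAttached} by a cleanup procedure that repairs small vortices one at a time, pushing any ``unreachable'' parts of a small vortex out of the vortex and into the embedded part or into a neighbouring piece, all while preserving the near-embedding structure and the respected tangle. First I would fix a small vortex $G_i\in\W$ with society $\Omega_i$ of size at most~$3$, and look at the partition of $V(G_i)\setminus\Omega_i$ into the connected components of $G_i-\Omega_i$. For each such component $C$, let $\Omega_i(C)\subseteq\Omega_i$ be the set of society vertices with a neighbour in $C$. Property~\ref{p:wellAttached} is exactly the statement that $G_i$ contains, between any two society vertices, an internally-$\Omega_i$-avoiding path; since $|\Omega_i|\le 3$, this is equivalent to a simple $2$-connectivity-flavoured condition on how components attach to $\Omega_i$. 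The idea is: any component $C$ whose attachment $\Omega_i(C)$ has size $\le 1$ carries no ``linkage'' between distinct society vertices and can be safely removed from $G_i$; and once every remaining component attaches to at least two society vertices, the required internally-avoiding paths between any pair of society vertices can be assembled (using that $|\Omega_i|\le 3$ means the relevant ``attachment hypergraph'' on at most three vertices, with all hyperedges of size $\ge 2$, is automatically enough to connect any pair by a path through one component, possibly concatenating two components through the third vertex).

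The technical heart is deciding \emph{where} a removed component $C$ goes. Since the near-embedding is edge-disjoint and $C$ lies behind the disk accommodating $G_i$ with empty interior, the only vertices of $G\setminus A$ that $C$ can touch outside $G_i$ are vertices of $\Omega_i(C)$ in $G_0$ (and whatever lies in other vortices sharing those society vertices). When $|\Omega_i(C)|\le 1$: if $\Omega_i(C)=\emptyset$ then $C$ is a separate component of $G\setminus A$, so we can make it a brand-new small vortex with empty society accommodated in a fresh empty disk (or, more simply, drop it into the embedded part $G_0$ in its own face); if $\Omega_i(C)=\{w\}$ for a single society vertex $w$, then $(\{w\}\cup(V(G)\setminus(V(C))),\ \{w\}\cup V(C))$ is a separation of order~$1$, and we can create a new small vortex $G_i':=G[C\cup\{w\}]$ with society $\{w\}$, accommodated in its own empty disk, deleting $C$ from $G_i$. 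In both cases the genus is unchanged, all edge-disjointness is preserved, the path decomposition of $G_i$ restricts correctly (adhesion does not increase), and societies only shrink.

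I also need to check that these surgeries keep everything consistent: that $G_i$ after deleting $C$ still satisfies the path-decomposition and comb requirements~\ref{p:pathwidth}--\ref{p:comb} is automatic since $G_i$ only got smaller (for large vortices nothing changes, we only touch small vortices, and small vortices have no comb or width requirement — only the adhesion bound on their path decomposition, which we never use to any positive quantity because small vortices retain societies of size $\le 3$ anyway). The tangle-respecting property also survives: the only new ``small vortices'' created are $G[C\cup\{w\}]$ or $G[C]$, and their vertex sets are small sides of separations of order $\le 1<\theta$, so the large side of any separation in $\T$ of order $\ge\theta$ cannot be contained in them; similarly pushing $C$ into $G_0$ cannot enlarge any bag used in the respects-condition. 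Finally, the procedure terminates because each step strictly decreases $\sum_{G_i\in\W}|V(G_i)|$ (components are moved out, never in), and only finitely many new small vortices of bounded total size are produced in total; so after finitely many rounds every small vortex is ``well attached.''

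The main obstacle I expect is bookkeeping rather than a conceptual difficulty: one must be careful that after removing a component $C$ with $|\Omega_i(C)|\le 1$, the \emph{remaining} components of $G_i-\Omega_i$ still collectively witness property~\ref{p:wellAttached} for \emph{every} pair in $\Omega_i$, i.e.\ that we have not destroyed a path between two other society vertices by deleting a component that was only ``attached'' to one of them — but by definition such a component contributes no internally-$\Omega_i$-avoiding path between two distinct society vertices, so this is fine. A second subtle point is the case $|\Omega_i|=3$ where some pair of society vertices is only linked through a component $C$ with $\Omega_i(C)=\{a,b,c\}$ all three: here $C$ stays (it has attachment size $\ge 2$), and the required path between, say, $a$ and $b$ is obtained by taking an $a$--$b$ path inside $G[C\cup\Omega_i]$ with interior in $C$, which exists by connectivity of $C$ and $a,b\in\Omega_i(C)$; so no further surgery is needed. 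Once every component of every small $G_i-\Omega_i$ attaches to $\ge 2$ society vertices, a short case analysis on $|\Omega_i|\in\{0,1,2,3\}$ establishes~\ref{p:wellAttached}, completing the proof.
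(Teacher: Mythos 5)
Your approach is a genuinely different route from the paper: the paper simply cites (9.1) of Robertson--Seymour's Graph Minors XVII (as translated into the vortex language by Fr\"ohlich--Marx), observing that the stated property is already built into that construction, whereas you attempt a self-contained cleanup procedure. A self-contained argument is a fine idea, but as written it has a real gap in the final case analysis.

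The gap: after you have removed every component $C$ with $|\Omega_i(C)| \le 1$, it is \emph{not} true that the remaining components automatically witness property~\ref{p:wellAttached}. Take $|\Omega_i| = 3$ with $\Omega_i = \{a,b,c\}$, one component $C_1$ with $\Omega_i(C_1) = \{a,b\}$, one component $C_2$ with $\Omega_i(C_2) = \{b,c\}$, and no component (nor edge of $G_i$) covering the pair $\{a,c\}$. Every component attaches to at least two society vertices, so your procedure stops here, yet there is no $a$--$c$ path in $G_i$ whose internal vertices avoid $\Omega_i$: any $a$--$c$ path in $G_i$ must pass through $b \in \Omega_i$. Your parenthetical ``possibly concatenating two components through the third vertex'' produces precisely such a forbidden path, since that third vertex is an internal vertex lying in $\Omega_i$. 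So the concluding ``short case analysis'' does not go through. The missing step is an additional surgery for exactly this situation: when some pair $v_1,v_2 \in \Omega_i$ admits no internally-$\Omega_i$-avoiding path, the set $\Omega_i \setminus \{v_1,v_2\}$ (here $\{b\}$) separates the vortex, and one must split $G_i$ into two small vortices with societies of size at most two (here $G[V(C_1) \cup \{a,b\}]$ with society $\{a,b\}$ and $G[V(C_2) \cup \{b,c\}]$ with society $\{b,c\}$). The paper explicitly notes that this is exactly what the proof of (9.1) in Robertson--Seymour does. With this splitting step added (and a termination argument covering it, e.g.\ a potential counting unordered pairs inside societies), your surgery-based proof would be correct; without it, it is incomplete. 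Minor additional caution: your termination measure $\sum_{G_i \in \W} |V(G_i)|$ does not strictly decrease when you spin a component off as a new small vortex with a singleton society, because the shared vertex $w$ is now counted twice; it is cleaner to argue termination by noting that each original component of some $G_i - \Omega_i$ is processed at most once and that the new singleton-society vortices already vacuously satisfy the property.
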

\begin{proof}
	Property~\ref{p:wellAttached} is guaranteed by (9.1) in~\cite{RobertsonS99}. (As explained in~\cite{FrohlichM11}, the border cells of~\cite{RobertsonS99} are translated into bags of large vortex decompositions, so small vortices arise only from internal cells; for an internal cell $c_0$, $\alpha(c_0)^*$ is translated into a small vortex $G_i$, while $\gamma(\tilde c_0)$ is translated into $\Omega_i$, giving exactly the statement we want. We also note that the proof of (9.1) simply partitions a vortex with no $v_1$-$v_2$ path into two vortices with at most two vertices in their societies).
\end{proof}

\begin{definition}\label{def:G0star}
	For an $\alpha$-near local embedding $(A,G_0,\V,\W)$ of $G$, define $G_0^*$ to be the graph obtained from $G_0$ by adding an edge between every two consecutive vertices in each society $\Omega_i$, and a new vertex for each society of a large vortex, with edges to every vertex of the society.
	The embedding of $G_0$ is naturally extended to an embedding of $G_0^*$ (the new vertices and edges embedded in place of the accommodating disks).
\end{definition}

\begin{lemma}\label{lem:localConnected}
	In Lemma~\ref{lem:properlyAttached}, we can additionally assume that:
	\begin{enumerate}[(i),resume=localStructure]
		\item\label{p:triconnected} $G_0^*$ is triconnected.
	\end{enumerate} 
\end{lemma}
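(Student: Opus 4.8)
The plan is to make $G_0^*$ triconnected by repeatedly fixing separations of order at most $2$, showing each fix makes "genuine progress" so that the process terminates. Concretely, I would argue by induction/minimal counterexample: assume $(A,G_0,\V,\W)$ is an $\alpha$-near local embedding satisfying \ref{p:wellAttached} chosen to be extremal in some well-founded measure (for instance, lexicographically maximizing the number of edges of $G_0^*$ that lie on the boundary of the accommodating disks, or minimizing the number of faces of $G_0^*$ — the exact measure will be chosen to make the case analysis close), and suppose $G_0^*$ has a separation $(X_1,X_2)$ of order $\leq 2$ with both sides proper. The goal is to derive either a contradiction or a legitimate modification of the near-embedding that still respects $\T$, still satisfies properties \ref{p:pathwidth}, \ref{p:comb}, \ref{p:wellAttached}, and has strictly larger measure.

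First I would dispense with easy structural cases. Since $G_0^*$ is obtained from $G_0$ by adding, for each large vortex, an apex-like vertex $x_i$ joined to all of $\Omega_i$, and the boundary cycle edges between consecutive society vertices, any small cut of $G_0^*$ that separates the interior of some disk from the rest must pass through $\Omega_i$ (or through $x_i$ plus one society vertex). The boundary cycle together with $x_i$ makes each accommodating disk region internally $3$-connected-ish, so a genuine $2$-separation of $G_0^*$ corresponds to a $2$-separation of the underlying embedded structure that "cuts across the surface". I would use the fact that $G_0^*$ embeds in a surface of genus $\leq\alpha$: a cutvertex or a $2$-cut in an embedded graph either bounds a region that can be "absorbed" into an adjacent vortex or the apex set, or it reveals that part of the embedded part is attached too loosely and can be re-routed. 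In the cutvertex (order-$1$) case, one side is attached to the rest of $G$ only through a single vertex $v$ (and the apex set $A$); that entire side — call it the graph $G'$ hanging off $v$ — together with $v$ forms a subgraph that can be pulled out of $G_0$ and either made into a new small vortex with society $\{v\}$ (of size $\leq 3$, trivially), or, if it already touches a vortex, merged into that vortex's path decomposition without increasing adhesion, because its only attachment is $v$. Re-embedding the remainder of $G_0$ in the now-freed disk space strictly increases the measure.

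For the order-$2$ case the separator is $\{u,v\}$ with $uv$ possibly a (virtual) edge of $G_0^*$. Here I would add the edge $uv$ to $G_0^*$ if not present (this is safe on a surface: either $uv$ can be drawn inside a face touching both $u$ and $v$, or if not, the embedding is "pinched" at $\{u,v\}$ and we can split along it, reducing genus or number of components — another progress measure). Once $uv$ is an edge, the side not containing the higher-genus/more-complex part attaches to the rest of $G$ only via $\{u,v\}\cup A$; as in the order-$1$ case, detach it and reorganize it as (or merge it into) a vortex with society $\{u,v\}$ of size $2\leq 3$, extending a path decomposition with $\Z_1=\{u\},\dots$ along the two society vertices so that property \ref{p:pathwidth} holds with adhesion $\leq 2$, and the comb requirement \ref{p:comb} is only about large vortices so it is unaffected (or is re-established via the path $u{-}v$). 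Crucially the apex set does not grow, genus does not grow, and property \ref{p:wellAttached} for the new small vortex follows because the detached side is connected and attached at exactly $u$ and $v$ — if there is no $u$–$v$ path inside it with no internal society vertex, apply the (9.1)-style splitting of Lemma~\ref{lem:properlyAttached} again to break it into two size-$\leq 2$ societies. Finally, since $\T$ is respected before the operation and the large side of any separation in $\T\setminus A$ was not inside a vortex, and we only moved a piece that is on the small side of some order-$\leq 2 + |A|$ separation (hence cannot contain the large side of a tangle separation of order $\geq\theta$, as $2+\alpha < \theta$ for $\theta$ chosen large enough in Theorem~\ref{thm:GMlocal}), the modified near-embedding still respects $\T$.

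The main obstacle I expect is bookkeeping the interaction between the new $2$-separations we introduce and the \emph{existing} large-vortex structure: a $2$-cut of $G_0^*$ might run partly through an accommodating disk of a large vortex, so "detaching one side" could slice a large vortex in two, and we must check that the path decomposition \ref{p:pathwidth} and especially the comb \ref{p:comb} survive (the comb's path of length $\geq|\Omega_i|$ must be preserved or rebuilt, since this is exactly what we need later for finding long paths). The clean way around this is to observe that the new apex-like vertices $x_i$ of large vortices, being adjacent to \emph{all} of $\Omega_i$, cannot lie on the small side of a $2$-separation unless $|\Omega_i|\leq 2$ — so a genuine $2$-cut of $G_0^*$ either misses every large-vortex disk entirely (the case handled above) or involves a large vortex with a society of size $\leq 2$, which can be absorbed into the embedded part or into a neighboring structure without any comb obligation. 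So the case analysis does close, but verifying that the chosen progress measure strictly increases in every case — and that no case can recur infinitely — is where the real care is needed; I would set up the measure as the pair (number of vertices of $G$ drawn in the embedded part $G_0$, $\,-$number of edges of $G_0^*$) ordered appropriately, so that absorbing pieces into vortices or $A$ strictly decreases the first coordinate, while adding virtual edges $uv$ without moving any vertex strictly decreases the second, guaranteeing termination.
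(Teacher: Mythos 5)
Your proposal is not a proof of this lemma; it attempts to build triconnectedness by an iterative repair process applied to an arbitrary near-embedding, whereas the paper's (short) proof simply observes that the near-embedding coming out of the Local Structure Theorem already has this property built in. Concretely, the paper notes that in any $2$-separation $(X,Y)$ of $G_0^*$, each small-vortex society is a clique and each large-vortex society together with its new hub vertex forms a wheel, so each lies entirely inside $X$ or $Y$; this means the $2$-separation translates back into a partition of the cells of the $\mathcal{O}$-decomposition in~\cite{RobertsonS99} whose boundary is only $\leq 2$ embedded vertices, contradicting~(11.1) of~\cite{RobertsonS99}. No modification of the near-embedding is needed.

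Your constructive route has genuine gaps that are not just bookkeeping. First, the progress measure is explicitly left undecided (``the exact measure will be chosen to make the case analysis close''); without a fixed well-founded measure that provably increases in every case, there is no termination argument and hence no proof. Second, the tangle-respecting step is unsound as stated: you argue that the piece you absorb into a small vortex is the small side of an order-$\leq 2+\alpha$ separation of $G$ and therefore cannot contain a large side of $\T$, but a tangle of order $\geq\theta$ orients \emph{every} separation of order $\leq\theta$, and nothing forces it to declare your chosen side ``small'' --- it may well orient it the other way, in which case your new vortex swallows the large side of a separation in $\T$, violating the respecting condition. Third, your dismissal of the interaction with large vortices is too quick: the fact that the wheel on $\Omega_i\cup\{x_i\}$ is triconnected only says the wheel lies on one side of the cut; it does not say the cut ``misses the disk,'' and the part you detach can still strand vertices of $G_i$ or of the comb $C_i$, which you would then need to rebuild. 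Finally, and more structurally, the lemma is really a statement about the near-embedding that Theorem~\ref{thm:GMlocal} produces, and attempting to reach it by local surgery from an arbitrary $\alpha$-near local embedding satisfying~\ref{p:wellAttached} amounts to re-deriving a nontrivial part of the Graph Minors XVII argument rather than quoting it.
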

\begin{proof}
	Suppose $G_0^*$ has a separation $(X,Y)$ of order $\leq 2$.
	Since societies of small vortices induce cliques in $G_0^*$, they are contained in $X$ or $Y$. 
	Similarly for each large vortex $G_i \in \V$, its society $\Omega_i$, together with the new vertex in $G_0^*$ with $\Omega_i$ as neighborhood, induces a wheel in $G_0^*$ , which is triconnected, hence contained in $X$ or $Y$.
	In the language of~\cite{RobertsonS99}, this translates back into a partition $I$, $J$ of cells, whose intersection corresponds to at most two embedded vertices $v,v'$ (that is, $\gamma(I) \cap \gamma(J) = \{v, v'\}$).
	But this contradicts (11.1) in~\cite{RobertsonS99}, which states that such an intersection has size at least 3.
\end{proof}

\begin{lemma}\label{lem:localSmall}
	In Lemma~\ref{lem:localConnected}, we can additionally assume that there is a constant $\alpha'$ depending on $H$ only such that:
	\begin{enumerate}[(i),resume=localStructure]
		\item\label{p:size} for any integer $k$, if $G$ does not contain a path of length $\geq k$, then $|V(G_0)| < \alpha'\cdot k^{22}$.
	\end{enumerate} 
\end{lemma}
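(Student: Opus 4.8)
The plan is to argue that a large embedded part $G_0$ would force a long path in $G$, contradicting the hypothesis. By Lemma~\ref{lem:localConnected} we may assume property~\ref{p:triconnected}, so $G_0^*$ is triconnected, and by Definition~\ref{def:G0star} it embeds in the genus-$\leq\alpha$ surface that already carries $G_0$. If $|V(G_0^*)|<3$ there is nothing to prove, so assume otherwise and apply Theorem~\ref{thm:embeddedCircumference}: $G_0^*$ contains a cycle $D$ of length $L\geq |V(G_0^*)|^{\varepsilon}/2^{(2\alpha+3)^2}\geq |V(G_0)|^{\varepsilon}/2^{(2\alpha+3)^2}$. The remaining work is to turn this cycle, which lives in the auxiliary graph $G_0^*$, into a genuine simple path of $G$ while losing only a $\mathrm{poly}(k)$ factor in length.

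I would first deal with the large vortices. If some large vortex $G_i$ has $|\Omega_i|\geq k$, then by property~\ref{p:comb} the comb $C_i\subseteq G_i\cup G_0\cup\bigcup\W\subseteq G$ contains a path of length at least $|\Omega_i|\geq k$ and we are done; so assume $|\Omega_i|<k$ for every large vortex. Then the large-vortex societies together contain fewer than $\alpha k$ vertices, and $G_0^*$ has at most $\alpha$ \emph{new} vertices $v_i^*$. Deleting all of these (fewer than $\alpha(k+1)$) vertices from $D$ splits $D$ into $\Oh_H(k)$ arcs, so some arc $D_1$ is a path with $|E(D_1)|\geq L/\Oh_H(k)$. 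Since $D_1$ avoids every new vertex and every large-vortex society vertex, and every virtual edge incident to a large vortex has both endpoints in $\Omega_i\cup\{v_i^*\}$, the path $D_1$ uses only edges of $G_0$ itself together with society-cycle edges lying inside societies of \emph{small} vortices.

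Next I would lift $D_1$ across the small vortices. Fix a small vortex $G_i$; since $|\Omega_i|\leq 3$, its society-cycle edges form a triangle on $\Omega_i$, and the edges of $D_1$ inside that triangle form a subpath $R_i$ of it (any two edges of a triangle share a vertex), whose endpoints lie in $\Omega_i$. By property~\ref{p:wellAttached} there is a path in $G_i$ between the endpoints of $R_i$ with no internal vertex in $\Omega_i$; substitute it for $R_i$ (if $R_i$ has a middle vertex, that vertex has degree two in $D_1$ via the two triangle edges, so it is used nowhere else and can simply be dropped). Because the subgraphs $G_0,G_1,\dots,G_n$ are edge-disjoint, $V(G_i)\cap V(G_0)=\Omega_i$, and $V(G_i)\cap V(G_j)\subseteq V(G_0)$, the internal vertices of all these substitution paths are private to their vortex; hence the resulting object $P^*$ is a genuine simple path in $G$ with $|E(P^*)|\geq |E(D_1)|\geq |V(G_0)|^{\varepsilon}/\Oh_H(k)$. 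Choosing the constant $\alpha'$ large enough in terms of $H$ (using that $\varepsilon$ is an absolute positive constant and that $22\varepsilon\geq 2$ is amply satisfied), $|V(G_0)|\geq \alpha' k^{22}$ would give $|E(P^*)|\geq k$, contradicting that $G$ has no path of length $\geq k$. Hence $|V(G_0)|<\alpha' k^{22}$.

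The delicate part is the cycle-lifting. A priori $D$ might pass through a single vortex in several disjoint stretches, while property~\ref{p:comb} and property~\ref{p:wellAttached} supply only one connecting structure per vortex, so an edge-by-edge substitution could reuse vertices and fail to yield a simple path. The argument avoids this by handling the two vortex types asymmetrically: large vortices are few ($\leq\alpha$) and, after the ``$|\Omega_i|\geq k$ or done'' dichotomy, have small societies, so we may delete \emph{all} of their society vertices from $D$ at the cost of only a $\mathrm{poly}(k)$ factor; and small vortices have societies of size at most $3$, which forces $D_1$ to meet each of them in a single subpath of a triangle --- precisely the case in which the substitution is automatically conflict-free. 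I expect the fiddliest bookkeeping to be tracking exactly which society vertices get skipped, but this is routine given edge-disjointness of the $G_i$ and the identity $\Omega_i=V(G_0)\cap V(G_i)$.
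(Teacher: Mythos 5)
Your proposal is correct and follows essentially the same approach as the paper: apply Theorem~\ref{thm:embeddedCircumference} to the triconnected embedded graph $G_0^*$ to get a long cycle, dispatch large vortices via the comb (either some society has $\geq k$ vertices and the comb provides a long path, or there are few large-vortex-related vertices to delete), and lift the surviving arc to $G$ through property~\ref{p:wellAttached}, using that small societies have at most three vertices to make the substitutions conflict-free. The only nit is that $|E(P^*)|\geq|E(D_1)|$ is not strictly justified (a two-edge subpath $R_i$ of a society triangle might be replaced by a single edge of $G_i$), but since this costs at most a factor two it does not affect the claimed polynomial bound, and the paper's own proof has the same slight slack in its ``each of length at least $2$'' remark.
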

\begin{proof}
For a given graph $H$, let $\alpha$ be the constant given by Theorem~\ref{thm:GMlocal}.
By Theorem~\ref{thm:embeddedCircumference}, there is a universal constant $\varepsilon$ such that any triconnected graph $G_0$ embeddable in a surface of genus at most $\alpha$ contains a cycle of length at least $|G_0|^\varepsilon / 2^{(2\alpha+2)^2}$.
Let $c := \frac{2}{\varepsilon}$ and $\alpha' := (\alpha \cdot 2^{(2\alpha+2)^2})^{1/\varepsilon}$.
For an $H$-minor-free graph $G$, suppose the $\alpha$-near local embedding given by Lemma~\ref{lem:localConnected} has $|V(G_0)| \geq \alpha' \cdot k^{c}$. Then $G_0^*$ is a triconnected graph (by property~\ref{p:triconnected}) embedded in a surface of genus at most $\alpha$ (by Definition~\ref{def:G0star}) with at least as many vertices, so $G_0^*$ must contain a cycle $C$ of length at least $(\alpha' k^{c})^\varepsilon / 2^{(2\alpha+2)^2} = \alpha \cdot k^2$.

If the society $\Omega_i$ of any vortex of the $\alpha$-near local embedding of $G$ has at least $k$ vertices, then $G$ contains a path of at least that length (in the comb from property~\ref{p:comb}), in which case the lemma follows.
Otherwise, there are at most $\alpha$ large vortices, hence at most $\alpha \cdot (k-1)$ vertices in their societies, and at most~$\alpha$ virtual vertices that were inserted as the centers of the wheels. 
Let $P$ be the longest subpath of $C$ between any two such vortex-related vertices (or any subpath not visiting any such vertices, if there are less than two). Then $P$ is a path of length at least $\frac{\alpha \cdot k^2}{\alpha \cdot k} = k$ in $G_0^*$ whose edges in $E(G_0^*)\setminus E(G_0)$ came only from small vortices, by the definition of $G_0^*$ (Definition~\ref{def:G0star}). By property~\ref{p:wellAttached}, these edges can be replaced with paths (each of length at least 2) in the corresponding small vortices, giving a path in $G$. 
Only one or two consecutive edges can come from the same small vortex (since their societies have at most 3 vertices), so the resulting path has length at least $k$.

The constant from Theorem~\ref{thm:embeddedCircumference} given by Chen et al.~\cite{ChenYZ12} is $\varepsilon = \log_{1729}2$, hence the universal constant here is $c=\frac{2}{\varepsilon} = 2 \log_2(1729) < 22$.
\end{proof}

Finally, we deduce Theorem~\ref{thm:global} (the Global Structure Theorem) from Theorem~\ref{thm:GMlocal} (the Local Structure Theorem) by a standard induction, exactly as done by Diestel et al.~\cite[Theorem 4]{DiestelKMW12}. (Note that in case the excluded graph $H$ is planar, we can already conclude the theorem trivially from the fact that there is a tree-decomposition of width $\Oh_H(1)$~\cite{RobertsonS86}).
The only difference is that in the Local Structure Theorem we add the bound on the size of the embedded part from Lemma~\ref{lem:localSmall}.
In the proof from~\cite{DiestelKMW12}, every bag of the created decomposition is either constructed as a set of size bounded as $\Oh_H(1)$, or is constructed from an $\alpha$-near local embedding by taking the vertices of: the embedded part $G_0$, the set of apices $A$, and for each large vortex, the intersection of every two consecutive bags of its path decomposition, each of size at most $\alpha=\Oh_H(1)$.
The number of bags in a path decomposition of a large vortex is equal to the size of its society, and large vortex societies are disjoint subsets of $V(G_0)$. Therefore every tree decomposition bag constructed in the proof has size at most $|V(G_0)| + \alpha + \alpha \cdot |V(G_0)| = \Oh_H(k^{22})$, by Lemma~\ref{lem:localSmall}.
This proves the additional condition we require in Theorem~\ref{thm:global}.

%\begin{theorem}
%	There is a constant $c<22$ such that for every graph $H$ and every integer $m$, there is a constant $\alpha(H)$ such that the following holds, for any integer $k$:
%	for any graph $G$ excluding $H$ as a minor and without a $k$-path and any subset  $Z\subseteq V(G)$ of at most $m$ vertices, $G$ has an $\alpha$-decomposition of width  at most $\alpha \cdot k^c$, with $Z$ contained in the apex set of the root (i.e, of the near embedding of the root's torso given by the $\alpha$-decomposition).
%\end{theorem}
%\begin{proof}
%	Consider any $H$, $m$. Theorem~\ref{thm:GMlocal} and Lemma~\ref{lem:localSmall} provide integers $\alpha$, $\theta$ and $\alpha'$. We may assume $m$ to be large enough so that $\theta'' := (m+2)/3 \geq \max(3\hat{\alpha}+3, \theta)$ and $\theta''$ is integral. Let $\alpha'' := 4 \theta -2$.
%
%	For these now fixed $R,m,\alpha''$, we prove the theorem by induction on $|G|$.
%	Assume $|Z|=m$, otherwise we add arbitrary vertices to it (or conclude the theorem trivially if there are no more vertices, implying $|G|<m<\alpha''$).
%	
%\end{proof}

\section{Excluding a topological minor}\label{sec:topological}
In this section we tackle graph classes excluding a topological minor, that is, we prove Theorem~\ref{thm:topo-minor-main} by proving the following.

\begin{theorem}\label{thm:topo-main-aux}
For every fixed graph $H$, the \kPath{} problem restricted to $H$-topological-minor-free graphs
can be solved in time
$\Oh_H(n^2m)$
using $\Oh_H(kn)$
calls to \kLinkage{} on instances being induced subgraphs of the input graph
of size $k^{\Oh_H(1)}$.
\end{theorem}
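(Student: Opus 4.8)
The plan is to mirror Section~\ref{sec:minors} step for step, replacing the graph minors structure theorem by the structure theorem for graphs excluding a fixed topological minor of Grohe and Marx~\cite{GroheM15}. The technical heart to establish is that for every fixed $H$ the class of $H$-topological-minor-free graphs is $(\tdw,\tdadh)$-decomposable with $\tdw(k) = k^{\Oh_H(1)}$ and $\tdadh = \Oh_H(1)$. Once this is in hand, the derivation of Theorem~\ref{thm:topo-main-aux} is essentially the one given for Theorem~\ref{thm:minor-main-aux}: Lemma~\ref{lem:adh-deg-bound} (already phrased for topological minors) bounds the adhesion degree of a connected optimal decomposition by $f(\tdadh, H)\cdot \tdw(k) = k^{\Oh_H(1)}$; running the algorithm of Theorem~\ref{thm:unbreakable} for $\sepord := \tdadh$ and applying Lemma~\ref{lem:unbreakableLocal} together with Lemma~\ref{lem:adh-deg-bound} produces, in time $\Oh_H(n^2 m)$, either a $k$-path or an approximate decomposition of width $k^{\Oh_H(1)}$, adhesion $\Oh_H(1)$, and adhesion degree $k^{\Oh_H(1)}$; Lemma~\ref{lem:decomp2sep} turns this into a $(\tdadh, \sepmax, T)$-separation oracle with $\sepmax$ polynomial in $k$ and $\sepmin$; and Lemma~\ref{lem:generic} then yields the claimed bounds (time $\Oh_H(n^2 m)$, $\Oh_H(kn)$ calls to \kLinkage{} on induced subgraphs of size $k^{\Oh_H(1)}$). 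So the whole section reduces to the decomposability claim.

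To prove decomposability I will use the Grohe--Marx structure theorem in the same ``local and strong'' spirit in which Section~\ref{sec:minors} uses Graph Minors~XVII: working with node torsos and retaining the combinatorial fine structure (combs around vortices, triconnectivity of the augmented embedded part) rather than with a black-box global decomposition. In that theorem each node torso is, after deleting an apex set of size $\Oh_H(1)$, of one of two kinds: (a) \emph{almost-embeddable} in a surface of genus $\Oh_H(1)$, with $\Oh_H(1)$ vortices each carrying a path decomposition of bounded adhesion and a surrounding comb; or (b) of \emph{bounded degree}, meaning all but $\Oh_H(1)$ of its vertices have degree $\Oh_H(1)$ in the torso. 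Parts of kind (a) are handled by the direct analogues of Lemmas~\ref{lem:properlyAttached}--\ref{lem:localSmall}: after the $G_0^*$ construction and the well-attachedness of small vortices, Theorem~\ref{thm:embeddedCircumference} of Chen et al.\ shows that a triconnected embedded part with $|V(G_0)| \geq \alpha' k^{22}$ forces a $k$-path in $G$, so we may assume every such bag has size $\Oh_H(k^{22})$.

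Parts of kind (b) are the genuinely new case and use circumference lower bounds for bounded-degree graphs. First I fold the $\Oh_H(1)$ high-degree vertices of such a torso into the apex set, so that the remainder has maximum degree $\Oh_H(1)$ even counting the virtual edges of adhesion cliques (each adhesion has size $\Oh_H(1)$, hence contributes $\Oh_H(1)$ to any degree). Then I refine the decomposition further so that these bounded-degree torsos become triconnected, which can be done while keeping the adhesion bounded by a constant and the degree bounded. By the circumference bound for triconnected graphs of bounded degree of Chen, Gould, Yu, Zhu and of Shan~\cite{ChenGYZ06,Shan15}, a triconnected $n$-vertex graph of maximum degree $d$ contains a cycle of length $n^{\Omega_d(1)}$; hence a large bounded-degree torso contains a long cycle, which --- exactly as in kind~(a) --- must be converted into a long path in $G$. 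The conclusion is that such torsos also have size $k^{\Oh_H(1)}$, where the exponent depends on $H$ since the circumference exponent degrades with the degree bound $d = d(H)$; this is why we only obtain $k^{\Oh_H(1)}$ rather than the clean $k^{22}$ of the minor case. Assembling the per-node near embeddings along the decomposition tree as in the deduction of Theorem~\ref{thm:global} from Theorem~\ref{thm:GMlocal} then gives, for any $H$-topological-minor-free $G$ without a $k$-path, a tree decomposition of width $k^{\Oh_H(1)}$ and adhesion $\Oh_H(1)$.

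The main obstacle is the one already present in Section~\ref{sec:minors} and flagged in the introduction: reconciling triconnectivity with virtual edges. A long cycle in a torso need not project to a long path in $G$, because virtual edges encode detours through the rest of the decomposition, and conversely the surgeries that eliminate virtual edges can destroy triconnectivity. For kind-(a) bags this is resolved by the $G_0^*$ construction together with properties~\ref{p:wellAttached}--\ref{p:comb}. The new difficulty for kind-(b) bags is replacing the virtual edges used by a long cycle in the bounded-degree torso by \emph{vertex-disjoint} paths in $G$: here bounded adhesion and bounded degree help, since virtual edges belonging to distinct adhesions of a node route through disjoint subtrees of the decomposition, only $\Oh_H(1)$ virtual edges of a single adhesion can be incident to any cycle vertex, and connectedness of the decomposition (Section~\ref{sec:prelims}) supplies the needed internally-disjoint connecting paths. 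A secondary point, again as in Section~\ref{sec:minors}, is that no polynomial-time algorithm computing the Grohe--Marx decomposition is available, so the decomposability statement is purely existential and all algorithmic content is delegated to Theorem~\ref{thm:unbreakable} via Lemma~\ref{lem:unbreakableLocal}.
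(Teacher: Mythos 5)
Your high-level plan matches the paper's: prove that $H$-topological-minor-free graphs are $(k^{\Oh_H(1)},\Oh_H(1))$-decomposable and feed this into the machinery of Section~\ref{sec:turing} via Lemma~\ref{lem:adh-deg-bound}, Theorem~\ref{thm:unbreakable}, Lemma~\ref{lem:unbreakableLocal}, Lemma~\ref{lem:decomp2sep}, and Lemma~\ref{lem:generic}. The split into near-embeddable versus almost-bounded-degree bags from Grohe--Marx, using Theorem~\ref{thm:embeddedCircumference} for the former and Theorem~\ref{thm:degreeCircumference} (with a Tutte-style triconnected refinement) for the latter, is also what the paper does (cf.\ Corollary~\ref{cor:groheMarx} and Lemma~\ref{lem:topoLocalStructure}). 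But your treatment of the bounded-degree case --- which you correctly flag as the genuinely new step --- has a gap in exactly the place you identify as the main obstacle, namely the virtual edges.

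Your plan finds a long cycle in a triconnected refinement of the torso and then tries to replace each virtual edge by a path through the corresponding subtree. The problem is that a single cycle may use \emph{several} virtual edges arising from the \emph{same} adhesion; all of these would have to be realized by internally vertex-disjoint paths through the same subtree $T_{t'}$, and neither connectedness of the decomposition nor bounded adhesion nor bounded degree guarantees that such disjoint paths exist (a cut vertex in $G[\X(T_{t'})\setminus\X(t)]$ can block them). Your remark that ``virtual edges belonging to distinct adhesions of a node route through disjoint subtrees'' is true but does not address collisions within one adhesion, and ``connectedness of the decomposition supplies the needed internally-disjoint connecting paths'' asserts more than connectedness gives. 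The paper sidesteps this entirely by reversing the order of operations: instead of finding a long cycle in the torso (which contains virtual edges) and then trying to realize it, it first builds the auxiliary graph $G^*_{\text{root}}$, obtained by contracting each leaf interior $\X(t)\setminus\X(t_{\text{root}})$ to a single vertex and merging duplicates. This graph is a genuine \emph{minor} of $G$, so it contains no virtual edges at all, and any long path found in it projects directly to a long path in $G$. Crucially, each adhesion becomes a single vertex $x_t$, so a simple path passes through each adhesion's subtree at most once. The paper then shows $G^*_{\text{root}}$ still has $(|V(H)|,\Oh_H(1))$-bounded degree --- this requires a small argument that a low-degree vertex lies in at most $2^d$ distinct adhesions, which is another point your per-adhesion counting does not quite capture --- applies the Tutte decomposition and Theorem~\ref{thm:degreeCircumference} to bound its size, and only afterwards converts the resulting decomposition of $G^*_{\text{root}}$ back into one of the torso $G_{\text{root}}$ and finally into a tangle-respecting star decomposition of $G$ (Claim 5 of Lemma~\ref{lem:topoLocalStructure}); the tangle-respecting step is also needed for the global assembly and is glossed over in your sketch. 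So the route you describe is the right one, but the virtual-edge-replacement step as stated would not go through; the $G^*_{\text{root}}$ minor construction is the missing idea.
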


This follows as before from the following decomposability theorem. 
Note the exponent in the polynomial bound on width (bag size) now depends on $H$.

\begin{theorem}\label{thm:topo-decomposable}
For every graph $H$, the class of $H$-topological-minor-free graphs is $(\tdw, \tdadh)$-decomposable
for $\tdw(k) = k^{\Oh_H(1)}$ and $\tdadh = \Oh_H(1)$.
\end{theorem}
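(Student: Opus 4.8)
## Proof plan for Theorem~\ref{thm:topo-decomposable}

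\textbf{The plan is to} mirror the development of Section~\ref{sec:minors} for the minor-free case, replacing the Graph Minors Structure Theorem with the Grohe–Marx structure theorem for graphs excluding a fixed topological minor, and replacing the circumference lower bound for embedded graphs (Theorem~\ref{thm:embeddedCircumference}) with the analogous circumference lower bound for bounded-degree graphs. Concretely, I would first state and use a \emph{local} version of the Grohe–Marx theorem: for every $H$ there are constants $\alpha,\theta$ such that, for every graph $G$ and every tangle $\T$ of order at least $\theta$ that controls no $H$-topological-minor, $G$ has an $\alpha$-near local embedding respecting $\T$, \emph{except} that one bag-type is now allowed to be an arbitrary graph of maximum degree at most $\alpha$ (the ``bounded-degree part'', playing the role the embedded part $G_0$ plays in the minor case, but now there may be one such part that is not surface-embeddable). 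This is exactly the shape of the Grohe–Marx decomposition: each torso is, after deleting $\le \alpha$ apices, either nearly embeddable in a bounded-genus surface or of bounded degree. As in Section~\ref{sec:minors}, I would only need the \emph{existence} of such a decomposition, so the lack of a polynomial-time algorithm computing it is irrelevant here; the algorithmic content is supplied entirely by Corollary~\ref{cor:apx-decomp} applied to the resulting decomposability.

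\textbf{Next I would} run the same chain of lemmas as Lemma~\ref{lem:properlyAttached} through Lemma~\ref{lem:localSmall}: make small vortices well-attached, make the relevant ``skeleton'' graph triconnected by the same wheel/clique argument, and then bound the size of the large part in terms of the longest path. For the surface-embeddable parts this is verbatim Lemma~\ref{lem:localSmall} (using Theorem~\ref{thm:embeddedCircumference}). For a bounded-degree part $B$, instead I invoke the circumference lower bound for triconnected bounded-degree graphs of Chen–Gao–Yu–Zang~\cite{ChenGYZ06} (and Shan~\cite{Shan15}): a triconnected graph on $n$ vertices with maximum degree $d$ contains a cycle of length $\Omega_d(n^{\beta_d})$ for some $\beta_d>0$ depending only on $d$. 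Taking $B$ triconnected (again by the wheel-insertion trick to absorb vortex societies and virtual/apex vertices), if $|V(B)|$ exceeds $k^{c_H}$ for a suitable constant $c_H = \Oh_H(1)$, we get a long cycle in the skeleton, split it at the $\Oh_H(k)$ vortex-related vertices to obtain a long subpath whose extra (virtual) edges come only from small vortices, and expand those via property~\ref{p:wellAttached} into a genuine $\ge k$-path in $G$, contradicting the hypothesis. Hence every bag of the resulting $\alpha$-decomposition has size $k^{\Oh_H(1)}$ and adhesion $\Oh_H(1)$, and assembling the global decomposition from the local one is the same standard induction as in~\cite{DiestelKMW12} / the proof of Theorem~\ref{thm:global}. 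This yields $(\tdw,\tdadh)$-decomposability with $\tdw(k) = k^{\Oh_H(1)}$ and $\tdadh = \Oh_H(1)$.

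\textbf{The hard part will be} handling the bounded-degree parts cleanly, for two reasons. First, the circumference lower bounds of~\cite{ChenGYZ06,Shan15} give an exponent $\beta_d$ that degrades with the degree bound $d$, which is why the final width exponent here genuinely depends on $H$ (unlike the uniform exponent $22$ in Theorem~\ref{thm:minor-decomposable}); one must simply carry $\beta_d$ through the arithmetic, which is routine but must be done. Second, and more delicate, a bounded-degree torso can still contain a mix of real edges, virtual edges from vortex path-decompositions, and virtual edges to apices; to apply the circumference theorem one needs the skeleton graph on which it is invoked to still have bounded degree after the wheel insertions and after contracting the vortex apparatus. Since large-vortex societies are bounded in number ($\le\alpha$) and each added wheel-center has degree $|\Omega_i|$ which we may assume is $<k$ (else we already have a $k$-path through the comb, exactly as in Lemma~\ref{lem:localSmall}), the only vertices of unbounded degree are these $\Oh_H(1)$ special vertices, so the cycle-splitting argument removes them and the bounded-degree bound on $B$ suffices for the rest. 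Formalizing that this ``remove $\Oh_H(k)$ special vertices from the long cycle'' step loses only a factor $\Oh_H(k)$ in the path length, uniformly, is the same computation as in Lemma~\ref{lem:localSmall} and is where I would be most careful. Everything else — well-attachedness, triconnectivity, the global induction — transfers with no new ideas.
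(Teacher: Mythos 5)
There is a genuine gap: you are importing the vortex/embedding machinery into the bounded-degree case, where it does not exist, and you assume the bounded-degree torso can be made triconnected by a wheel-insertion trick, which has no analogue there. In the Grohe--Marx structure (Theorem~\ref{thm:groheMarx}, used via Corollary~\ref{cor:groheMarx}), the new case is a star decomposition whose root torso merely has $(|V(H)|,d)$-bounded degree --- there are no vortices, no societies, no surface, and in particular no analogue of (11.1) of~\cite{RobertsonS99} forcing triconnectivity after absorbing vortex apparatus. An almost-bounded-degree graph can have arbitrary $1$- and $2$-cuts, so you cannot directly invoke Theorem~\ref{thm:degreeCircumference} on it, and the ``run Lemmas~\ref{lem:properlyAttached}--\ref{lem:localSmall} verbatim'' plan breaks down at the very first step (there is nothing to ``well-attach'').

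The paper's proof (Lemma~\ref{lem:topoLocalStructure}) therefore takes a different route. First it replaces the root torso by $\Groot^*$, obtained from $G$ by \emph{contracting} each leaf bag of the star decomposition to a single vertex (and merging those with equal neighborhood); $\Groot^*$ is a minor of $G$ (so no $k$-path), has $(|V(H)|,\max(\alpha,d+2^d))$-bounded degree, and crucially has no virtual edges. Then, after deleting the $\le|V(H)|$ high-degree vertices, it is \emph{decomposed} via Tutte's triconnected-component decomposition into triconnected torsos of adhesion $\le 2$, and only these pieces are small enough to feed into Theorem~\ref{thm:degreeCircumference}. This produces a tree decomposition of $\Groot^*$ with small bags, which must then be converted back to a tree decomposition of the torso $\Groot$ (re-inserting virtual edges, expanding the contracted vertices $x_t$, a factor-$\alpha$ loss), and finally re-assembled into a star decomposition of $G$ respecting the tangle $\T$ (the paper's Claims 5 and 6, which need a careful orientation argument on $(S,\Y)$ via tangle extensions and a re-derivation of the bounded-degree property for the new root torso). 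None of these steps appear in your sketch, and the ``remove $\Oh_H(k)$ special vertices from the long cycle'' computation, while correct for the embedded branch, is not the bottleneck here; the bottleneck is producing a triconnected subgraph to apply the circumference bound to at all. Also minor but worth noting: your planned output for the bounded-degree branch is a single small bag, whereas what is actually needed (and proved) is that the root bag itself admits a further tree decomposition of small width and bounded adhesion, which is then spliced into the global decomposition.
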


We proceed with the proof of the above theorem.
Grohe and Marx~\cite{GroheM15} proved that when excluding a topological minor, graphs admit a similar structure as for excluding a minor, but apart from nearly embeddable parts, one needs to consider parts that have bounded degree except for a bounded number of vertices:

\begin{definition}
A graph $G$ has \emph{$(a,d)$-bounded degree} (for $a,d \in \mathbb{N}$)
if $\leq a$ vertices of $G$ have degree $>d$ in $G$.
%An \emph{$\alpha$-topological decomposition} of a graph $G$ is a rooted tree decomposition $(T,\X)$ of $G$ of adhesion at most $\alpha$ such that for each $t\in V(T)$, either $\torso(G,\X(t))$ has $(\alpha,\alpha)$-bounded degree, or it has an $\alpha$-near embedding.
\end{definition}

The main `Global Structure Theorem' of~\cite{GroheM15} is that graphs excluding a topological minor $H$ admit a tree decomposition of bounded adhesion, where the torso of every bag is either nearly embeddable (as when excluding a minor), or has $(h,d)$-bounded degree, for $d=\Oh_H(1)$.
We still need the slightly stronger notion of near embeddability (with combs) for our proofs, which is why we will work with the following definition of `Local Structure', analogous to $\alpha$-near local embeddings.

A \emph{star decomposition} of a graph is a (rooted) tree decomposition $(\T,\X)$ where $T$ is a star: a root node $t_{\text{root}}$ with only leaf nodes attached.
We call the bag $\X(t_{\text{root}})$ the \emph{root bag} and the remaining ones \emph{leaf bags}.
The \emph{root torso} is $\torso(G, \X(t_{\text{root}}))$. 
A star decomposition \emph{respects} a tangle $\T$ if the large side of any separation in $\T$ is not fully contained in any single leaf bag.

Grohe and Marx~\cite[Lemma 6.12]{GroheM15} proved the following Local Structure Theorem.%
\footnote{Grohe and Marx~\cite{GroheM15} use a slightly stronger definition of \emph{respecting a tangle}, which is however equivalent to the conjunction of this one and the assertion that all adhesions in the star decomposition are strictly smaller than the order of $\T$, which we prefer to state separately. They also assume a slightly weaker condition than that of a tangle controlling a minor (which they call \emph{not being removed by} the tangle), but it is trivially implied.}

\begin{theorem}[\cite{GroheM15}]\label{thm:groheMarx}
For every integer $k$ there exist integers $\theta,d,k'$ such that:
for every graph $G$ excluding $K_k$ as a topological minor and every tangle $\T$ of order $\geq\theta$ that controls a $K_{k'}$-minor,
there is a star decomposition of $G$ that respects $\T$, has adhesion $<\theta$, and such that the root torso has $(k,d)$-bounded degree.
\end{theorem}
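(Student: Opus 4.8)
The plan is to follow Grohe and Marx and treat this as the \emph{dense} branch of the structure theory: a tangle that controls a large clique minor forces the highly connected part of $G$ to have almost-bounded degree, because otherwise one could route a subdivision of $K_k$ through it. The integers $\theta, d, k'$ are chosen by us at the end; the argument needs $k'$ large enough to expose a sizeable well-linked set inside the tangle, $\theta = \Theta(k^2)$ so that $\Theta(k^2)$ vertex-disjoint paths can be pushed ``through'' the tangle, and $d = \Theta(\theta k)$ so that high torso-degree translates to many genuine neighbours.

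\emph{Building a workspace.} Since $\T$ has order $\ge \theta$ and controls a $K_{k'}$-minor, a standard argument converts the minor model into a well-linked set $W$ whose size grows with $k'$ and which is ``deep'' in $\T$, meaning that no separation of $\T$ splits off a large fraction of $W$ and that any two equal-sized subsets of $W$ of size at most $|W|/3$ are joined by vertex-disjoint paths. Next I would invoke the usual correspondence between tangles and decompositions: take an inclusion-maximal family of separations $(A_i, B_i) \in \T$ with pairwise disjoint interiors $A_i \setminus B_i$, adhesions $A_i \cap B_i$ of order $< \theta$, and with $\bigcup_i (A_i\setminus B_i)$ avoiding a fixed large portion of $W$. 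This yields a star decomposition with root bag $B := V(G) \setminus \bigcup_i(A_i\setminus B_i) \supseteq W$, leaf bags $A_i$, and root torso $\torso(G, B)$; by the tangle axioms it respects $\T$, and by construction its adhesion is $< \theta$. Finally, normalize it to be \emph{connected} in the sense of Section~\ref{sec:prelims}, so that every adhesion vertex has a neighbour in the interior of each leaf containing it; this keeps width, adhesion and the inclusion $W \subseteq B$ intact.

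\emph{The root torso has $(k,d)$-bounded degree.} Suppose not, and let $v_1,\dots,v_k \in B$ have torso-degree $> d$. Each torso edge at $v_j$ is either a real $G$-edge into $B$ or comes from one of the fewer than $\theta$ vertices of an adhesion containing $v_j$; since the decomposition is connected and the interiors $A_i\setminus B_i$ are pairwise disjoint, $v_j$ has a distinct real neighbour in each leaf whose adhesion contains it, so a short count gives $\deg_G(v_j) > d/(2\theta) \ge k-1$. Reserve $k-1$ private neighbours of each $v_j$, obtaining $k(k-1)$ stubs. Because $B$, hence each $v_j$, is $\theta$-deep in $\T$ and $\theta \gg k^2$, route pairwise vertex-disjoint paths from these stubs to $k(k-1)$ distinct vertices of $W$ avoiding $\{v_1,\dots,v_k\}$, by iterated Menger arguments carried out inside the large sides of $\T$. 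Then use the well-linkedness of $W$ (its size exceeding $2\binom{k}{2}$ for a generous choice of $k'$) to join the landing vertices in the adjacency pattern of $K_k$; concatenating the pieces produces $\binom{k}{2}$ internally disjoint $v_iv_j$-paths, i.e.\ a subdivision of $K_k$ in $G$ --- contradicting the hypothesis. Hence fewer than $k$ vertices of $\torso(G,B)$ have degree exceeding $d$, which is exactly the required property; it then remains to fix $\theta, d, k'$ satisfying the inequalities used above.

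The main obstacle is the simultaneous disjoint routing in the middle step: one must push $\Theta(k^2)$ tentacles from the branch vertices into the well-linked set while keeping them disjoint from each other and from $\{v_1,\dots,v_k\}$, then align them with the linkage realized inside $W$, all the while certifying that each partial path stays on the correct (large) sides of the relevant separations of $\T$ so that Menger's theorem and well-linkedness remain applicable. This is the familiar difficulty of realizing topological clique minors in highly connected graphs; tracking which sides of which separations each partial path occupies, and calibrating the exact polynomial dependence of $\theta, d, k'$ on $k$, is where the real work lies. A secondary technical point is to carry out the star-decomposition construction and the connectedness normalization so that a large well-linked remnant of $W$ provably survives inside the root bag.
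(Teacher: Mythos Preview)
The paper does not prove this statement at all: Theorem~\ref{thm:groheMarx} is quoted verbatim as Lemma~6.12 of Grohe and Marx~\cite{GroheM15} and used as a black box. There is no ``paper's own proof'' to compare against; the paper only uses the theorem, combining it with Theorem~\ref{thm:GMlocal} to obtain Corollary~\ref{cor:groheMarx}.

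What you have written is therefore not an alternative to anything in the paper but a sketch of (part of) the Grohe--Marx argument itself. In broad strokes your plan matches their strategy: exploit the controlled $K_{k'}$-minor to obtain a large well-linked set deep in $\T$, build a star decomposition whose root bag contains this set, and then argue that more than $k$ high-degree vertices in the root torso could be turned into branch vertices of a $K_k$-subdivision via disjoint routing through the well-linked set. You correctly flag the disjoint-routing step as the heart of the matter. Two remarks on where your sketch would need tightening to become a proof. First, your star decomposition is produced by an ``inclusion-maximal family of separations with pairwise disjoint interiors''; as stated this does not pin down the root bag well enough to guarantee that the high-degree vertices you later find actually sit on the large side of every relevant separation, which you need for the Menger arguments. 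In Grohe--Marx the decomposition is built more carefully so that the root bag is precisely the ``$\T$-central'' part. Second, the degree-conversion inequality $\deg_G(v_j) > d/(2\theta)$ is morally right after your connectedness normalization, but the routing afterwards needs the $k-1$ private neighbours of each $v_j$ to be separable from $\{v_1,\dots,v_k\}$ and from each other by the tangle --- this is not automatic from high $G$-degree alone and is where Grohe--Marx spend real effort.
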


\begin{corollary}\label{cor:groheMarx}
For every graph $H$ there exist integers $\theta,\alpha,d$ such that:
for every graph $G$ excluding $H$ as a topological minor and every tangle $\T$ in $G$ of order $\geq\theta$, $G$ has one of the following:
\begin{itemize}
\item a star decomposition that respects $\T$, has adhesion $\alpha$, and such that the root torso has $(|V(H)|,d)$-bounded degree, or
\item an $\alpha$-near local embedding that respects $\T$, satisfying properties~\ref{p:wellAttached}-\ref{p:size}.
\end{itemize}
\end{corollary}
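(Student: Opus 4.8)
The plan is to reduce the statement to a dichotomy on whether the given tangle $\T$ controls a large clique minor, and then, in the two cases, to invoke respectively the topological-minor Local Structure Theorem of Grohe and Marx (Theorem~\ref{thm:groheMarx}) and the minor Local Structure Theorem (Theorem~\ref{thm:GMlocal}) together with its upgrades (Lemmas~\ref{lem:properlyAttached}--\ref{lem:localSmall}).

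Concretely, I would set $k := |V(H)|$; since $H$ is a subgraph of $K_k$ and the topological-minor relation is transitive, any $G$ excluding $H$ as a topological minor also excludes $K_k$ as a topological minor. Applying Theorem~\ref{thm:groheMarx} to this $k$ produces integers $\theta_1, d, k'$ depending only on $H$. Then, for a tangle $\T$ of order at least $\theta := \max(\theta_1,\theta_2)$ (with $\theta_2$ fixed below), I split into two cases. If $\T$ controls a $K_{k'}$-minor, Theorem~\ref{thm:groheMarx} directly gives a star decomposition respecting $\T$, of adhesion $<\theta_1$, whose root torso has $(k,d)=(|V(H)|,d)$-bounded degree; that is the first alternative, and since the Grohe--Marx notion of respecting a tangle is at least as strong as the one used here (see the footnote to Theorem~\ref{thm:groheMarx}), nothing further is needed. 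If instead $\T$ controls no $K_{k'}$-minor, I apply Theorem~\ref{thm:GMlocal} with the excluded graph taken to be $K_{k'}$ rather than $H$ --- legitimate precisely because, as stressed right after that theorem, it requires only that $\T$ controls no $K_{k'}$-minor, not that $G$ be $K_{k'}$-minor-free. This yields constants $\alpha_2,\theta_2$ depending on $k'$ (hence on $H$) and an $\alpha_2$-near local embedding of $G$ respecting $\T$. I would then successively apply Lemma~\ref{lem:properlyAttached}, Lemma~\ref{lem:localConnected}, and Lemma~\ref{lem:localSmall}, each strengthening the embedding to also satisfy property~\ref{p:wellAttached}, then~\ref{p:triconnected}, then~\ref{p:size}; each of these lemmas merely reshapes the embedding using structural facts about it drawn from~\cite{RobertsonS99}, without re-using any global minor-freeness of $G$, so the chain carries over verbatim. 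Setting $\alpha := \max(\theta_1,\alpha_2)$ then gives the second alternative, with all constants depending only on $H$.

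The one genuinely non-mechanical point --- and the main obstacle --- is the legitimacy of this case split: Theorem~\ref{thm:GMlocal} and the whole chain of Lemmas~\ref{lem:properlyAttached}--\ref{lem:localSmall} must be invoked in the form whose sole hypothesis is that the tangle controls no $K_{k'}$-minor, for exactly the $K_{k'}$ handed down by Theorem~\ref{thm:groheMarx}. If one instead demanded $K_{k'}$-minor-freeness of $G$ itself, the argument would collapse, since a graph excluding $K_{k'}$ as a topological minor need not exclude it as a minor. The remaining issues --- reconciling the two ``respects a tangle'' conventions and absorbing the bound ``adhesion $<\theta_1$'' of the star decomposition into ``adhesion $\leq\alpha$'' --- are immediate once $\alpha$ and $\theta$ are chosen large enough.
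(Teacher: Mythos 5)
Your proposal is correct and follows essentially the same route as the paper: for $k=|V(H)|$ you invoke Theorem~\ref{thm:groheMarx} to obtain $\theta_1,d,k'$, split on whether $\T$ controls a $K_{k'}$-minor, use Theorem~\ref{thm:groheMarx} in the affirmative case and Theorem~\ref{thm:GMlocal} (in its tangle-only form) plus Lemmas~\ref{lem:properlyAttached}--\ref{lem:localSmall} in the negative case, then set $\theta$ and $\alpha$ as suitable maxima. The paper's proof is terser in that it writes ``the claim follows from Theorem~\ref{thm:GMlocal}'' without explicitly re-invoking the lemma chain, but the idea, the case split, and the choice of constants are the same as yours.
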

\begin{proof}
	For $k=|V(H)|$ let the constants given by Theorem~\ref{thm:groheMarx} be $\theta',d',k'$.
	Let $\alpha'',\theta''$ be the constants given by Theorem~\ref{thm:GMlocal} for graphs excluding $K_{k'}$ as a minor.
	We prove the claim for $\theta = \max(\theta',\theta'')$, $\alpha = \max(\theta',\alpha'')$ and $d=d'$.
	If the tangle $\T$ of $G$ controls no $K_{k'}$-minor, then the claim follows from Theorem~\ref{thm:GMlocal}. Otherwise $\T$ does control an $K_{k'}$-minor and Theorem~\ref{thm:groheMarx} applies.
\end{proof}

The main theoretical tool we use, besides the Grohe and Marx decomposition in the form of Corollary~\ref{cor:groheMarx}, is the following lower bound on cycles in graphs of bounded degree by Shan~\cite{Shan15} (we note the constant $\Delta-1$ is asymptotically optimal; for small $\Delta$, previous results by Chen et al.~\cite{ChenGYZ06} give a slightly better bound). %of $n^{1/\log_2(\max(4\Delta+1,64))}/2 + 3$).
\begin{theorem}[\cite{Shan15}]\label{thm:degreeCircumference}
If G is a triconnected graph with maximum degree at most $\Delta \geq 425$, then $G$ has a cycle of length at least $n^{1/\log_2(\Delta-1)}/4 + 2$.
\end{theorem}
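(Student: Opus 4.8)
The plan is to prove Shan's bound by divide and conquer, with the exponent appearing as $2^{\log_{\Delta-1}n}=n^{1/\log_2(\Delta-1)}$: the recursion tree will have branching factor $2$, and triconnectivity forces a splitting structure at every step while the degree bound $\Delta$ controls how fast the pieces shrink. Concretely, I would set $g(n):=n^{1/\log_2(\Delta-1)}/4+2$ and prove, by strong induction on $n$, the stronger statement: in every triconnected graph $G$ on $n$ vertices with maximum degree at most $\Delta$, for any two prescribed vertices $x,y$ there is an $x$–$y$ path of length at least $g(n)$. Applying this to two neighbours of a fixed vertex and closing the path up yields the cycle asserted in the theorem. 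The two arithmetic facts that make this choice of $g$ work are that $t\mapsto t^{1/\log_2(\Delta-1)}$ is subadditive (its exponent is $<1$), so concatenating an $a$-path and a $b$-path sharing two endpoints into a cycle gives length $g(a)+g(b)-2\ge g(a+b-\Oh(1))$, and that $(\Delta-1)^{1/\log_2(\Delta-1)}=2$, which is exactly what makes the recursion tight; the factor $1/4$, the additive $+2$, and the hypothesis $\Delta\ge 425$ are what is needed so that the base case (instances of size $\Oh_\Delta(1)$, essentially $K_4$, where a path between two vertices has length up to $4$) holds and the $\Oh(1)$ overlaps introduced at each level are absorbed.

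For the inductive step I would take a cycle $C$ through $x$ and $y$ (three internally disjoint $x$–$y$ paths exist by Menger) and examine its bridges — the components of $G-V(C)$ with their attachment edges, together with the chords. Triconnectivity forces each bridge to have at least three feet on $C$, and since a vertex of $C$ has only $\Delta-2$ non-cycle edges, $C$ meets at most $|C|\,(\Delta-2)/3$ bridges. Choosing $C$ carefully (a longest cycle, or a cycle built by repeatedly absorbing the largest bridge, in the style of Chen, Fan and Yu) then yields the dichotomy: either $C$ is already long enough that we are done, or some bridge $B$ — together with a constant-size gadget over its feet (a single apex vertex over three feet does not suffice, as $K_{2,3}$ shows, so one adds a small wheel over the feet or retains short arcs of $C$) — forms a triconnected graph $G_B$ of maximum degree $\le\Delta$ on at most $|V(B)|+\Oh(1)\le n/(\Delta-1)+\Oh(1)$ vertices, while the complementary ``outside'' similarly forms a triconnected instance of comparable bounded size. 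Recursing on both pieces produces two long paths between a common pair of feet, and their concatenation is an $x$–$y$ path (and, after closing up, a cycle) of length at least $g(n)$ by the subadditivity above.

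The main obstacle, where essentially all the work lies, is the structural splitting lemma underlying the inductive step. One must simultaneously (a) choose $C$ and the bridge $B$ so that, whenever $C$ is not already long enough, both the ``inside'' and ``outside'' are genuine triconnected instances of size at least $\Omega(n/(\Delta-1))$ — so that the recursion terminates with depth $\Oh(\log_{\Delta-1}n)$ rather than stalling on trivial chord-bridges — while there are only two of them, keeping the branching factor $2$; (b) insert the constant-size feet gadget so that both pieces remain triconnected and of degree $\le\Delta$; and (c) carry out the length accounting so that at most an additive $\Oh(1)$ is lost per level, so the recursion really closes at $g(n)=n^{1/\log_2(\Delta-1)}/4+2$. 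Reconciling (a)–(c) is precisely what pins down the exact form of $g$ and forces the large threshold on $\Delta$; for the details I would follow and adapt the bridge-absorption machinery of Chen et al.~\cite{ChenGYZ06} and its refinements, rechecking every constant against the target bound.
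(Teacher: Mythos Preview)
The paper does not prove this theorem at all: it is quoted from Shan~\cite{Shan15} and used purely as a black box in the proof of Lemma~\ref{lem:topoLocalStructure}. There is therefore no ``paper's own proof'' to compare your proposal against, and for the purposes of this paper nothing beyond the citation is required.

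As a sketch of Shan's actual argument your outline is in the right spirit---strong induction proving a two-terminal path bound in triconnected bounded-degree graphs, bridge structure relative to a cycle, and the key numerology $(\Delta-1)^{1/\log_2(\Delta-1)}=2$---but it is not a proof, and you say so yourself. One place where the sketch does not hold together as written: from ``$C$ meets at most $|C|(\Delta-2)/3$ bridges'' you cannot deduce that some bridge has at most $n/(\Delta-1)+\Oh(1)$ vertices; the degree bound controls the number of feet, not the size of any individual bridge, and nothing prevents a single huge bridge with three feet. The genuine splitting lemma (in Shan, building on Chen et al.) is substantially more delicate about how the two recursive pieces are chosen so that both are large enough and both remain triconnected, and that is where the threshold $\Delta\ge 425$ and the constant $1/4$ actually arise. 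If you intend to reprove Shan's theorem you would need to supply that lemma in full; if you merely need the statement for this paper, the citation suffices.
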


With this in hand, we can proceed to bounding the bags of almost bounded degree.

\begin{lemma}\label{lem:topoLocalStructure}
	In Corollary~\ref{cor:groheMarx}, we can additionally assume that in the first outcome, for any integer $k$, if $G$ has no $k$-path, then the root bag has size $k^{\Oh_H(1)}$.
\end{lemma}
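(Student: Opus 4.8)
The plan is to mimic the structure of the proof of Lemma~\ref{lem:localSmall}, but replace the use of Theorem~\ref{thm:embeddedCircumference} (circumference of embeddable graphs) with Theorem~\ref{thm:degreeCircumference} (circumference of bounded-degree graphs), handling the extra difficulty that a root torso of $(|V(H)|,d)$-bounded degree has a bounded number of high-degree vertices and may contain virtual edges. First I would set $R := \torso(G,\X(t_{\text{root}}))$ and let $B \subseteq V(R)$ be the set of at most $|V(H)|$ vertices of degree $>d$ in $R$. The idea is: if the root bag is large, then after deleting $B$ we still have a large graph of maximum degree $\leq d$; we want to extract from it a triconnected piece large enough to host, via Theorem~\ref{thm:degreeCircumference}, a long cycle, and then convert a long subpath of that cycle into an honest long path in $G$.

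The first obstacle is that $R - B$ need not be triconnected. Here I would invoke the standard fact that any graph decomposes into its triconnected components via a tree-structure (an SPQR-tree / the Tutte decomposition into $3$-connected components, cycles, and bonds) of adhesion $\leq 2$; since $|V(R-B)| = k^{\Omega_H(1)}$ is large while each adhesion has size $2$, either some triconnected component has $\geq k^{c}$ vertices for a suitable constant $c$ depending on $H$ (and $d$), or else the tree-structure itself is large and we can read off a long path directly from a root-to-leaf chain of $2$-separations (each step contributing a bounded but positive number of vertices that can be connected, since consecutive $2$-cuts are joined by paths in the corresponding component — here two-connectedness of each component and the fact that separators have size $2$ let us stitch a long path). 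In the first case, a large triconnected component $R'$ of $R-B$ has maximum degree $\leq d$ (it is a minor of $R-B$ obtained by contracting the pieces hanging at $2$-cuts, which does not raise degrees beyond $d$ once $d$ is taken at least as large as some absolute constant $\geq 425$, which we are free to do by enlarging $d$), so Theorem~\ref{thm:degreeCircumference} yields a cycle $C$ in $R'$ of length $\Omega(|R'|^{1/\log_2(d-1)}) = k^{\Omega_H(1)}$; choosing the polynomial exponent in the statement large enough makes this $\geq 3k$, say.

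The second obstacle is virtual edges: $C$ lives in a torso, and its virtual edges (coming from adhesions $\X(t_{\text{root}}) \cap \X(t')$ of leaf bags, and the $\leq 2$-element separators introduced by the triconnected-components decomposition) are not edges of $G$. As in Lemma~\ref{lem:localSmall}, I would argue that a cycle of length $\geq 3k$ contains a subpath of length $\geq k$ using only `convertible' virtual edges — but now I must be careful because a general leaf-bag virtual edge cannot be routed through $G$ without risking using vertices already on $C$. The clean way out is to only keep a triconnected component $R'$ that does not itself contain any leaf-bag virtual edge in a way that would block routing; more precisely, I would first contract, inside $R$, each leaf bag's piece $\X(T_{t'}) \setminus \X(t_{\text{root}})$ to a single vertex (these are connected by the respects-$\T$ / connectedness convention and are pairwise disjoint across distinct leaves up to the adhesion), obtaining a graph on which degrees are still $\leq d$ outside $B$ after also bounding how many leaves attach to a single vertex — but that last bound can fail, which is exactly why we delete $B$ is not enough. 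I would instead handle leaf-bag virtual edges the same way the society edges were handled before: since the adhesion is $\leq \alpha = \Oh_H(1)$, the endpoints of each leaf-bag virtual edge are joined by a path through $\X(T_{t'}) \setminus \X(t_{\text{root}})$ (using connectedness of the decomposition), and a long enough subpath of $C$ crosses few such leaves' "interiors" and can be rerouted; the $\leq 2$-cuts from the triconnected decomposition are rerouted through the corresponding pieces using $2$-connectedness. Summing up, a subpath of $C$ of length $\geq k$ with all virtual edges replaced by internally-disjoint paths in $G$ yields a $k$-path in $G$, a contradiction.

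\textbf{Main difficulty.} I expect the real work — and the step most likely to need a delicate, paper-specific argument rather than a citation — to be the bookkeeping around virtual edges and the triconnected-component decomposition simultaneously: ensuring that when we pass to a large triconnected component of (a contraction of) $R - B$, the degrees stay bounded by an absolute constant $\geq 425$ (so Theorem~\ref{thm:degreeCircumference} applies), that the pieces we contracted/cut across are disjoint enough to be rerouted without reusing vertices of the chosen subpath, and that "large root bag" really forces "large triconnected component" after removing the $\Oh_H(1)$ apex-like vertices $B$ and the $\Oh_H(1)$-adhesion leaf structure. Once the bounded-degree circumference bound is in hand, choosing the exponent of $k^{\Oh_H(1)}$ in the conclusion large enough (polynomially in $1/\log_2(d-1)$ and the constants from the decomposition) is routine.
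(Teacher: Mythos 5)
Your plan is to show that a large root bag forces a $k$-path in $G$, deriving a contradiction. This cannot work, because the conclusion of the lemma is genuinely weaker: it asserts only that \emph{some} star decomposition as in Corollary~\ref{cor:groheMarx} has a small root bag, not that the given one does. A concrete obstruction: take $R-B$ (your bounded-degree graph) to be a balanced binary tree on roughly $2^{k/2}$ vertices. It has maximum degree $3$, no path on $k$ vertices, and every triconnected piece in its Tutte/SPQR structure is a single edge. Your case~(a) (a large triconnected component) never occurs, and your case~(b) can at best recover a path whose length is the depth of the SPQR tree, i.e., $\Theta(\log |R-B|) = \Theta(k)$ — but this is never more than the true longest path, which is $<k$ by assumption. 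So the chain of implications ``root bag has $k^{\Omega_H(1)}$ vertices $\Rightarrow$ $k$-path'' is simply false for bounded-degree graphs, in stark contrast to the minor case (Lemma~\ref{lem:localSmall}), where the nearly-embedded part $G_0^*$ could be assumed triconnected outright via Lemma~\ref{lem:localConnected} and the Chen--Yu--Zang bound applied to it in one shot.

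The paper's actual proof takes a different route, and the tangle $\T$ — which your proposal never invokes — is the key ingredient. After passing to a virtual-edge-free contraction $\Groot^*$ (a minor of $G$), it uses your correct observation (Tutte decomposition + Theorem~\ref{thm:degreeCircumference}) to conclude not that $\Groot^*$ is small, but that it has a tree decomposition of width $k^{\Oh_H(1)}$ and bounded adhesion. This is then lifted back to a tree decomposition of the root torso $\Groot$. The crucial step is to \emph{orient} the edges of this decomposition according to $\T$ — for each edge, the induced separation of $\Groot$ extends to separations of $G$ that $\T$ orients consistently — and find the unique sink node $s_0$. The bag $\Y(s_0)$, of size $k^{\Oh_H(1)}$ since it lies in a decomposition of bounded width, becomes the new root bag of a freshly built star decomposition of $G$; the last claim of the paper's proof verifies this new star decomposition respects $\T$ and keeps the root torso of $(|V(H)|,d+\alpha'd)$-bounded degree. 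Without the tangle there is no way to single out which bag of the finer decomposition should serve as the new root bag, and without constructing a new decomposition there is no true statement to prove, as the counterexample above shows.
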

\begin{proof}

Without loss of generality assume that $\theta \geq \alpha (|V(H)|+2)$, in the statement of Corollary~\ref{cor:groheMarx} (otherwise increase $\theta$).
Let us first ensure that the star decomposition $(T,\X)$ is connected.

\begin{numberedclaim}
	We can additionally assume that
	\begin{align}
			&\text{For every leaf node $t \in V(T)$, the set $\X(t) \setminus \X(t_{\text{root}})$ is connected, and}\label{eq:C1}\tag{C1}\\
			&\text{$\X(t) \setminus \X(t_{\text{root}})$ has edges to each vertex in the adhesion $\X(t) \cap \X(t_{\text{root}})$}\label{eq:C2}\tag{C2}
	\end{align}
\end{numberedclaim}
\begin{claimproof}
	If \eqref{eq:C1} fails we can partition the leaf bag into many leaf bags.
	If \eqref{eq:C2} fails we can remove the adhesion vertex in question from the leaf bag.
\end{claimproof}

\newcommand{\Groot}{\ensuremath{G_{\text{root}}}}

We need to define a variant of the torso that is similarly connected as the root torso, but has no virtual edges (so that long paths in it imply long paths in $G$).
Construct $\Groot^*$ from $G$ as follows: contract $\X(t)\setminus \X(t_{\text{root}})$ to a single vertex $x_t$ for each leaf $t\in T$ and merge any such vertices $x_t, x_{t'}$ that have equal neighborhoods (corresponding to the same adhesion) into one.
Let $X$ be the set of all those new vertices.

\begin{numberedclaim}
	If $G$ has no $k$-path, then neither does $\Groot^*$ .
\end{numberedclaim}
\begin{claimproof}
For each leaf node~$t \in V(T)$, the set~$\X(t) \setminus \X(t_{\text{root}})$ is connected by~\eqref{eq:C1} and can therefore be contracted to a single vertex~$x_t$ by repeated edge contractions. Since~$\Groot^*$ is a subgraph of the graph obtained by these edge contractions, it follows that~$\Groot^*$ is a minor of~$G$. As minor operations do not increase the length of a longest path, the claim follows.
	%Let $P$ be a path of length at least $k$ in $\Groot^*$.
	%Recall that $\Groot^*$ can be obtained from the subgraph of $G$ induced by the root bag of $(T,\X)$ by adding the vertices of $X$.
	%Since each vertex $x_t \in X$ comes from contracting a set $\X(t)\setminus \X(t_{\text{root}})$ (for some $t \in V(T)$) to a single vertex, and since by~\eqref{eq:C1} this is a connected set in $G$ with neighborhood $\X(t)\cap \X(t_{\text{root}})$ equal to the neighborhood of $x_t$, by~\eqref{eq:C2}, any vertex of $P$ in $X$ can be replaced (together with incident edges) by a path in $G[\X(t)\setminus \X(t_{\text{root}})]$.
	%Different elements of $X$ come from sets  $\X(t)\setminus \X(t_{\text{root}})$ with different $t\in V(T)$, which are pairwise disjoint, so all the replacements can be applied to give a path in $G$.
	%Each time we replace at most two incident edges with a path that contains at least one internal vertex in $\X(t)\setminus \X(t_{\text{root}})$, hence the length of the path in $G$ is at least $k$.
\end{claimproof}

\begin{numberedclaim}
	 $\Groot^*$ has $(|V(H)|,\max(\alpha, d + 2^d))$-bounded degree.
\end{numberedclaim}
\begin{claimproof}
	Observe that $\Groot^*$ can be obtained from the root torso of $(T,\X)$ by removing virtual edges and adding the vertices of $X$. Each vertex in $X$ has degree equal to the size of some corresponding adhesion, which is $\leq \alpha$.
	Each vertex $v$ in the torso gains as neighbors only vertices in $X$, which correspond to different (as vertex subsets) adhesions containing $v$; since these adhesions induced cliques containing $v$ in the torso (which had $(|V(H)|,d)$-bounded degree), their number is bounded by $2^d$ (unless $v$ was already one of the $\leq |V(H)|$ vertices of high degree).
\end{claimproof}

\begin{numberedclaim}
	If $\Groot^*$ does not contain a path of length at least $k$, then $\Groot^*$ has a tree decomposition of adhesion $\leq |V(H)| + 2$ and width $k^{\Oh_H(1)}$.
\end{numberedclaim}
\begin{claimproof}
	Assume that~$\Groot^*$ does not contain a path of length at least~$k$. By the previous claim, in $\Groot^*$ one can delete $\leq |V(H)|$ vertices to obtain a subgraph of degree bounded by $\Delta := \max(\alpha, d+2^d)$; name this subgraph $\Groot^\dagger$.
	Consider the Tutte decomposition~\cite{Tutte66} of $\Groot^\dagger$. In modern terms (cf.~\cite[Theorem 1]{Jansen17}), it is a tree decomposition of~$\Groot^\dagger$ of adhesion at most two, whose torsos are triconnected topological minors of~$\Groot^\dagger$. Consequently, each torso again has degree bounded by $\Delta$ and cannot contain a path of length at least $k$.
	By Theorem~\ref{thm:degreeCircumference}, if $n$ denotes the size of such a bag, then $n^{\varepsilon}/4+2 < k$, where $\varepsilon = 1/\log_2(\max(425,\alpha,d+2^d)-1)$.
	Thus the size of each bag is $n < (4(k-2))^{1 / \varepsilon} = k^{\Oh(d + \log_2 \alpha)} = k^{\Oh_H(1)}$.
	Adding the $\leq |V(H)|$ deleted vertices back to every bag, we obtain a tree decomposition of $\Groot^*$ as claimed.
\end{claimproof}

Denote the root torso of $(T,\X)$ as $\Groot$.

\begin{numberedclaim}
	For every tree decomposition of $\Groot^*$, one can find a tree decomposition of $\Groot$ with adhesion and maximum bag size at most $\alpha$ times larger than the adhesion and maximum bag size of the original tree
  decomposition.
\end{numberedclaim}
\begin{claimproof}
	Let $(S,\Y)$ be a tree decomposition of $\Groot^*$.	
	Define $\Y'(s)$ as $\Y(s)$ with vertices $x_t \in X$ replaced by all of $N_{\Groot^*}(x_t)$.
	We claim $(S,\Y')$ is a tree decomposition of $\Groot$.
	Every edge of $\Groot$ is either an edge of $\Groot^* - X$, still contained in the same bag, or a virtual edge between two vertices of an adhesion, hence contained in  $N_{\Groot^*}(x_t)$ for some $x_t \in X$ and thus in a bag $\Y'(s)$ (for an $s$ such that $\Y(s)$ contained $x_t$).
	To show that the subset $\Y'^{-1}(v) := \{s\in S : v \in \Y'(s)\}$ is connected, for any given vertex $v \in V(\Groot)$, observe that it is the union of $\Y^{-1}(v)$ with $\Y^{-1}(x_t)$ for all $x_t \in X$ adjacent to $v$ in $\Groot^*$;
	since such sets $\Y^{-1}(v)$ and $\Y^{-1}(x_t)$ are connected in $S$ and intersect (as $x_t$ is adjacent to $v$), their union is connected.
	Each vertex in each bag has been replaced by at most $|N_{\Groot^*}(x_t)|$ vertices, which is the size of an adhesion in $(T,\X)$ and hence at most~$\alpha$. It follows that the maximum bag size increases by a factor at most~$\alpha$.

	To bound any adhesion, observe similarly that  for $s_1,s_2\in V(S)$, the adhesion equals 
		$$\Y'(s_1) \cap \Y'(s_2) = \left((\Y(s_1) \cap \Y(s_2)) \setminus X\right) \cup \bigcup_{x_{t_1} \in \Y(s_1), x_{t_2} \in \Y(s_2)} N(x_{t_1}) \cap N(x_{t_2}).$$
	For any vertex $v \in N(x_{t_1}) \cap N(x_{t_2})$ (for some $x_{t_i} \in \Y(s_i)$), the set $\Y^{-1}(v)$ is connected and must intersect both $\Y^{-1}(x_{t_1})$ and $\Y^{-1}(x_{t_2})$. Hence either 
	$\Y^{-1}(x_{t_1})$ contains both $s_1$ and $s_2$, or $\Y^{-1}(x_{t_2})$ does, or $\Y^{-1}(v)$ does.
	This means $x_{t_1}$, $x_{t_2}$ or $v$ is contained in $\Y(s_1) \cap \Y(s_2)$.
	Therefore $$\Y'(s_1) \cap \Y'(s_2) \subseteq \left((\Y(s_1) \cap \Y(s_2)) \setminus X\right) \cup \bigcup_{x_t \in \Y(s_1) \cap \Y(s_2)} N(x_t).$$
	This again implies each adhesion increases at most $\alpha$ times.
\end{claimproof}

By the last three claims, assuming $G$ has no $k$-path, $\Groot$ has a tree decomposition $(S,\Y)$ of adhesion $\alpha' := \alpha (|V(H)|+2)$ and width $k^{\Oh_H(1)}$.
We focus this decomposition on the unique bag that, in some sense, respects the tangle $\T$.
Let us first see how $\T$ orients separations of $\Groot$. 
For a separation $(A,B)$ of $\Groot$, every adhesion $\X(t)\cap \X(t_{\text{root}})$ of $(T,\X)$ induces a clique in $\Groot$, hence it is fully contained in $A$ or in $B$; thus $(A,B)$ can be extended to a separation of $G$ by adding all of $\X(t)$ to $A$ or $B$, correspondingly, with an arbitrary choice for $\X(t)$ such that $\X(t)\cap \X(t_{\text{root}}) \subseteq A \cap B$.
We call such separations of $G$ \emph{extensions} of the separation $(A,B)$ of $\Groot$.
For a separation $(A,B)$ of order $\leq\theta$ of $\Groot$, observe that either all extensions of $(A,B)$ and no extensions of $(B,A)$ belong to $\T$, or vice-versa: this follows from tangle axioms and the fact that $(T,\X)$ respects $\T$ (that is, each $\X(t)$ such that $\X(t)\cap \X(t_{\text{root}}) \subseteq A \cap B$ is small w.r.t. $\T$, so putting it one side of an extension $(A',B')$ instead of the other cannot change whether $A'$ is small, as otherwise the two small sides together with $\X(t)$ would cover all of $G$).
%Indeed, an extension $(A',B')$ of $(A,B)$ has order $\leq \theta$, so either $(A',B') \in \T$ or $(B',A')\in\T$. If $A'$ is the small side ($(A',B')\in \T$), say, then consider any $\X(t)$ such that $\X(t)\cap \X(t_{\text{root}}) \subseteq A \cap B$. If the extension has been chosen so that $\X(t) \subseteq A'$, then moving it to the other side we obtain a different extension $(A'',B'') := (A' \setminus (\X(t)\setminus \X(t_{\text{root}})), B' \cup \X(t))$ which must still belong to $\T$, since $G[B''] \cup G[A'] \supseteq G[B'] \cup G[A'] = G$. If on the other hand $\X(t) \subseteq B'$, then again by moving it to the other side we obtain an extension $(A'',B'') := (A' \cup \X(t), B'\setminus (\X(t)\setminus \X(t_{\text{root}})))$ which must still belong to $\T$, since otherwise $G[B''] \cup G[\X(t)] \cup G[A'] = G$ would contradict that $\T$ is a tangle.

We now orient edges of the decomposition $(S,\Y)$ according to $\T$, to find a (unique) bag we can focus on.
For each edge $s_1 s_2 \in E(S)$, if $S_1, S_2$ denote the two connected components of $S - s_1s_2$ containing $s_1, s_2$, respectively, then $(\Y(S_1),\Y(S_2))$ is a separation with $|\Y(S_1) \cap \Y(S_2)| = |\Y(s_1) \cap \Y(s_2)| \leq \alpha' \leq \theta$, hence either all of its extensions or all extensions of $(\Y(S_2),\Y(S_1))$ belongs to the tangle $\T$.
We direct the edge $s_1 s_2$ to either $s_2$ or $s_1$ accordingly (towards the side with large extensions). After repeating this for every edge of $S$, we find a node $s_0$ with no outgoing edges.
This means that if we root $S$ at $s_0$ and $S_1,S_2,\dots,S_\ell$ are the subtrees of $S-s_0$, then all extensions of $(\Y(S_i), \Y(S - S_i))$ belong to $\T$, for $i \in [\ell]$.

We can now forget about some details of the tree decomposition and define the following star decomposition $(S',\Y')$ of $\Groot$: its nodes are $s_{\text{root}} := s_0$ and its direct children $s_1,\dots,s_\ell$ in $S$, while its bags are $\Y'(s_{\text{root}}) := \Y(s_0)$ and $\Y'(s_i) := \Y(S_i)$ for $i \in [\ell]$.
By construction $(S',\Y')$ is a star decomposition of $\Groot$ of adhesion $\alpha'$, with a root bag of size $k^{\Oh_H(1)}$, which satisfies the following:
\begin{equation}\label{eq:respect}
 \text{all extensions of }(\Y(s_i), \Y(S-s_i))\text{ are in }\T\text{, for }i \in [\ell].
\end{equation}

The following claim thus concludes the proof of the lemma, by extending this star decomposition of $\Groot$ to a star decomposition of $G$ respecting $\T$, with a root torso of almost bounded degree and size $k^{\Oh_H(1)}$.

\begin{numberedclaim}
	If $\Groot$ has a star decomposition  $(S,\Y)$ of adhesion $\alpha' \leq \theta$ satisfying~\eqref{eq:respect}, then $G$ has a star decomposition of adhesion $\max(\alpha,\alpha')$ respecting $\T$ where the root torso has $(|V(H)|,d+\alpha'\cdot d)$-bounded degree and its vertex set is equal to the root bag of $(S,\Y)$.
\end{numberedclaim}
\begin{claimproof}
	%If $\alpha< 2$ just increase it to 2. We now modify the decomposition to make it $\alpha+1$-connected, without introducing adhesions larger than $\alpha$.
	Let $(S,\Y)$ be a star decomposition of $\Groot$ satisfying~\eqref{eq:respect}. 
	Without loss of generality assume that for every leaf $s\in V(S)$, every vertex of the adhesion $\Y(s) \cap \Y(s_{\text{root}})$ has a neighbor in $\Y(s) \setminus \Y(s_{\text{root}})$ (otherwise we can remove it from $\Y(s)$).
	Every adhesion of the (larger) star decomposition $(T,\X)$ of $G$ induces a clique in $\Groot$ by definition, hence we can choose for each $t \in V(T)$ a node $s(t) \in V(S)$ such that $\X(t) \cap \X(t_{\text{root}}) \subseteq \Y(s(t))$.
	We construct a new star decomposition $(T',\X')$ of $G$ from $(T,\X)$ as follows:
	the root bag becomes $\Y(s_{\text{root}})$, leaves $t$ with $s(t) = s_{\text{root}}$ and their bags remain unchanged, the remaining leaves are deleted, and each leaf $s$ of $(S,\Y)$ is added as a new leaf of $T'$, with $\X'(s) := \Y(s) \cup \bigcup_{t \colon s(t) = s} \X(t)$.

	It is straightforward to check that $(T',X')$ is a star decomposition of $G$ of adhesion $\max(\alpha,\alpha')$, as follows.
	Every edge is covered by a bag in $\X'$, because it was covered by a bag in $\X$, which is either a subset of a bag in $\X'$, or in $\X(t_{root})$, and hence in a bag of $\Y$, which are again subsets of bags in $\X'$.
	Every vertex occurs in bags $\X'$ of a connected subtree of $T'$, because if a vertex occurs in two different leaf bags of $\X'$, then either it occurs in two different leaf bags of $\Y$ and hence in $\Y(s_{\text{root}}) = \X'(t_{\text{root}})$, or in two leaf bags $\X(t)$ and $\X(t')$ with $s(t)\neq s(t')$ and hence in $\X(t_{\text{root}})$, which means it occurs in the adhesions of $\X(t)$ and $\X(t)'$; these adhesion are contained in $\Y(s(t))$ and $\Y(s(t'))$ respectively, so the vertex must be contained in $\Y(s_{\text{root}}) = \X'(t_{\text{root}})$ as well.
	Finally to check the sizes of adhesions, observe that each adhesion of $(T,\X')$ is either (for leaf nodes $t$ coming from $T$) a subset of an adhesion in $\X$, namely $\X'(t) \cap \X'(t_{\text{root}}) \subseteq \X(t) \cap \X(t_{\text{root}})$,
	or (for leaf nodes $s$ coming from $S$) equal to an adhesion of $(S,\Y)$:
		$$\X'(s) \cap \X'(t_{\text{root}}) = \left(\Y(s) \cup \bigcup_{t \colon s(t) = s} \X(t) \right) \cap \Y(s_{\text{root}}) = \Y(s) \cap \Y(s_{\text{root}}),$$
	since every vertex in $\X(t) \cap \Y(s_{\text{root}}) \subseteq \X(t) \cap \X(t_{\text{root}})$ is contained in $\Y(s(t))$.

	To show that the new decomposition respects $\T$, suppose to the contrary that some leaf bag contains the large side $B$ of a separation $(A,B) \in \T$. Then either this bag is $\X'(t) = \X(t)$ for $t\in V(T)$, contradicting that $(T,\X)$ respects $\T$, or $\X'(s) = \Y(s) \cup \bigcup_{t \colon s(t) = s} \X(t)$ for some $s \in V(S)$.
	Then since $(\X'(s),\bigcup_{s' \neq s \in V(S)} \X'(s'))$ is a separation induced by the adhesion at $\X'(s)$, of order $\leq \alpha'$ (as shown in the previous paragraph), $\X'(s)$ (which contains $B$) must be the large side of this separation in $\T$. 
	But this is an extension of the separation $(\Y(s),\bigcup_{s' \neq s \in V(S)} \Y(s'))$, contradicting~\eqref{eq:respect}.

	The part that is not entirely straightforward is bounding the degrees in the root torso of the new star decomposition $(T',\X')$.
	Observe that this new torso can be obtained from the root torso $\Groot$ of $(T,\X)$ by taking the subgraph induced by $ \X'(t_{\text{root}}) = \Y(s_{\text{root}})$, removing certain virtual edges (from adhesions corresponding to deleted leaves) and adding virtual edges between every two vertices of each new adhesion $\X'(s) \cap \X'(t_{\text{root}})$, $s\in V(S)$.
	We need to bound the number of such edges added to a vertex $v \in \Y(s_{\text{root}})$, assuming it was not already one of the $|V(H)|$ high-degree vertices of the root torso of $(T,\X)$.
	Thus $v$ had at most $d$ neighbors in $\Groot$.
	Since for each new adhesion $\X'(s) \cap \X'(t_{\text{root}}) = \Y(s) \cap \Y(s_{\text{root}})$ (as shown above) which contains $v$ we assumed that $v$ has a neighbor in $\Y(s) \setminus \Y(s_{\text{root}})$ (and these are disjoint sets, for distinct $s$), it follows that $v$ received new edges from at most $d$ distinct new adhesions, at most $\alpha'$ edges from each.
	Therefore, the degree of all but $|V(H)|$ vertices in the root torso of $(T',\X')$ is bounded by $d + \alpha' \cdot d$, that is, Corollary~\ref{cor:groheMarx} holds with this constant in place of $d$.
\end{claimproof}
This concludes the proof of Lemma~\ref{lem:topoLocalStructure}.
\end{proof}

As before, the global decomposition stated in Theorem~\ref{thm:topo-decomposable} follows from the Local Structure Theorem in Corollary~\ref{cor:groheMarx} improved by the bound on bag size in Lemma~\ref{lem:topoLocalStructure}.
The construction only changes in that it can take root bags from the local structure (the star decomposition) given by the first outcome of Corollary~\ref{cor:groheMarx} and use them as bags of the global decomposition, without any modifications.
We note this can also be seen as an instance of a general theorem (11.1) by Robertson and Seymour~\cite{RobertsonS91} which allows to turn `local decompositions' (with respect to a tangle) into global ones; however, giving a precise translation would involve too much notational overhead.

\section{Adding a modulator} \label{sec:modulator}
\newcommand{\modn}{s}

In this section we prove Theorem~\ref{thm:minor-mod-main}
in a more general setting of Section~\ref{sec:turing}.
More precisely, Theorem~\ref{thm:minor-mod-main} follows directly 
from the following theorem via Theorems~\ref{thm:minor-decomposable} and~\ref{thm:topo-decomposable}.

\begin{theorem}\label{thm:mod-main}
One can solve in polynomial time a given \kPath{} instance $(G,k)$,
given access to a set $M \subseteq V(G)$ such that $G-M$ admits a tree decomposition
of width less than $\tdw$ and adhesion $\tdadh = \Oh(1)$, and an oracle
that solves the $\kLinkage{}$ problem for instances $(G',k',S,(R_i)_{i=1}^r)$
with $G'$ being a subgraph of $G$, $r,k' \leq k$, $|S| \leq |M| + \Oh(1)$,
and $|V(G')|$ being bounded
polynomially in $k$, $\tdw$, and $|M|$.
\end{theorem}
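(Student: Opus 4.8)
The plan is to follow the outline sketched in the introduction: use the tree decomposition $(T,\X)$ of $G-M$ of width $<\tdw$ and adhesion $\tdadh=\Oh(1)$ to locate a reducible separation, where now the relevant ``boundary'' is allowed to contain all of $M$ plus only $\Oh(1)$ further vertices. First I would make $(T,\X)$ connected (as in Section~\ref{sec:prelims}) and invoke Lemma~\ref{lem:adh-deg-bound} to bound its adhesion degree by $\Oh(\tdw)$. The key structural object is a notion of \emph{guarded} $k$-path in $G$: a $k$-path $P$ such that each maximal subpath of $P$ strictly between two consecutive vertices of $M\cap P$ (a ``$G-M$-segment'') that is long, or more precisely each segment that we cannot afford to leave unconstrained, passes through a small prescribed set $X\subseteq V(G-M)$. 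I would build $X$ greedily: repeatedly run the algorithm of Theorem~\ref{thm:topo-decomposable}/Corollary~\ref{cor:kernel-decomp} (equivalently, the separation-oracle machinery of Section~\ref{sec:turing}) on $G-M$ with various pairs of endpoints identified, or more simply run a packing argument that finds, in $G-M$, a maximal collection of vertex-disjoint short connectors realizing how a $k$-path could traverse $G-M$ between two $M$-vertices; take $X$ to be the union of these connectors. A standard exchange/replacement argument then shows: if $G$ has a $k$-path, it has a guarded one, because any $G-M$-segment avoiding $X$ can be rerouted through one of the disjoint connectors in $X$ (using that the decomposition of $G-M$ has bounded adhesion, so segments behave like paths in a bounded-treewidth graph and can be ``shortcut'' through the hitting set). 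The size of $X$ is $\mathrm{poly}(k,\tdw,|M|)$ because there are $\Oh(k)$ segments in a $k$-path, each connector has $\le k$ vertices, and we need $\mathrm{poly}$ many of them per ``type'' to guarantee disjointness.

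Next, with $X$ fixed, I would use the connected tree decomposition $(T,\X)$ of $G-M$ together with the standard ancestor-marking technique (as in Lemma~\ref{lem:decomp2sep}): mark, for each of the $\le|X|$ vertices of $X$, the topmost bag containing it; this marks $\le|X|$ bags. If $G$ is not already small, then following the argument of Lemma~\ref{lem:decomp2sep} there is a subtree $T'$ of $T$ whose bag-union $C:=\X(T')\setminus(\text{top bag})$ in $G-M$ is ``medium-sized'' --- larger than the reduction threshold $\Ibnd$ but of size $\mathrm{poly}(k,\tdw,|M|)$ --- and, crucially, $C$ has at most $\tdadh+\Oh(1)$ neighbors in $X$ and at most $\tdadh$ neighbors in the rest of $G-M$; its only large interface is with $M$. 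I would then apply the Reduction Rule of Section~\ref{sec:turing} with $A:=C$ and guard $Z:=(N(C)\cap X)\cup(N(C)\setminus M)$, a set of constant size disjoint from $M$. By the definition of a guarded $k$-path, every $A$-traverse of a guarded $k$-path either is a single segment contained in $C\cup M$ (bounded in number by $|M|$) or has an endpoint in $Z$; Lemma~\ref{lem:num-traverses} then bounds the number of traverses by $\Oh(|M|)$, and the counting in equation~\eqref{eq:Ibound} --- with $N(A)$ now possibly containing all of $M$, so $\ell=|M|+\Oh(1)$ but $h=|Z|=\Oh(1)$ --- bounds the number of relevant \kLinkage{} instances by $(k+1)(h(\ell+1))^{2h+1}=\mathrm{poly}(k,|M|)$. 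Each such instance lives on the subgraph $G[N[C]]$, has $r,k'\le k$ and terminal set $S=N(C)\subseteq M\cup Z$ of size $|M|+\Oh(1)$, and has $|V(G')|=\mathrm{poly}(k,\tdw,|M|)$, exactly matching the oracle's power. We mark the $\le k\cdot\Ibnd$ vertices returned; since $|C|>k\cdot\Ibnd$, an unmarked vertex of $C$ exists and is deleted, and safeness is the same replacement argument as in Section~\ref{sec:turing} (replacing the solution to the induced instance preserves a guarded $k$-path, and guarded $k$-paths exist iff $k$-paths do by the first step). Iterating until $|V(G)|$ is below the threshold, we finish with one final \kLinkage{} call on all of $G$, which is then already small.

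The main obstacle I expect is the first step: proving that $X$ of polynomial size suffices to force the existence of a \emph{guarded} $k$-path. The subtlety is that a $G-M$-segment of a hypothetical $k$-path may weave through the bounded-adhesion decomposition of $G-M$ in a complicated way, and we must argue that it can be rerouted to hit $X$ without colliding with the other segments of $P$ or with $M$, and without shortening the overall path below $k$. This is where the packing argument must be set up carefully: one needs that, for every ``type'' of segment (determined by its two $M$-endpoints, which bags of $(T,\X)$ it must enter, and roughly how long it is), there are enough pairwise-disjoint representatives in $X$ that at least one avoids the $\le k$ vertices used by the rest of $P$. Controlling the number of types --- there are $\Oh(|M|^2)$ endpoint pairs but the length/bag information must be discretized --- and invoking Theorem~\ref{thm:topo-decomposable} (via the \kLinkage{} oracle on $G-M$, which is itself decomposable) to actually produce these disjoint connectors, is the technically delicate part; everything after it is a routine combination of the Reduction Rule and the ancestor-marking lemma already established in Sections~\ref{sec:turing} and~\ref{sec:minors}.
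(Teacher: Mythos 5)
Your architecture matches the paper's: pack internally vertex-disjoint $(u,v,k')$-connectors between $M$-vertices, collect their vertices into a set $X_1$, use ancestor-marking and LCA-closure on a tree decomposition of $G-M$ to find a medium-sized set $A$ disjoint from the marked vertices whose only non-$M$ interface $Z$ has constant size, and apply the Reduction Rule with guard $Z$. However, the crux --- the guard lemma --- is misstated in a way that would ruin the polynomial bound, and this is precisely the step you flag as delicate.

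You write that a guarded $k$-path's $A$-traverses ``either [are] a single segment contained in $C\cup M$ (bounded in number by $|M|$) or [have] an endpoint in $Z$.'' That weakening is fatal. The derivation of \eqref{eq:Ibound} (and hence your claimed bound $(k+1)(h(\ell+1))^{2h+1}$ with $h=|Z|=\Oh(1)$) requires that \emph{every} nontrivial request $R_i$ contain a $Z$-vertex. Once you permit traverses with both endpoints in $M\setminus Z$, you have up to $\Theta(|M|)$ traverses each drawn from $\Theta(|M|^2)$ endpoint pairs, so the number of candidate \kLinkage{} instances is $|M|^{\Theta(|M|)}$ --- exponential, not polynomial; your later use of $h=|Z|$ in \eqref{eq:Ibound} is internally inconsistent with this allowance. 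What the paper actually proves (Lemma~\ref{lem:guard}) is the stronger statement that there is a $k$-path in which \emph{every} $A$-traverse has an endpoint in $Z$. The argument is also simpler than the rerouting you anticipate: take a $k$-path $P$ maximizing $|V(P)\cap X_1|$. If a traverse $Q$ runs between $u,v\in M$ with no endpoint in $Z$, then its internal vertices lie in $A$, which is disjoint from $X_2\supseteq X_1$, so $Q$ is internally disjoint from all marked connectors in $\mathcal{P}_{u,v,k'}$; maximality of $\mathcal{P}^0_{u,v,k'}$ then forces $|\mathcal{P}_{u,v,k'}|=k+1$, and by pigeonhole on $|V(P)|=k$ one of these $k+1$ internally disjoint connectors avoids $P$ and can replace $Q$, contradicting the choice of $P$. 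Types are just endpoint pairs and lengths $k'\leq k$ --- no discretization by ``which bags the segment enters'' and no bounded-adhesion rerouting is needed in this step (bounded adhesion only enters to argue $|Z|=\Oh(1)$, since $Z$ sits inside the $\leq 2$ adhesions bounding the chosen subtree after LCA-closure). Finally, a smaller omission: the tree decomposition of $G-M$ is only known to exist, not given, so you must first compute an approximation via Theorem~\ref{thm:unbreakable} and Lemma~\ref{lem:unbreakableLocal} (and rebalance so every node has polynomially many children) before ancestor-marking can proceed.
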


Contrary to Section~\ref{sec:turing}, in this section we are not precise with the
polynomial dependency on the parameters $k$, $\tdw$, and $|M|$, as well as the dependency
on the adhesion $\tdadh = \Oh(1)$
of the decomposition. This is due to the fact that the latter dependency on $\tdadh$
is substantial, most notably in the exponent of the polynomial bound on the size
of the oracle calls. Therefore, the result of this section is a purely theoretical result
classifying the aforementioned parameterization as admitting a polynomial Turing kernel.

\paragraph{Decomposition.}
We apply the algorithm of Theorem~\ref{thm:unbreakable} to the graph $G-M$
and separation size $\tdadh$.
Since $\tdadh = \Oh(1)$, the algorithm runs in polynomial time.
By Lemma~\ref{lem:unbreakableLocal}, the output decomposition $(T_0,\X_0)$
has width $\tdw^{\Oh(1)}$ and adhesion $\Oh(1)$.

We modify the decomposition $(T_0,\X_0)$ as follows. For every node $t \in V(T)$
we group the children $t'$ of $t$ according to the adhesions $\X(t) \cap \X(t')$.
For every group $U \subseteq N_T(t)$, we add a new child
$t_U$ of $t$ with $\X(t_U) = \X(t) \cap X(t')$, $t' \in U$, and reattach
the nodes of $U$ as children of $t_U$.
Then, we duplicate $t_U$ into $\max(1, |U|-1)$ copies, arranged in a binary tree, 
  with the nodes of $U$ being different leaves of the binary tree.
Let $(T,\X)$ be the final decomposition.
In this manner, we maintain the property that $(T,\X)$ has width $\tdw^{\Oh(1)}$
and adhesion $\Oh(1)$, while every node has a polynomial in $\tdw$ number of children.

\paragraph{Marking.}
We continue the proof of Theorem~\ref{thm:mod-main} with a somewhat standard marking process.
For an integer $0 \leq k' \leq k-2$ and two vertices $u, v \in M$, define a
\emph{$(u,v,k')$-path} in $G$ as a (simple) path with $k'+2$ vertices: two endpoints $u$ and $v$
and exactly $k'$ internal vertices, all in $G-M$.
Note that a $(u,v,0)$-path is a path consisting of an edge $uv$.
Similarly, for $0 \leq k' \leq k-1$ and a vertex $u \in M$, a $(u,k')$-path
in $G$ is a path with $k'+1$ vertices: one endpoint being $u$ and all other vertices
in $G-M$.

For every $0 \leq k' \leq k-2$ and every $u,v\in M$, we compute an inclusion-wise maximal
family $\mathcal{P}_{u,v,k'}^0$ of internally vertex-disjoint $(u,v,k')$-paths.
Similarly, for every $0 \leq k' \leq k-1$ and $u \in M$
we compute an inclusion-wise maximal family $\mathcal{P}_{u,k'}^0$ of internally
vertex-disjoint $(u,k')$-paths. 
The computation can be done via the previously established kernel for decomposable graph classes.
\begin{lemma}
The families $\mathcal{P}_{u,v,k'}^0$ and $\mathcal{P}_{u,k'}^0$ can be computed
in polynomial time with the access to the assumed \kLinkage{} oracle.
\end{lemma}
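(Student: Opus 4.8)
The plan is to build each family greedily and to reduce the whole computation to one subroutine, single-path detection. Fix $u,v \in M$ and $0 \le k' \le k-2$; the families $\mathcal{P}_{u,k'}^0$ are handled the same way with one terminal in place of two, and $\mathcal{P}_{u,v,0}^0$ is trivial (it is $\{uv\}$ or $\emptyset$ depending on whether $uv \in E(G)$), so assume $k' \ge 1$. I would maintain a set $Z \subseteq V(G-M)$ of forbidden internal vertices, initially empty: as long as $G$ contains a $(u,v,k')$-path all of whose $k'$ internal vertices lie in $V(G-M) \setminus Z$, add one such path to $\mathcal{P}_{u,v,k'}^0$ and its internal vertices to $Z$. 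Each round enlarges $Z$ by $k' \ge 1$ vertices, so there are at most $|V(G-M)|$ rounds, and what remains at the end is exactly an inclusion-wise maximal family of internally vertex-disjoint $(u,v,k')$-paths. Everything thus reduces to solving, in polynomial time using the \kLinkage oracle, the detection task: given $Z$, produce a $(u,v,k')$-path with internal vertices avoiding $M \cup Z$, or certify that none exists.

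For detection I would let $\hat G$ be the subgraph of $G$ induced by $(V(G-M) \setminus Z) \cup \{u,v\}$. Since $u$ and $v$ are the only vertices of $M$ in $\hat G$, a path of $\hat G$ on exactly $k'+2$ vertices with endpoints $u$ and $v$ is precisely a $(u,v,k')$-path with internal vertices avoiding $M \cup Z$, and conversely; equivalently, detection is the \kLinkage instance $(\hat G,\, k'+2,\, S=\{u,v\},\, r=1,\, R_1=\{u,v\})$. This instance cannot be passed to the oracle directly because $|V(\hat G)|$ can be as large as $|V(G)|$, so I would first shrink $\hat G$ to polynomial size. Restricting the decomposition $(T,\X)$ of $G-M$ constructed above (width $\tdw^{\Oh(1)}$, adhesion $\Oh(1)$) to $V(G-M)\setminus Z$ and then inserting $u$ and $v$ into every bag yields a tree decomposition of $\hat G$ of width $\tdw^{\Oh(1)}$, adhesion $\Oh(1)$, and — after making it connected — polynomial adhesion degree; so Lemma~\ref{lem:decomp2sep} supplies medium-sized parts $A$ of $\hat G$ with $|N(A)| = \Oh(1)$.

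I would then run the generic reduction rule of Section~\ref{sec:turing}, but with the role of ``a $k$-path'' played by ``a path on $k'+2$ vertices with both endpoints in $\{u,v\}$''. The notions of $A$-traverse and of a constant-size guard $N(A)$ carry over essentially verbatim — a traverse of such a path can have an endpoint outside $N(A)$ only if that endpoint is one of the terminals $u,v$ — so a small guard again caps the number of traverses, hence the number of reasonable induced \kLinkage sub-instances, each of which has $r = \Oh(1)$, target value $k'+2 \le k$, terminal set of size $|N(A)| = \Oh(1) \le |M| + \Oh(1)$, and vertex set inside $N[A]$. Iterating the rule $\Oh(|V(\hat G)|)$ times, exactly as in the proof of Lemma~\ref{lem:generic}, produces a subgraph of $\hat G$ whose size is polynomial in $k$ and $\tdw$ (hence in $k$, $\tdw$, $|M|$) and that still contains a $(u,v,k')$-path iff $\hat G$ does; a single final oracle call decides detection. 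Summing over the $\Oh(|M|^2 k)$ families $\mathcal{P}_{u,v,k'}^0$ and $\Oh(|M|k)$ families $\mathcal{P}_{u,k'}^0$, each built by $\Oh(|V(G)|)$ rounds of this procedure, yields the claimed polynomial running time and a polynomial number of oracle calls of the required form.

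The step I expect to be the main obstacle is the transplanting in the previous paragraph: the reduction rule of Section~\ref{sec:turing} and Lemmas~\ref{lem:decomp2sep} and~\ref{lem:generic} are all phrased for \kPath, and I would need to restate ``guarded'' solution for the two-terminal, fixed-length variant, re-derive the combinatorial bound showing that only polynomially many induced \kLinkage instances are relevant, and check that every oracle call created along the way meets the size constraints of Theorem~\ref{thm:mod-main}. None of this is deep, but it does need to be written out carefully.
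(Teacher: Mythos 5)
Your proposal is correct and takes essentially the same route as the paper: build the family greedily one path at a time, and implement single-path detection by restricting the decomposition $(T,\X)$ of $G-M$, inserting $u$ and $v$ into every bag, and iterating the Reduction Rule via Lemma~\ref{lem:decomp2sep} and Lemma~\ref{lem:generic}. The worry you flag at the end is exactly the point the paper addresses in one line—the Reduction Rule remains correct for a fixed-endpoint $k'$-path so long as the endpoints lie outside $A$—so your sketch matches the paper's argument both in strategy and in the key observation.
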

\begin{proof}
We focus on $\mathcal{P}_{u,v,k'}^0$; the proof for the second family is analogous.
Since the required family is inclusion-wise maximal, it suffices to find its paths
one-by-one. That is, we show how to find,
in a given induced subgraph of $G-(M \setminus \{u,v\})$ obtained by removing the internal vertices of the previously-found paths, a next path of the desired length $k' \leq k$
between $u$ and $v$ (or conclude that none exists and the packing is maximal). For that, we use a slight modification of the 
previously established kernel in Section~\ref{sec:turing} for decomposable graph classes
(Corollary~\ref{cor:kernel-decomp}).

More precisely, first observe that the Reduction Rule is still correct when we are looking for a $k'$-path between fixed endpoints $u$ and $v$ (as opposed to any $k'$-path) as long as the endpoints do not lie in the set $A$.
Thus, we can (just as in Corollary~\ref{cor:kernel-decomp}) iteratively use the decomposition $(T,\X)$ of $G-M$, reintroduce $u$ and $v$ to every bag, and find a separation $(A',B')$ by Lemma~\ref{lem:decomp2sep} such that the Reduction Rule applies, and iterate this until we find the $k'$-path between $u$ and $v$ or conclude there is none, as in Lemma~\ref{lem:generic}.
%This allows us to decide if the currently held family $\mathcal{P}_{u,v,k'}^0$ is maximal and,
%if not, find a new path to add to it. The lemma follows.
\end{proof}

We define $\mathcal{P}_{u,v,k'}$ to be a subfamily of $k+1$ arbitrarily chosen 
paths from $\mathcal{P}_{u,v,k'}^0$, or let $\mathcal{P}_{u,v,k'} = \mathcal{P}_{u,v,k'}^0$
if $|\mathcal{P}_{u,v,k'}| < k+1$. We define $\mathcal{P}_{u,k'}$ similarly.

Let $|M| = \modn$.
Let $X_1$ be the set of all vertices of $G-M$ that lie on some path
in any of the sets $\mathcal{P}_{u,v,k'}$ or $\mathcal{P}_{u,k'}$ ($u,v\in M$, $k'\leq k$). Note that,
   as we picked at most $k+1$ paths to the families 
   $\mathcal{P}_{u,v,k'}$ and $\mathcal{P}_{u,k'}$, each consisting of at most~$k-1$ vertices from~$G-M$, for each of at most~$k$ choices of the length~$k' \leq k$, and~$\binom{\modn}{2} + \modn \leq 2\modn^2$ choices for the endpoint(s), we have
$$|X_1| < (k+1) \cdot (k-1) \cdot k \cdot 2\modn^2.$$

For every $x \in X_1$, pick one node $t(x) \in V(T)$ such that $x \in \X(t(x))$.
Let $B_1 = \{t(x) \mid x \in X_1\}$.
Let $B_2 \subseteq V(T)$ consist of all of $B_1$, the root of $T$, and the lowest
common ancestor of every pair $(t_1,t_2) \in B_1 \times B_1$.
Let $X_2 = \X(B_2)$.
It is folklore that $|B_2| \leq 2|B_1|$, hence:
\begin{align*}
|B_2| &\leq 2(k+1) \cdot (k-1) \cdot k \cdot 2\modn^2,\\
|X_2| & = \tdw^{\Oh(1)} k^3 \modn^2.
\end{align*}

Let $\CC$ be the set of connected components of $T-B_2$.
As $B_2$ is closed under taking lowest common ancestors of nodes,
we have that $|N_T(C)|\leq 2$ for any~$C \in \CC$: one of the elements of $N_T(C)$ is a parent
of the root of $C$ (if it exists), and additionally one vertex of $C$ may have a child in $N_T(C)$.

\paragraph{Reducing a large component.}
Our main goal is to show that if the bags of any component $C$
contain in total more than $m = \mathrm{poly}(k, \tdw, \modn)$ vertices of $G$, that is,
$|\X(C)| > m$, then 
one can use the Reduction Rule from Section~\ref{sec:turing} to reduce the graph $G$ by at least
one vertex. Fix such a component $C$.

Let $t_0 \in V(C)$ be the lowest node in $C$ such that,
if $D$ is the set of nodes of $C$ in the subtree of $T$ rooted in $t_0$,
then $|\X(D)| > m$.

Let $Z = \X(D) \cap X_2$.
Recall that $|N_T(C)| \leq 2$ due to the fact that $B_2$ is closed under taking lowest common ancestors. 
By the properties of the tree decomposition, every vertex of $Z$ lies in $\X(t)$ for some
$t \in N_T(C)$. Since the adhesions are of constant size, we have that $|Z| = \Oh(1)$.

Let $A = \X(D) \setminus Z$. Note that $A \cap X_2 = \emptyset$.
Since in $(T,\X)$ every node has a number of children bounded polynomially in $\tdw$,
the choice of $t_0$ ensures that 
$|\X(D)| \leq \mathrm{poly}(\tdw) \cdot m$.
Furthermore, note that $N_G(A) \subseteq Z \cup M$; in particular, $|N_G(A)| \leq |M| + \Oh(1)$.

The marking process ensures the following crucial property.
\begin{lemma}\label{lem:guard}
The set $A$ is guarded by $Z$.
\end{lemma}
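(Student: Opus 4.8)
The plan is to show that a $k$-path in $G$, if one exists, can be rerouted so that it becomes guarded with respect to $A$ and $Z = \X(D) \cap X_2$. Recall that $N_G(A) \subseteq Z \cup M$: the only way the path can enter or leave $A$ is through a vertex of $Z$ or a vertex of $M$. So an $A$-traverse with no endpoint in $Z$ must have both its endpoints either being endpoints of $P$ or in $M$; since $P$ has only two endpoints, essentially every such ``bad'' $A$-traverse has both endpoints in $M$ (with a bounded number of exceptions near the ends of $P$, which can be absorbed into the constant). The goal is to reroute each such bad traverse, which runs from some $u \in M$ to some $v \in M$ with all internal vertices in $A \subseteq V(G-M)$, so that it instead goes through one of the marked paths in $\mathcal{P}_{u,v,k'}$, which avoids $A$ entirely (since $A \cap X_2 = \emptyset$ and all marked paths lie in $X_1 \subseteq X_2$).

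First I would fix a $k$-path $P$ in $G$ and, among all $k$-paths, choose one minimizing some suitable measure — say, the number of vertices of $A$ used, or lexicographically the number of bad $A$-traverses. Then I would take a bad $A$-traverse $Q$ of $P$, with endpoints $u,v \in M$ (handling the corner cases where an endpoint of $Q$ is an endpoint of $P$ separately — these contribute only $\Oh(1)$ to the guard since $P$ has two endpoints), and let $k' = |V(Q)| - 2$ be its number of internal vertices, so $Q$ is a $(u,v,k')$-path. I would then argue that among the $\leq k+1$ paths in $\mathcal{P}_{u,v,k'}$, at least one, call it $Q'$, is internally disjoint from $V(P) \setminus V(Q)$: the rest of the path $P$ uses at most $k-2$ vertices of $G-M$, and the paths in $\mathcal{P}_{u,v,k'}$ are internally vertex-disjoint from each other, so at most $k-2$ of them can be blocked, leaving at least $(k+1)-(k-2) = 3 > 0$ available. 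Replacing $Q$ by $Q'$ in $P$ yields another $k$-path $P'$ with the same vertex count $k$ (since $Q'$ also has $k'+2$ vertices), one fewer bad $A$-traverse (the internal vertices of $Q'$ lie in $X_1$, hence outside $A$), and no new bad traverses created, contradicting minimality.

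The main subtlety — and the step I expect to be the real obstacle — is making sure the rerouting does not introduce new bad $A$-traverses or, more delicately, that the packings $\mathcal{P}_{u,v,k'}$ are large enough even after the previous replacements in an iterative argument. The cleanest way around this is to not iterate naively but to handle all bad traverses simultaneously: there are at most $\Oh(|M|^2)$ possible $(u,v)$ pairs appearing as endpoints of bad traverses, and for a fixed pair $(u,v)$ at most $k$ relevant lengths $k'$, but each bad traverse is internally disjoint from all others, so the total number of bad traverses is bounded, and one can process them in one pass, each time picking a replacement from the relevant family that avoids the (bounded) set of already-committed vertices. Here the bound $|\mathcal{P}_{u,v,k'}| = k+1$ is exactly calibrated so that, since the whole path uses at most $k - 2$ internal non-modulator vertices, a free replacement always exists. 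One must also double-check the case analysis for $A$-traverses touching the two endpoints of $P$, and the case $r = 1$ where $P$ is entirely inside $A$ — but in that subcase $P$ is trivially guarded by definition, so nothing is needed.

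Putting this together, the definition of $k$-guard is satisfied: if $G$ has a $k$-path, the rerouted path $P'$ is a $k$-path in which every $A$-traverse has an endpoint in $Z$ (or $P'$ lies entirely in $A$), which is precisely the statement of Lemma~\ref{lem:guard}. The only quantities that need to be tracked carefully are that $Z \subseteq N(A)$ (immediate since $Z = \X(D) \cap X_2 \subseteq N_G(A)$ by the earlier observation that vertices of $Z$ lie in bags of $N_T(C)$ and hence are separated from $A$'s interior) and that $|Z| = \Oh(1)$ (already established from the constant adhesion bound), so that the Reduction Rule can subsequently be applied with a constant-size guard.
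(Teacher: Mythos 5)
Your overall strategy (an extremal $k$-path plus rerouting a bad $A$-traverse into a marked packing) matches the paper's; the paper maximizes $|V(P)\cap X_1|$ rather than minimizing $|V(P)\cap A|$, but either measure works, since a successful reroute moves $k'\geq 1$ internal vertices of the traverse from $A$ into $X_1$. However, there are two real gaps in your write-up.

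First, your pigeonhole argument silently assumes $|\mathcal{P}_{u,v,k'}| = k+1$, but this must be justified: $\mathcal{P}_{u,v,k'}$ could be smaller if the maximal packing $\mathcal{P}_{u,v,k'}^0$ itself has fewer than $k+1$ paths. The missing step is the maximality argument: the bad traverse $Q$ has all its internal vertices in $A$, hence disjoint from $X_1 \subseteq X_2$. If $\mathcal{P}_{u,v,k'} = \mathcal{P}_{u,v,k'}^0$, then every path in the packing has its interior in $X_1$, so $Q$ is internally disjoint from the whole packing, contradicting maximality of $\mathcal{P}_{u,v,k'}^0$. Therefore $\mathcal{P}_{u,v,k'} \subsetneq \mathcal{P}_{u,v,k'}^0$ and by construction $|\mathcal{P}_{u,v,k'}| = k+1$. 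Without this your ``calibration'' remark is an assertion, not a proof.

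Second, your treatment of traverses that touch an endpoint of $P$ is wrong. You propose to ``absorb these into the constant'' guard, but $Z = \X(D)\cap X_2$ is a fixed set; the lemma requires showing this particular $Z$ guards $A$, so you cannot enlarge it. In fact the paper has already prepared exactly for this: it maintains a second family of packings $\mathcal{P}_{u,k'}$ of $(u,k')$-paths (one endpoint in $M$, the rest in $G-M$). A bad traverse $Q$ with one endpoint of $P$ as an endpoint is a $(u,k')$-path, and the same maximality-plus-pigeonhole reroute goes through using $\mathcal{P}_{u,k'}$. You only mention $\mathcal{P}_{u,v,k'}$, so this case is unhandled in your proof. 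Finally, the worry you raise about iterating and exhausting the packings is a red herring: once you commit to the extremal measure, one single reroute already yields the contradiction, so there is no iteration to worry about.
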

\begin{proof}
Let $P$ be a $k$-path in $G$ that contains a maximum possible number of vertices
from $X_1$; we claim that $P$ is guarded by $Z$ in $A$.
Assume the contrary, let $Q$ be a traverse of $P$ through $A$ for which neither endpoint
belongs to $Z$. Since $P$ is not completely contained in $G[A]$, and
$N_G(A) \subseteq Z \cup M$, we have that $Q$ is either a $(u,v,k')$-path
for some $u,v \in M$ and $1 \leq k' \leq k-2$ or 
a $(u,k')$-path for some $u \in M$ and $1 \leq k' \leq k-1$. 
Assume the first case; the proof for the second case is analogous.

Since $A$ does not contain any vertex of $X_2 \supseteq X_1$, from the maximality
of $\mathcal{P}_{u,v,k'}^0$ we infer that $\mathcal{P}_{u,v,k'} \subsetneq \mathcal{P}_{u,v,k'}^0$ (since it contains $Q$)
and, consequently, $|\mathcal{P}_{u,v,k'}| = k+1$. Thus, at least one $(u,v,k')$-path
$Q' \in \mathcal{P}_{u,v,k'}$ has no internal vertices on the $k$-path $P$. 
Consequently, by replacing $Q$ with $Q'$ on $P$, we obtain a simple $k$-path $P'$
with strictly more vertices of $X_1$ than $P$. This is a contradiction.
\end{proof}

Thus, if $m = \mathrm{poly}(k, \tdw, \modn)$ is large enough, we can apply the Reduction
Rule for the set $A$ and the guard $Z$; recall that $|Z| = \Oh(1)$
while $m - \Oh(1) < |A| \leq \mathrm{poly}(\tdw) \cdot m$.
The Reduction Rule deletes at least one vertex after invoking
a number of calls to the \kLinkage{} oracle on 
the subgraph $G[N[A]]$; note that $N[A] \subseteq A \cup M \cup Z$.

\paragraph{Wrap up.}
Recall that we have shown that one can apply the Reduction Rule
if there exists a component $C \in \CC$ with
$|\X(C)| > m$ for some $m$ bounded polynomially in $k$, $\tdw$, and $\modn$.
However, since $|B_2| \leq 2(k+1) \cdot (k-1) \cdot k \cdot 2\modn^2$ and a node of $T$ has $\mathrm{poly}(\tdw)$ children,
there are at most $\mathrm{poly}(\tdw)  k^3 \modn^2$ components.
Consequently, if the reduction is not applicable,
 we have $|V(G)|$ bounded polynomially in $k$, $\tdw$, and $\modn$.
Such an instance can be solved with a single call to \kLinkage{} with $k' = k$, $S = \emptyset$,
$r=1$, and $R_1 = \emptyset$.
This finishes the proof of Theorem~\ref{thm:mod-main} and, consequently, of Theorem~\ref{thm:minor-mod-main}.

\section{Conclusions}\label{sec:conc}
We significantly extended the graph classes on which \kPath has a polynomial Turing kernel. In addition, we showed that even an instance that does not belong to such a class, but has a small vertex modulator whose deletion makes it so, can be solved efficiently using small queries to an oracle. A subdivision-based argument (cf.~\cite{GajarskyHOORRVS17}) shows that we cannot generalize much beyond $H$-topological-minor-free graphs without settling the problem in general. In particular, the existence of a polynomial Turing kernel for graphs of bounded expansion implies its existence in general graphs.

While our narrative focused on \kPath, after small modifications our techniques can also be applied to prove analogues of Theorems~\ref{thm:topo-minor-main} and~\ref{thm:minor-mod-main} for the \kCycle problem of detecting a simple cycle of length \emph{at least}~$k$. The main difficulty in adapting our arguments to \kCycle is the fact that, a priori, the only cycles of length at least~$k$ may be arbitrarily much larger than~$k$. However, this issue can easily be resolved in the following way. Since a cycle is contained within a single biconnected component, a Turing kernelization can decompose its input into biconnected components and solve the problem independently in each of them. We then start by testing for the existence of a \emph{path} with~$k^2$ vertices using the algorithms developed in the paper. If there is a path of length~$k^2$ in a biconnected component, then by a classic theorem of Dirac~\cite{Dirac52} there is a cycle of length at least~$k$, and we are done. If no such path exists, then the longest cycle in~$G$ has length less than~$2k$, and we can continue under the guarantee that the cycle we are looking for has length at least~$k$ and less than~$2k$. In this setting, our arguments can be easily adapted. In particular, the absence of a path of length~$k^2$ implies the existence of suitable tree decompositions from which reducible separations can be extracted.

A significant portion of the technical work in this paper was devoted to modifying the graph minors decomposition to obtain the win/win that either answers the problem or finds a reducible separation. In this way, the algorithmic question has driven a challenging graph-theoretic project. It would be interesting to find more problems amenable to such an approach. We conclude with some concrete open problems. Does \kPath have a polynomial Turing kernel on chordal graphs? How about \textsc{Induced $k$-Path} or \textsc{Directed $k$-Path}, on planar graphs?

\bibliography{turing-kernel-minor}
\bibliographystyle{abbrvurl}

\newpage
\appendix
\section{Omitted proofs}

\subsection{Proof of Lemma~\ref{lem:adh-deg-bound}}\label{app:adh-deg-bound}

Consider a node $t \in V(T)$, let $T_t$ be the subtree of $T$ rooted in $t$, and let
$G_t = G[\X(T_t)]$. Note that $(T_t, \X)$ is a tree decomposition of $G_t$ with $t$ being
a root; with regards to this decomposition, we consider the torso $H_t := \torso(G_t, \X(t))$.
Note that the connectivity of $(T,\X)$ implies that 
$H_t$ can be obtained from $G_t[\X(t)]$ by turning the neighborhood $N_{G_t}(C)$ into a clique
for every connected component $C$ of $G_t-\X(t)$. Hence every adhesion to a child of~$t$ corresponds to a clique in~$H_t$, and we may bound the adhesion degree by bounding the number of distinct cliques in~$H_t$.

To show the bound on the number of cliques, we show a longer, more elaborate process
that turns $G_t$ into $H_t$.
Let $\CC$ be the family of connected components of $G_t-\X(t)$.
For every $C \in \CC$
and every vertex $v \in N(C) \subseteq \X(t)$, pick one neighbor $x_{v,C} \in N(v) \cap C$. 
Let $S_C$ be a minimal connected subgraph of $C$ that contains every vertex $x_{v,C}$ for $v \in N(C)$. Note that by minimality, $S_C$ is a tree with at most $|N(C)| \leq h$ leaves and thus
less than $h$ vertices of degree at least $3$. 
Delete from $G_t$ all vertices and edges of $G[C]$ that do not lie in $S_C$, and suppress
every node of degree $2$ in $S_C$ that is not of the form $x_{v,C}$ for $v \in N(C)$.
By performing this operation for every component $C \in \CC$, we obtain
a topological minor $G_t^1$ of $G_t$ with the following property:
the graph $G_t[\X(t)]$ remains untouched in $G_t^1$, while every $C \in \CC$
is turned into its topological minor $C^1$ with $N(C) = N(C^1)$ and $|C^1| < 2|N(C)| \leq 2h$.
Since $G$ does not contain $H$ as a topological minor, neither does $G^1_t$.

We now switch to the theory of graphs of bounded expansion; for an introduction and more notation
we refer
to the textbook~\cite{sparsity}.
A graph class $\GG$ is of \emph{bounded expansion} if there 
exists a function $\nabla: \mathbb{Z}_{\geq 0} \to \mathbb{Z}_{\geq 0}$
such that for every radius $r$, every graph $H \in \GG$,
and every $r$-shallow minor $H'$ of $H$, we have $|E(H')| / |V(H')| \leq \nabla(r)$.

Let $G^2_t$ be a graph obtained from $G^1_t$ by contracting every component $C^1$ for $C \in \CC$
into a single vertex $x_C$. Then, as $|C^1| < 2h$, $G^2_t$ is an $2h$-shallow minor of $G^1_t$.
Let $G^3_t$ be a graph obtained from $G^2_t$ by replacing every vertex $x_C$ for $C \in \CC$ with
a clique $K_C$ of $|N(C)|$ copies of $x_C$. 
Finally, note that $H_t$ can be obtained from $G^3_t$ by contracting every vertex of $K_C$
onto a distinct vertex of $N(C)$, that is, $H_t$ is a $1$-shallow minor of $G^3_t$.

Since taking an $\Oh(1)$-shallow-minor or replacing every vertex with a clique of constant size
in a graph from a class of bounded expansion gives a graph also from a class of bounded expansion
(but possibly worse function $\nabla$)~\cite{sparsity},
  we have that there exists a graph class $\GG$
of bounded expansion (with the function $\nabla$ depending on $H$ and $h$) such that
$H_t \in \GG$. 
Since graphs of bounded expansion have bounded degeneracy and therefore have linearly many cliques~\cite[Lemma 3.1]{sparsity}, we have that
$H_t$ contains at most $f'(h, H) \cdot |\X(t)|$ cliques for some constant $f'(h,H)$
depending on $h$ and $H$.
Thus, in Lemma~\ref{lem:adh-deg-bound} we can take $f(h, H) = 1 + f'(h, H)$ to accommodate
for the additional adhesion corresponding to the parent of~$t$.

\end{document}